 \let\@copyrightspace\relax  
\begin{document}

\newcommand\eat[1]{}
\newcommand\scream[1]{*** #1 ***}
\newcommand\reminder[1]{*** #1 ***}

\newcommand{\sue}[1]{[\textcolor{red} {{\bf Susan: }#1}]}
\newcommand{\xiaocheng}[1]{[\textcolor{Mulberry}{{\bf Xiaocheng: }#1}]}
\newcommand{\julia}[1]{[\textcolor{blue}{{\bf Julia: }#1}]}

\newcommand\kw[1]{kw(#1)}
\newtheorem{theorem}{Theorem}
\newtheorem{example}{Example}[section]
\theoremstyle{definition}
\newtheorem{definition}{Definition}[section]

\newtheorem{lemma}{Lemma}[section]
\newtheorem{proposition}{Proposition}[section]
\newtheorem{corollary}[theorem]{Corollary}
\newtheorem{condition}{Conditions}
\newtheorem{remark}{Remark}[section]

\def\myExp{{\em myExperiment.org}\xspace}
\def\rpt{{\em rpTree}\xspace}
\def\rpts{{\em rpTree}s\xspace}
\newcommand*{\dt}[1]{%
  \accentset{\mbox{\large\bfseries .}}{#1}}
\newcommand\sprec{\dt{\prec}}
\newcommand\nsprec{\dt{\nprec}}

\def\G{{$G=(\Sigma,\Delta, S,R)$}\xspace}

\title{ Search and Result Presentation in Scientific Workflow Repositories}
\eat{\subtitle{[Extended Abstract]
\titlenote{A full version of this paper is available as
\textit{Author's Guide to Preparing ACM SIG Proceedings Using
\LaTeX$2_\epsilon$\ and BibTeX} at
\texttt{www.acm.org/eaddress.htm}}}}
%
%
%
%
%

\numberofauthors{1}
\newcommand{\superscript}[1]{\ensuremath{^{\textrm{#1}}}}
\def\wa{\superscript{1}}
\def\wb{\superscript{2}}
\def\wc{\superscript{3}}

\def\sharedaffiliation{\end{tabular}\newline\begin{tabular}{c}}
\author{
  \alignauthor{  Susan B. Davidson\wa~ Xiaocheng Huang\wb~ Julia Stoyanovich\wc~ Xiaojie Yuan\wb}
    \sharedaffiliation
\begin{small}
  \begin{tabular}{ccc}
    \affaddr{{\wa}University of Pennsylvania} &\affaddr{{\wb}Nankai University}&\affaddr{{\wc}Drexel University}\\
    \affaddr{Philadelphia, USA} &\affaddr{Tianjin, China}&\affaddr{Philadelphia, USA}\\
    \affaddr{susan@cis.upenn.edu} &\affaddr{huangx@seas.upenn.edu}&  \affaddr{stoyanovich@drexel.edu} \\
       &\affaddr{yuanxj@nankai.edu.cn}&
  \end{tabular}
\end{small}
}

\maketitle \thispagestyle{empty}\pagestyle{empty}
\pagenumbering{arabic}
\begin{abstract}

  We study the problem of searching a repository of complex
  hierarchical workflows whose component modules, both composite and
  atomic, have been annotated with keywords. Since keyword search does
  not use the graph structure of a workflow, we develop a model of
  workflows using {\em context-free bag grammars}. We then give
  efficient polynomial-time algorithms that, given a workflow and a
  keyword query, determine whether some execution of the workflow {\em
    matches} the query.  Based on these algorithms we develop a {\em
    search and ranking} solution that efficiently retrieves the
  top-$k$ grammars from a repository.  \eat{ We then give an
    efficient, polynomial time algorithm that, given a workflow and a
    keyword query, determines whether or not some execution of the
    workflow {\em matches} the query, and use it for {\em searching} a
    workflow repository.  Further, building on probabilistic
    context-free grammars, we develop efficient algorithms for
    calculating the {\em relevance score} of a grammar for a given
    query, and use it in scope of a top-$k$ solution to find grammars
    with highest relevance to a query. }Finally, we propose a novel
  result presentation method for grammars matching a keyword query,
  based on {\em representative parse-trees}.  The effectiveness of our
  approach is validated through an extensive experimental evaluation.

\end{abstract}



\eat{
\keywords{keyword search, scientific workflow,  context-free bag grammar} 
}

\section{Introduction}
\label{sec:intro}
Data-intensive workflows are gaining popularity in the scientific
community.  Workflow repositories are emerging in support of sharing
and reuse, either as part of a particular workflow system (e.g.,
VisTrails~\cite{vistrails} or Taverna~\cite{taverna}) or independently
within a particular community (e.g., \myExp~\cite{myexperiment}).  As
workflows become more widely used, workflow repositories grow in size,
making information discovery an interesting challenge. 

Current workflow repositories, e.g., \myExp, support tagging of
workflows with keywords.  Notably, because workflows are modular,
users may wish to share and reuse {\em components} of a
workflow \cite{DBLP:conf/ssdbm/StarlingerBL12}.  It is thus important
to support tagging, and to enable search, not just at the level of a
workflow, but also at the level of modules and subworkflows.

Recent work considered search in workflow
repositories~\cite{DBLP:conf/amw/DavidsonLS11,Liu:2010:SWH:1920841.1920958,4812555},
and also argued that, because workflows can be large and complex, it
is important to provide usable result presentation mechanisms.  In
this paper we propose a novel search and result presentation approach
for complex hierarchical workflows.  We now illustrate our approach
with an example.


\begin{figure}[ht!]
  \centering
\includegraphics[width=0.4\textwidth]{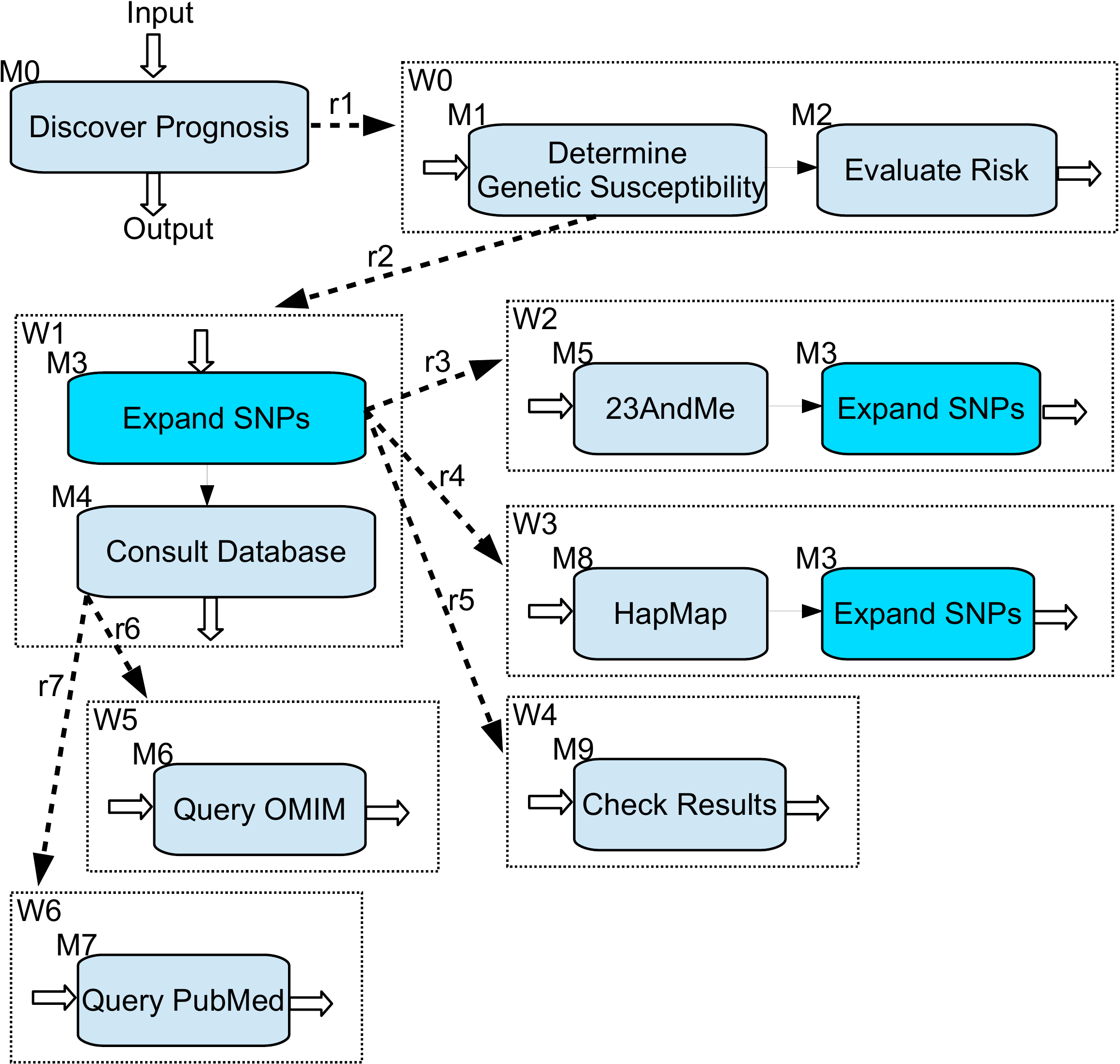}
   \caption{Disease succeptibility workflow.}
   \label{fig:graphWorkflow}
\end{figure}

\begin{figure}[h]
\begin{center}
 \begin{tabular}{l}
$r_1:M_0 \Rightarrow \{ M_1, M_2 \}$~~~$r_2:  M_1 \Rightarrow \{ M_3, M4 \}$\\
$r_3: M_3 \Rightarrow  \{M_5, M_3 \}$ $r_4: M_3\Rightarrow \{M_8, M_3\}$ $r_5: M_3  \Rightarrow \{M_9\}$\\
$r_6: M_4\Rightarrow \{ \text{lookup}, M_6\}$~~~ $r_7:M_4\Rightarrow \{\text{lookup},M_7\}$\\
$r_8:M_2\Rightarrow \{ \text{evaluate} \}$~~~~~~~ $r_9:M_5\Rightarrow \{ \text{23andMe} \}$\\
$r_{10}: M_9\Rightarrow \{\text{check}\}$ ~~~~~~~~ $r_{11}: M_6\Rightarrow \{\text{OMIM}\}$\\
$r_{12}: M_7\Rightarrow\{\text{PubMed}\}$~~~~~ $r_{13}: M_{8}\Rightarrow\{\text{HapMap}\}$
\end{tabular}
\end{center}
\caption{Disease succeptibility workflow as a bag grammar.}
\vspace{-0.2cm}
\label{fig:grammar}
\end{figure}

Consider a workflow in Figure~\ref{fig:graphWorkflow} that computes
succeptibility of an individual to genetic disorders, and is based
closely on~\cite{DBLP:journals/bioinformatics/StoyanovichP08}.  This
workflow takes a person's genetic information in the form of single
nucleotide polymorphisms (SNPs) as input, and produces an assessment
of genetic disorder risk.  The workflow is implemented by module
$M_0$, which is {\em composite}, and, when invoked, executes modules
$M_1$ and $M_2$ in sequence.  Module $M_1$ expands the set of SNPs by
considering known associations between SNP pairs and triplets.  This
module is composite, and has three {\em alternative} executions.  In
the first, the SNP set is expanded using a proprietary database of
associations, e.g., 23andMe (module $M_5$), followed by a {\em
recursive} call to $M_3$.  In the second alternative, a public
association database, e.g., HapMap (module $M_8$) is used, followed by
a recursive call to $M_3$.  The final alternative involves checking
the retrieved results and terminating the recursion (module
$M_9$). Having computed an expanded SNP set, the workflow goes on to
look up any genetic disorders associated with the SNPs.  This is
implemented by module $M_4$, which has two alternative executions: it
may issue a query to OMIM (module $M_6$) or to PubMed (module
$M_7$). Having retrieved results from OMIM or from PubMed, the
workflow terminates.

Suppose that this workflow exists in a repository, and that some of
its modules are tagged.  Let us assume the following assignment of
keywords to workflow modules: $M_2$ (evaluate), $M_4$(lookup), $M_5$
(23andMe), $M_6$ (OMIM), $M_7$ (PubMed), $M_8$ (HapMap), and $M_9$
(check).  It is not required that all modules be tagged, e.g., there
is no keyword assigned to $M_1$ in our example. It is also possible,
and even likely, that multiple keywords are assigned per module, and
that keywords are reused across modules, and across
workflows~\cite{DBLP:conf/ssdbm/StarlingerBL12}.  However, we do not
use multiple or repeating keywords here, to simplify our example.

Suppose now that a user wants to know whether the workflow in
Figure~\ref{fig:graphWorkflow} matches a particular keyword query.
Assuming ``and'' query semantics, answering this question amounts to
determining if {\em there exists some execution of the workflow in
  which all query keywords are present}.  For example, query
$\{23andMe, HapMap\}$ matches an execution in which module $M_3$ is
run twice, evaluating $M_5 \rightarrow M_3$ (23andMe) on the first
invocation, and $M_8 \rightarrow M_3$ (HapMap) on the second
invocation.  Intuitively, this execution exists because of the
combination of alternation and recursion at $M_3$.

On the other hand, there is no execution that matches
$\{OMIM,PubMed\}$ because, once an alternative for expanding $M_4$ is
chosen, then $M_6$ (OMIM) or $M_7$ (PubMed) is executed, and there is
no recursion that allows $M_4$ to repeat, possibly choosing another
branch.

It has recently been shown that complex hierarchical workflows can be
naturally represented as context-free graph grammars involving
recursion and alternation~\cite{DBLP:conf/sigmod/BaoDM11,
  DBLP:conf/vldb/BeeriEKM06}.  We build on this work and adapt it to
keyword search in workflows with tagged modules.  Because of our
proposed query semantics, observe that, while hierarchical workflow
structure, alternation and recursion are important for determining
whether a workflow matches a query, the graph structure within each
composite module is unimportant for our purposes.  This observation
leads us to model scientific workflows as {\em context-free bag
  grammars} (also called {\em commutative
  grammars}~\cite{Esparza97petrinets}).

Figure~\ref{fig:grammar} represents a bag grammar corresponding to the
workflow in Figure~\ref{fig:graphWorkflow}.  The bag grammar captures
the hierarchical structure of the workflow (expansion of composite
modules), alternation and recursion. Importantly, the grammar makes
assignment of keywords to modules explicit, by including keywords as
terminals.  Note that keywords may annotate both atomic and {\em
  composite modules}, appearing in the corresponding grammar
productions.  So, $M_4$ is tagged with {\em lookup}, which is
captured in productions $r_6$ and $r_7$.

Query $\{23andMe, HapMap\}$ matches the workflow in
Figure~\ref{fig:graphWorkflow}, and we would now like to explain to
the user how the match occurs.  Let us now return to our example, and
focus on {\em result presentation}.  Providing a usable result
presentation mechanism is important, because workflow specifications
can be large, and each workflow can match a query in multiple ways,
due in large part to recursion and alternation.  We propose here a
result presentation mechanism based on a novel notion of {\em
  representative parse trees} (\rpt for short).
Figure~\ref{fig:rpTree_Q1} shows a particular \rpt for query
$\{23andMe, HapMap\}$, with nodes representing bag grammar productions
and terminals (see Figure~\ref{fig:grammar}), and edges corresponding
to a firing of a production.  Keyword matches occur at the leaves.

Intuitively, an \rpt represents a class of parse trees of a bag
grammar that derive a particular set of terminals.  An \rpt is an {\em
irredundant} representative of its class, in the sense that it does
not fire recursive productions that do not derive additional
query-relevant terminals.  For example, the \rpt in
Figure~\ref{fig:rpTree_Q1} represents also a tree in which production
$r_3$ is fired recursively twice, both times followed by $r_9$, and
thus generating the terminal {\em 23andMe} twice.  We will make the
sense in which an \rpt is an irredundant representative of its class
more precise in Section~\ref{sec:resultPresentation}, and will show
how \rpts can be derived efficiently.

\begin{figure}[t]
\begin{tikzpicture}[node distance=20]
\node(r1) {$r_1(M_0)$};
\node(under r1) [below of=r1] {};
\node(r2) [left of=under r1] {$r_2(M_1)$}; \node(r8) [right of=under r1, node distance=85] {$r_8(M_2)$}; \node(eval) [below of=r8] {evaluate};
\node(under r2)[below of=r2] {};
\node(r3)[ left of=under r2] {$r_3(M_3)$}; \node(r6)[right of=under r2, node distance=70] {$r_6(M_4)$}; 
\node(under r6)[below of=r6]{}; \node(lookup)[left of=under r6,node distance=15]{lookup}; \node(r11)[right of=under r6]{$r_{11}(M_6)$}; \node(OMIM)[below of=r11]{OMIM};
\node(under r3) [below of=r3] {};
\node(r4)[left of=under r3] {$r_4(M_3)$}; \node(r9)[right of=under r3, node distance=40]{$r_{9}(M_5)$};\node(23andMe)[below of=r9]{\bf 23andMe};
\node(under r4)[below of=r4]{};
\node(r5)[left of=under r4]{$r_5(M_3)$}; \node(r13)[right of=under r4,node distance=20]{$r_{13}(M_8)$}; \node(HapMap)[below of=r13]{\bf HapMap};
\node(r10)[below of=r5] {$r_{10}(M_{9})$};
\node(check)[below of=r10]{check};

\draw[-] (r1) -- (r2);\draw[-] (r1) -- (r8);
\draw[-] (r2) -- (r3); \draw[-] (r2) -- (r6); \draw[-] (r8) -- (eval);
\draw[-] (r3) -- (r4); \draw[-] (r3) -- (r9); \draw[-] (r6) -- (r11); \draw[-](r6)--(lookup);\draw[-](r11)--(OMIM);
\draw[-] (r4) -- (r5); \draw[-] (r4) -- (r13); \draw[-] (r9) -- (23andMe);
\draw[-] (r5) -- (r10); \draw[-] (r13) -- (HapMap);
\draw[-] (r10) -- (check);
\end{tikzpicture}
\caption{An \rpt for query $\{23andMe,HapMap\}$.}
\label{fig:rpTree_Q1}
\end{figure}

\eat{

\begin{figure}[t]
\begin{tikzpicture}[node distance=20]
\node(r1) {$r_1(M_0)$};
\node(under r1) [below of=r1] {};
\node(r2) [left of=under r1, node distance=55] {$r_2(M_1)$}; \node(r8) [right of=under r1] {$r_8(M_2)$}; \node(eval) [below of=r8] {evaluate};
\node(under r2)[below of=r2] {};
\node(r4)[ left of=under r2] {$r_4(M_3)$}; \node(r6)[right of=under r2, node distance=40] {$r_6(M_4)$}; 
\node(under r4) [below of=r4] {};
\node(r5)[left of=under r4] {$r_5(M_3)$}; \node(r13)[right of=under r4]{$r_{13}(M_8)$};\node(HapMap)[below of=r13]{\bf HapMap};
\node(r11)[below of=r6] {$r_{11}(M_6)$}; \node(OMIM)[below of=r11]{\bf OMIM};
\node(r10)[below of=r5]{$r_{10}(M_9)$};\node(threshold)[below of=r10]{threshold};

\draw[-] (r1) -- (r2);\draw[-] (r1) -- (r8);
\draw[-] (r2) -- (r4); \draw[-] (r2) -- (r6); \draw[-] (r8) -- (eval);\draw[-](r6)--(r11);
\draw[-] (r4) -- (r5); \draw[-] (r4) -- (r13); \draw[-] (r13) -- (HapMap);\draw[-] (r11)--(OMIM);
\draw[-](r5)--(r10); \draw[-] (r10) -- (threshold);
\end{tikzpicture}
\caption{An \rpt for query $\{HapMap, OMIM\}$.}
\end{figure}
}

{\bf Contributions.}  The contributions of our paper include:
\begin{itemize}

\item {\em A model of search and result presentation over
  keyword-annotated context-free bag grammars}
  (Section~\ref{sec:model}): Although the motivation for our model is
  derived from the problem of searching workflow repositories, it is
  applicable to any other scenario involving search over context-free
  bag grammars, e.g., business processes, call structure of programs,
  or other hierarchical graph applications.

\item {\em Search and ranking algorithms} (Sections~\ref{sec:algo}
  and~\ref{sec:ranking}): We give a bottom-up {\em match} algorithm,
  and develop an optimization, which borrows ideas from semi-naive
  datalog evaluation to avoid unproductive calculations.  Next,
  translating {\em probabilistic context-free grammars} to our
  setting, we develop efficient {\em search} and {\em ranking}
  algorithms, and use them to identify top-$k$ grammars.

\item {\em Novel result presentation methods}
  (Section~\ref{sec:resultPresentation}): Since a workflow may match a
  query in many different ways, we develop a presentation mechanism to
  help the user understand how the keyword matches are most likely to
  occur.  The mechanism is based on a novel notion of {\em
  representative parse trees}, which are the most probable parse trees
  that are structurally irredundant.

\item {\em Extensive experimental evaluations}
  (Section~\ref{sec:experiment}): We use synthetic datasets to
  demonstrate the effectiveness of our approach.  Our synthetic data
  is generated in a way that resembles characteristics of workflows in
  \myExp, in terms of keyword assignment and workflow size.  However,
  since current scientific workflow management systems do not yet
  allow that workflows be expressed as grammars, we are unable to use
  the \myExp dataset directly in our experiments.
\end{itemize}

{\bf Related Work.}
Much effort \cite{DBLP:conf/ssdbm/ChiuHKA11,
DBLP:conf/ssdbm/GandaraCSWSC11, Stoyanovich:2010:ERS:1833398.1833405}
has been made recently to annotate scientific workflows to enable
keyword search. As observed in \cite{Liu:2010:SWH:1920841.1920958},
since scientific workflows are usually modeled as a three-dimensional
graph structure when considering the expansions of composite modules
(dashed edges in Figure~\ref{fig:graphWorkflow}), results on searching
relational and XML data \cite{Chen:2009:KSS:1559845.1559966,
Liu:2011:PKS:2036574.2036580, Yu_keywordsearch} or graph data
\cite{Dalvi:2008:KSE:1453856.1453982, He:2007:BRK:1247480.1247516,
Kacholia:2005:BEK:1083592.1083652, Tran:2009:TEQ:1546683.1547416} can
not be easily
extended. \cite{DBLP:conf/amw/DavidsonLS11,Liu:2010:SWH:1920841.1920958,4812555}
consider the scenario when alternation or recursion is not present in
workflows. \cite{5767875,Nierman02protdb:probabilistic} consider the
scenario where nodes of XML documents exist with a probability
(analogous to choice), however there is no recursion.\eat{
\cite{DBLP:conf/ssdbm/ChiuHKA11,DBLP:conf/ideas/ChiuHKA11} use an
information retrieval approach to compose a workflow when users issue
a keyword query.}

\eat{
{\em Context-free string grammars (CFSG)} have also been well
studied. The \emph{CYK} algorithm provides an efficient membership
test for CFSGs \cite{DBLP:books/daglib/0011126}. The substring parsing
problem has been shown to be solvable in
PTIME~\cite{Rekers:1991:SPA:122501.122505}, based on the fact that a
string is a concatenation of two \emph{smaller} substrings, and there
is a polynomial number of such substring pairs.  These results do not
immediately extend to bags.
}

There are also extensive results in (context-sensitive) {\em
commutative
grammars}~\cite{Esparza94decidabilityissues,Esparza97petrinets}, which
have been cast as vector addition systems \cite{Karp1969147} and Petri
nets \cite{Crespi-Reghizz74}. Decidability issues of Petri nets are
surveyed in~\cite{Esparza94decidabilityissues}, and are shown to be
$\mathcal{NP} \mbox{-} complete$, directing focus towards more
specific problems.  We focus on a novel sub-problem,i.e., on whether
there exists a bag that contains a keyword set and is accepted by a
commutative grammar.

Most related to our work on matching and ranking is
\cite{DBLP:journals/jcss/DeutchM12}, which considers the more general
problem of querying the structure of a specification using graph
patterns\eat{; they also consider temporal queries on the potential
behavior of the defined process}.  The paper gives a query evaluation
algorithm of polynomial data complexity; the authors also consider the
probability of the match in~\cite{DBLP:journals/pvldb/DeutchMPY10}.
By considering each permutation of the keyword query $Q$ as a simple
graph pattern, where each node represents a keyword and nodes form a
chain connected by transitive edges, and taking the union of matching
specifications for each permutation, these results could be used in
our setting.  However, our matching algorithm is optimized for queries
that are sets of keywords, and is therefore simpler and considerably
more efficient than \cite{DBLP:journals/jcss/DeutchM12} for this
setting (see results in Section~\ref{sec:experiment}).  Importantly,
unlike~\cite{DBLP:journals/jcss/DeutchM12}, our solution does not
require that the input grammar be transformed for each query.  For
this reason, our solution can be tailored to present results using
representative parse trees (Section~\ref{sec:resultPresentation}),
while the solution of~\cite{DBLP:journals/jcss/DeutchM12} cannot.

Also closely related to our match algorithm is~\cite{1964}, which
gives a polynomial time algorithm for checking if the intersection of
a context-free string grammar (which could represent the workflow
specification) and a finite automaton (which could represent the
query) results in an empty grammar; however, our match algorithm has a
much better average case performance since it can terminate early if a
match is found.

\section{Model}
\label{sec:model}
 
In this section, we give background and definitions that will be used throughout  the paper.  We start with the definition of a context-free bag grammar, its language, and what 
it means for a search query with ``and''  semantics to match a grammar.  
We then introduce parse trees and derivation sequences.
\eat{We then define keyword-annotated graph grammars, give the translation to context-free bag grammars, and define the notion of a {\em repository} of keyword-annotated graph grammars.}
Finally, we define the notion of a repository of context-free bag grammars.

\label{sec:model:workflow}

\begin{definition}
\textbf{(Context-free Bag Grammar)} \\
 A \emph{context-free bag grammar} is a grammar $G=(\Sigma, \Delta, S, R)$ where $\Sigma$ is the set of symbols (variables and terminals), $\Delta \subset \Sigma$ is the set of terminals, $S$ is the start variable and $R$ is the set of production rules. For production $r \in R$, we denote by $head(r)\in \Sigma/\Delta$ the head of the production and  $body(r)$ the bag of symbols in the body of the production. The language of the grammar, $L(G)$, is a set of bags whose elements are in $\Delta$:
$L(G)= \; \{w \in \Delta^*| S \stackrel{*}{\Longrightarrow} w\} $
\end{definition}

We use context-free bag grammars to represent keyword-annotated
scientific workflows, of the kind described in
Figure~\ref{fig:graphWorkflow}, and with the corresponding grammar
given in Figure~\ref{fig:grammar}.  This grammar was derived by
replacing each composite module (variable) with production rules that
emit their keywords (as in rules $r_6$ and $r_7$ for module $M_4$),
and adding a production rule for each atomic module (terminal) that
emits its keyword (as in rules $r_8$ through $r_{13}$).

In the remainder of the paper, we will refer to a context-free bag
grammar simply as a grammar, and will use $\\productions(M)\subseteq
R$ to denote the set of productions with $M$ in the head. For clarity,
we also label productions.

\begin{definition}
\textbf{(Match)} Given a grammar \\
$G=(\Sigma, \Delta, S, R)$ and a keyword query $Q\subseteq \Delta$, we say that $G$  \emph{matches} $Q$ iff there exists some $X \in L(G)$ such that $Q
\subseteq X$.
\end{definition}

\begin{example}\label{ex:ex1}
Consider the grammar $\\G=(\{S,A,B,C,s_1,s_2,b,c\}, \{s_1,s_2,b,c\}, S, R)$, where $R$ is:
\vspace{-0.7cm}
 \begin{figure}[h]
\small
 \centering
 \begin{minipage}{0.4\linewidth}
\[
\begin{split}
r_1: S&\Longrightarrow \{A,S\}\\     
r_2:S&\Longrightarrow \{s_1\}\\
r_3:S&\Longrightarrow \{s_2\}\\
r_4: A&\Longrightarrow \{B,C\}\\
\end{split}
\]
\end{minipage}
 \begin{minipage}{0.4\linewidth}
\[
\begin{split}
r_5: B&\Longrightarrow \{b\}\\
r_6:C&\Longrightarrow \{B\}\\
r_7: C&\Longrightarrow \{c\}
\end{split}
\]
\end{minipage}
\end{figure}
\vspace{-0.4cm}

\eat{
\noindent
$r_1: S \Longrightarrow \{A,S\}$\\     
$r_2:S \Longrightarrow \{s_1\}$\\
$r_3:S \Longrightarrow \{s_2\}$\\
$r_4: A \Longrightarrow \{B,C\}$\\
$r_5: B \Longrightarrow \{b\}$\\
$r_6:C \Longrightarrow \{B\}$\\
$r_7: C \Longrightarrow \{c\}$
}

\eat{
 \begin{figure}[h]
 \centering
 \begin{minipage}{0.4\linewidth}
\[
\begin{split}
r_1: S&\Longrightarrow \{A,S\}\\     
r_2:S&\Longrightarrow \{s_1\}\\
r_3:S&\Longrightarrow \{s_2\}\\
r_4: A&\Longrightarrow \{B,C\}\\
r_5: B&\Longrightarrow \{b\}\\
r_6:C&\Longrightarrow \{B\}\\
r_7: C&\Longrightarrow \{c\}
\end{split}
\]
\end{minipage}
\begin{minipage}{0.4\linewidth}
\begin{tikzpicture}[node distance=15]
\node(r1) {$r_1$};
\node(center)[below of=r1] {};
\node(r4)[left of=center, node distance=25] {$r_4$};
\node(r5)[below of=r4, left of=r4] {$r_5$};
\node(r6)[below of=r4, right of=r4] {$r_6$};
\node(b)[below of=r5]{$b$};
\node(r51)[below of=r6]{$r_5$};
\node(b1)[below of=r51]{$b$};
\node(r2)[right of=center]{$r_2$};
\node(s1)[below of=r2]{$s_1$};
\draw[-](r1)--(r4);
\draw[-](r1)--(r2);
\draw[-](r4)--(r5);
\draw[-](r4)--(r6);
\draw[-](r5)--(b);
\draw[-](r6)--(r51);
\draw[-](r51)--(b1);
\draw[-](r2)--(s1);

\node(T) [below of=b1, node distance=20] {$T$};
\end{tikzpicture}
\end{minipage}
\end{figure}
}

The language of $G$ is $L(G)=\{(s_1+s_2)(bc+bb)^n|n\in \mathbb{N}_{\geq 0}\}$.
$G$ matches query $Q_1=\{s_1,b \}$ since $Q_1\subseteq \{s_1,b,b\} \in L(G)$. However, $G$ does not match $Q_2=\{s_1,s_2\}$.
\end{example}

\eat{In the remainder of this paper,  we will eliminate  the curly brackets of the right-hand side of productions.}

\label{sec:model:derivation}

Since the language of a grammar can be infinite, we will need to focus our attention on a small sample of its elements in which a match can be found.  For this, we will use the notion of parse trees and derivation sequences.

\begin{definition}
\textbf{(Parse Tree)} A parse tree $T$ associated with a grammar $G=(\Sigma,\Delta, S,R)$ is a finite unordered tree where each interior node
represents a {\em production} $r\in R$ and whose children represent $body(r)$, i.e. each child is either a terminal in $body(r)$  (in which case it is a leaf)
or is a production whose head is a variable in $body(r)$.  If $T$ consists of a single node, then it represents a terminal in $\Delta$.  We use {\em root(T)},
and {\em leaves(T)} to denote the root production and   leaves of $T$, respectively.
\end{definition}

For our purposes, a parse tree can be rooted at {\em any} terminal or production rather than just those whose head is $S$. 
Given a parse tree $T$ of a  grammar, we denote by $paths(T)$ the bag of all root-to-leaf paths in $T$, $productions(T)$ the bag of productions applied in $T$, and $symbols(T)$  the set of symbols that appear in $productions(T)$. 

We now define what portions of a keyword query a parse tree $T$ matches in terms of the query-relevant keywords generated by $T$, and adapt the notion of 
 \emph{derivation sequence} to our setting. 

\begin{definition} 
\textbf{(Generates)}
Given a grammar $G=(\Sigma,\Delta,S,R)$ and a query $Q\subseteq \Delta$, we say a parse tree $T$ \emph{generates} the set of matching keywords $leaves(T)\cap
Q$. A symbol $M\in\Sigma$ {\em generates} a set $X\subseteq Q$ iff one of its parse trees generates $X$.
\end{definition}

\eat{
We denote by $F(M)$ ($M\in\Sigma$) the set of sets of matching keywords that $M$ can generate, i.e. $F(M)=\{leaves(T)\cap Q|root(T)\in productions(M)\}$. It is easy to check that $G$ matches $Q$ iff $S$ can generate $Q$, i.e. $Q\in F(S)$.
}

\begin{definition}
\textbf{(Derivation Sequence)} Given a  grammar 
$G=(\Sigma, \Delta, S, R)$, we say a variable $A\in \Sigma / \Delta$ \emph{derives} a symbol $B\in \Sigma$ iff there exists a parse tree $T$ where $root(T)\in
productions(A)$, and $B\in symbols(T)$. Each path from $A$ to $B$ in  $T$ is a sequence of productions called a \emph{derivation sequence}. 
sequence $seq(A\mapsto B)$, we denote by $Set(seq(A\mapsto B))$ the bag of all productions applied in it.
\end{definition}

If a variable $A$ derives a symbol $B$, we say $A$ is an ancestor of $B$ and $B$ is a descendant of $A$. 

\begin{definition}
\textbf{(Simple Derivation Sequence)} A derivation sequence is \emph{simple} iff there is no production that appears in it more than once.  
\end{definition}

Intuitively, a simple derivation sequence is a derivation sequence where a recursion is fired at most once. It disregards unnecessary recursions and provides an upperbound for the complexity results in Section~\ref{subsec:data complexity}.

\begin{definition}\label{def:distance}
\textbf{(Distance)} Given a  grammar \\
$G=(\Sigma, \Delta, S,R)$,  the \emph{distance} from a variable $A\in \Sigma/\Delta$ to symbol $B\in \Sigma$ is the length of the longest simple derivation
sequence for $A$ deriving $B$, denoted  $d(A\mapsto B)$.The \emph{distance} of a grammar is defined as the longest distance between any two symbols, denoted
$d(G)$. It is easy to see $d(A \mapsto B) \leq d(G) \leq |R|$.
\end{definition}
\begin{example}
For the grammar in Example~\ref{ex:ex1} there is only one simple derivation sequence for $S$ to derive $C$, i.e. $r_1r_4$.  However, there are two simple derivation sequences to derive $B$, i.e. $r_1r_4$ and $r_1r_4r_6$. Hence $d(S\mapsto C)=2$ and $d(S\mapsto B)=3$. It is also easy to check that $d(G)=4$.
\end{example}

\eat{
\begin{definition}
\textbf{\em(Proper Grammar)} {\em A grammar $G=(\Sigma,\Delta,S,R)$ is said to be \emph{proper} if it has (1) no \emph{underivable} symbols: $\forall M \in \Sigma, \exists T$, $root(T)\in productions(S), M\in symbols(T)$; (2) no $unproductive$ variables: $\forall M \in \Sigma/\Delta, \exists T$, $root(T) \in productions(M)$.}
\end{definition}
}

We end this section by discussing the notion of a {\em repository} of  bag grammars.    A repository of grammars is essentially a set of grammars in which symbols (modules) may be {\em shared}, but must be done so consistently.  We assume that all grammars are {\em proper}, i.e. have no underivable symbols  or unproductive variables.

\begin{definition}
\textbf{(Bag Grammar Repository)} A \emph{bag grammar repository} is a set of bag grammars such that for any two grammars $G_1=(\Sigma_1,\Delta_1,S_1,R_1)$ and
$G_2=(\Sigma_2,\Delta_2,S_2,R_2)$, $\forall M\in (\Sigma_1\cap\Sigma_2)$, $\bigcup\limits_{\substack{r\in R_1\\head(r)=M}}r=\bigcup\limits_{\substack{r\in
R_2\\head(r)=M}}r$
\end{definition}

\section{Matching and Searching}
\label{sec:algo}
In this section we give an efficient algorithm that, given a grammar
$G$ and keyword query $Q$, determines whether $G$ matches $Q$.  We
start by presenting the basic algorithm, {\em Match}, which computes a
fixpoint of subsets of matching keywords using a bottom-up approach
over the hierarchy of nonterminals (Section~\ref{subsec:data
  complexity}).  We then give an optimized match algorithm
(Section~\ref{subsec:optimization}), and extend the results to
searching a {\em repository} of grammars.

Note that the matching problem is NP-complete with respect to combined
(data and query) complexity, as shown in
\cite{DBLP:journals/jcss/DeutchM12}.  However, since query size is
typically small (6 or fewer keywords), we focus on the {\em data
  complexity} 
and give a matching algorithm that is polynomial in the size of the
grammar.  As observed in the introduction, the algorithm
in~\cite{DBLP:journals/jcss/DeutchM12} could also be used here, since
it considers a (more general) graph pattern query.  However, our
algorithm is optimized for keyword queries and is therefore simpler
and considerably more efficient in our setting (see results in
Section~\ref{sec:experiment}).

\eat{
  efficient {\em Match} algorithm, which builds parse trees bottom up
  until either a match is found or a fixpoint of keyword sets is
  reached.  An optimization of match, {\em OptMatch}, is then developed
  which borrows ideas from semi-naive Datalog evaluation to avoid
  unproductive calculations, and trims the size of keyword sets
  maintained.  We then use the match algorithm to efficiently {\em
    search} a repository of bag grammars. }

\eat{we study the complexity of the decision problem of whether a search query $Q$ matches a grammar $G$. We show that the problem is $\mathcal{NP} \mbox{-} complete$ with respect to {\em combined} (query-and-data) complexity (Section~\ref{subsec:combinedComplexity}).

However, since the query size is typically small (fewer than 5 keywords) we focus on the  {\em data complexity} 
and show that the problem can be solved in  polynomial time in the size of the grammar
(Section~\ref{subsec:data complexity}). We then give an optimized match algorithm  (Section~\ref{subsec:optimization}), and extend the results to searching a {\em repository} of grammars.}

\eat{
\subsection{Combined complexity}
\label{subsec:combinedComplexity}
The decision problem is complicated by several factors:  the presence of multiple productions with the same head ({\em choice}), the fact that the same symbol may occur in the right-hand side of multiple productions or occur multiple times in the right-hand side of some production ({\em sharing}), and {\em recursion} in the grammar.  Note that in the absence of choice, checking whether or not a parse tree $T$ matches a query $Q$ can be done in polynomial time by testing whether or not the set of leaves of $T$ contains $Q$.  If the grammar contains sharing and recursion but no choice, the problem is still easy:  A parse tree generates all terminals in $\Delta$, but potentially multiple times.  However,  the number of times a terminal occurs is not relevant, so it  still suffices to check whether $\Delta$ contains $Q$.

However, when choice and sharing are added the problem becomes hard.
\begin{theorem}
Given a bag grammar $G$  (with choice, sharing, but no recursion) and  a search query $Q$, deciding whether $G$ matches $Q$ is $\mathcal{NP} \mbox{-} complete$.
\end{theorem} \label{T4}

The proof is by reduction from 3-SAT. 



}

\subsection{Keyword query match}\label{subsec:data complexity}
We now give an algorithm, {\em Match} (Algorithm~\ref{algo:AF}), which determines whether or not a grammar $G$ matches a keyword query $Q$. 
To do this,  {\em Match} builds a parse tree  bottom-up until some symbol generates $Q$, in which case $G$ matches $Q$, or until a fixpoint of query-relevant keywords is reached for each symbol, in which case $G$ does not match $Q$.   It can be shown that the fixpoint will be reached at or before height $O(d(G))$, and that therefore the algorithm is polynomial in the size of the grammar (although exponential in the query size due to the cost of generating sets of query-relevant keyword sets).

{\em Match} generates for each symbol $M \in \Sigma$ the set $F(M)$ of sets of query-relevant keywords for $M$.   
\eat{ $F(M)=\{leaves(T)\cap Q|root(T)\in productions(M)\}$.
It is easy to check that $G$ matches $Q$ iff $S$ can generate $Q$, i.e. $Q\in F(S)$.}
It does so by considering parse trees of increasing height $i$, and calculating for each $M \in \Sigma$ the set $F_i(M)$ of sets of query-relevant keywords for $M$  that are derived by 
parse trees of height $\leq i$. For terminals $M \in \Sigma$ (which are the leaves in a parse tree), $F_0(M)$ can be calculated directly (line 2, ignore for now line 3).  
For variables $M \in \Sigma/\Delta$, $F_0(M)=\emptyset$ (line 4). It then calculates $F_i(M)$ by initializing it to $F_{i-1}(M)$
(line 7) or $\emptyset$, 
then considering each production $r$  with $M$ as head, taking the  ``product-union" of $F_{i-1}(\alpha_j)$ for each $\alpha_j$ in $body(r)$, and adding the resulting set of query-relevant keywords to $F_{i}(M)$ (lines 9-11).  This continues until the query is matched by some $M$ (line 12), or until a fixpoint is reached (for each symbol $M$, $F_i(M)=F_{i-1}(M)$, line 13).

\eat{Given a grammar $G=(\Sigma,\Delta,S,R)$ and a query $Q$, the \emph{fixpoint} approach (Algorithm~\ref{algo:AF}) leads to build parse trees bottom-up till some symbol generates $Q$ (in which case $G$ matches $Q$) or a fixpoint is reached  (in which case $G$ does not match) i.e. when Algorithm~\ref{algo:AF} has enumerated all possible sets that each symbol can generate. More precisely, in the $i^{th}$ iteration (loop $L$), Algorithm~\ref{algo:AF}  calculates for each $M\in \Sigma$ $F_i(M)$, which represents the set of sets of query-relevant terminals that $M$ matches, and that are generated by parse trees of height $\leq i$ i.e. $F_i(M)=\{leaves(T)\cap Q|root(T)\in productions(M), height(T)\leq i\}$. It can be proved that the fixpoint is when $F_i(M)=\lim\limits_{j\to \infty}F_j(M)=F(M)$ for all $M\in\Sigma$, which guarantees the correctness of Algorithm~\ref{algo:AF}.}

\begin{algorithm}
\small
\DontPrintSemicolon
\KwIn{a grammar $G=(\Sigma,\Delta,S,R)$ and a query $Q\subseteq \Delta$\\}
\KwOut{boolean}
\tcc{\scriptsize$F(M):$  set of sets of query-relevant keywords for $M$}
\tcc{\scriptsize$F_i(M)=\{leaves(T)\cap Q|root(T)\in productions(M)$, $height(T)\leq i\}$} \Begin{        
\nl\ForEach{$M\in \Delta$}{
\nl\lIf {$F(M)=\emptyset$} {$F_0(M) \gets \{\{M\}\cap Q\}$}\\
\tcc{\scriptsize for OptMatch (Algorithm~\ref{algo:OAF})}
\nl \lElse {
    $F_0(M)\gets F(M)$}}
\nl\lForEach{$M\in \Sigma/\Delta$}{$F_0(M)\gets\emptyset$}\\
\nl$i \gets 0$ \tcp{\scriptsize a counter for iteration times}
\nl\textbf{L:}~$i\gets i+1$\\
\nl \lForEach{$M\in\Delta$}{$F_i(M)\gets F_{i-1}(M)$}\\
\nl\lForEach{$M\in\Sigma/\Delta$}{$F_i(M)\gets \emptyset$}\\
\nl\ForEach{production $r: M \rightarrow \alpha_1\alpha_2 \ldots \alpha_n \in R$} {
\tcc{\scriptsize Construct parse trees rooted at $r$} 
\nl\ForEach{$X_1\in F_{i-1}(\alpha_1)$, $\ldots X_n\in F_{i-1}(\alpha_n)$}{
\nl$F_i(M).addElement(\bigcup\limits_{j=1}^n{X_j})$
}
}
\nl\lIf{$\exists M\in\Sigma,Q \in F_i(M)$} {
	  \Return{true}
  }\\
\nl\lIf{$\exists M\in\Sigma/\Delta, ~F_{i}(M)\not=F_{i-1}(M)$} {goto $L$}\\
\nl\Return{false}
}
\caption{{\em Match}}
\label{algo:AF}
\end{algorithm}

\begin{example}
\label{ex:algo1}
Consider the grammar in Example~\ref{ex:ex1} and query $Q=\{b,c\}$. Initially, $F_0(s_1)=F_0(s_2)=\{\emptyset\}$, $F_0(b)=\{\{b\}\}$, $F_0(c)=\{\{c\}\}$, and $F_0(S)=F_0(A)=F_0(B)=F_0(C)=\emptyset$ (lines 1-4). 
In the first iteration of {\bf L} ($i=1$), we add to $F_1(S)$ the set $\emptyset$ (after processing rule $r_1,r_2,r_3$). 
After processing all other productions, we have $F_1(A)=\emptyset$, $F_1(B)=\{\{b\}\}$, and $F_1(C)=\{\{c\}\}$. We then proceed to the second iteration ($i=2$). During this iteration, when rule $r_4$ is processed, we add to $F_2(A)$ (which was initialized to $F_2(A)=\emptyset$) the product of $F_1(B)$ and $F_1(C)$, at each step taking the union of the two elements (e.g. $\{b\} \cup \{c\}= \; \{b, c\}$), resulting in $F_2(A)= \{\{b, c\}\}$.  Since $Q \in F_2(A)$, {\em Match} terminates.
If this early termination condition were omitted, the fixpoint would have been reached in the fifth iteration, since $F_5(S)=F_4(S)$,  $F_5(A)=F_4(A)$, $F_5(B)=F_4(B)$, $F_5(C)=F_4(C)$.


\eat{
calculate $F_1(s_1)=F_0(s_1)$ and the same for $s_2,b,c$. Next we process each production i.e. Line 9$\sim$11. For $r_1$, we get $F_1(S)=\emptyset$. For $r_2$, we get $F_1(S)=\{\{s_1\}\}$. Further after processing $r_3$, we have $F_1(S)=\{\{s_1\},\emptyset\}$. Similarly, after processing all productions, we have $F_1(S)=\{\{s_1\},\emptyset\}$, $F_1(A)=\emptyset$, $F_1(B)=\{\{b\}\}$, $F_1(C)=\{\{c\}\}$. We then proceed to the next iteration i.e. Line 13. It is easy to check that Algorithm~\ref{algo:AF} will terminate in the 3rd iteration since $F_3(S)=\{\{s_1,b,c\},\{bc\},\{s_1\},\emptyset\}\ni Q$ (Line 12). We can also check that the fixpoint will be reached in the $5^{th}$ iteration, since $F_5(S)=F_4(S)$,  $F_5(A)=F_4(A)$, $F_5(B)=F_4(B)$, $F_5(C)=F_4(C)$.
}

 \eat{
 \begin{table}[h]
\centering
\begin{tabular}{|c|c|c|c|c|}
\hline
&$s_1$&$s_2$&b&c\\
\hline
F&$\{\{s_1\}\}$&$\{\emptyset\}$&$\{\{b\}\}$&$\{\{c\}\}$\\
\hline
&S&A&B&C\\
\hline
$F_0$&$\emptyset$&$\emptyset$&$\emptyset$&$\emptyset$\\
\hline
$F_1$&$\{\{s_1\},\emptyset\}$&$\emptyset$&$\{\{b\}\}$&$\{\{c\}\}$\\
\hline
$F_2$&$\{\{s_1\},\emptyset\}$&$\{\{b,c\}\}$&$\{\{b\}\}$&$\{\{c\},\{b\}\}$\\
\hline
$F_3$&$\{s_1bc,bc,s_1,\emptyset\}$&$\{bc,b\}$&$\{b\}$&$\{c,b\}$\\
\hline
$F_4$&$\{s_1bc,bc,s_1,\emptyset,s_1b,b\}$&$\{bc,b\}$&$\{b\}$&\{c,b\}\\
\hline
$F_5$&$\{s_1bc,bc,s_1b,b,\emptyset,s_1\}$&$\{bc,b\}$&$\{b\}$&$\{c,b\}$\\
\hline
\end{tabular}
\caption{A running example of Algorithm~\ref{algo:AF}}
\label{tab:queries}
\end{table}}
\end{example}

We now show that the data complexity of {\em Match} 
is $O(|G|*d(G))$.   We start by 
showing that it will reach a fixpoint in $d(G)*(|Q|+1) + |Q|$ iterations by proving that if a symbol $M\in\Sigma$ generates a set $X\subseteq Q$, then there exists a parse tree $T$ rooted at $r\in productions(M)$ of height at most $ d(G)*(|Q|+1) + |Q|$ such that $leaves(T)\cap Q=X$.

\eat{ for each production $r\in R$, i.e. $O(|G|)$ in total for each iteration. Next, we will show that Algorithm~\ref{algo:AF} will reach the fixpoint in $ d(G)*(|Q|+1) + |Q|$ iterations by proving that if a symbol $M\in\Sigma$ generates a set $X\subseteq Q$, then there exists a parse tree $T$ rooted at $r\in productions(M)$ of height at most $ d(G)*(|Q|+1) + |Q|$ such that $leaves(T)\cap Q=X$.
}

\begin{lemma}\label{part2}
Given a grammar $G=(\Sigma, \Delta, S,R)$ and  a query $Q\subseteq \Delta$, $\forall M\in\Sigma,~\forall X\subseteq Q$, $height(M,X)\leq d(G)*(|Q|+1) + |Q|$.
\end{lemma}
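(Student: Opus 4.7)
The plan is to bound $height(M,X)$ by analyzing a ``minimal'' witness parse tree. If $M$ does not generate $X$ there is nothing to prove, and if $M \in \Delta$ the only valid tree is the single-node tree of height $0$, so I assume $M$ is a variable that generates $X$. I let $T^*$ be a parse tree with $root(T^*) \in productions(M)$ and $leaves(T^*) \cap Q = X$ having the minimum possible number of nodes (which exists by well-ordering), and I will show that every root-to-leaf path in $T^*$ has length at most $d(G)(|Q|+1) + |Q|$, which bounds the height of $T^*$ and hence $height(M,X)$.

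Fix a root-to-leaf path $v_0, v_1, \ldots, v_h$ in $T^*$. For each $v_i$ let $A_i$ denote its associated symbol (the head of its production, or the terminal itself at a leaf), and let $X_i^{sub} = leaves(subtree(v_i)) \cap Q$. Since $subtree(v_{i+1}) \subseteq subtree(v_i)$, the sequence $X_0^{sub} \supseteq X_1^{sub} \supseteq \ldots \supseteq X_h^{sub}$ is weakly decreasing; let $Y_0 \supsetneq Y_1 \supsetneq \ldots \supsetneq Y_m$ be its distinct values. Then $m \leq |X_0^{sub}| \leq |X| \leq |Q|$, so the path decomposes into at most $|Q|+1$ maximal ``stretches'' on which $X^{sub}$ is constant.

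The pivotal step is the claim that no two nodes on the path share the same pair $(A, X^{sub})$: if $i < j$ with $(A_i, X_i^{sub}) = (A_j, X_j^{sub})$, then substituting $subtree(v_j)$ in place of $subtree(v_i)$ is type-compatible (as $A_j = A_i$), preserves $leaves(T^*) \cap Q = X$ exactly (since $X_j^{sub} = X_i^{sub}$), and strictly decreases the total node count (since $subtree(v_j)$ is a proper subtree of $subtree(v_i)$), contradicting minimality of $T^*$. Consequently, within any single stretch $v_p, \ldots, v_q$ of constant $X^{sub}$, the symbols $A_p, \ldots, A_q$ are pairwise distinct, so the productions at $v_p, \ldots, v_{q-1}$ have pairwise distinct heads and thus form a simple derivation sequence from $A_p$ to $A_q$ of length $q - p$. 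By Definition~\ref{def:distance} this forces $q - p \leq d(G)$, so each stretch contains at most $d(G)+1$ nodes.

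Summing over the at-most $|Q|+1$ stretches yields at most $(|Q|+1)(d(G)+1)$ nodes on the path, hence a path length (in edges) of at most $(|Q|+1)(d(G)+1) - 1 = d(G)(|Q|+1) + |Q|$, establishing the bound. I expect the main obstacle to be the subtree-substitution argument in the pivotal step: one has to check that the replacement is truly valid at $v_i$'s position (using $A_i = A_j$), that it preserves $leaves(T^*) \cap Q$ exactly rather than merely as a subset, and that it strictly reduces the node count (exploiting that $v_j$ is a proper descendant of $v_i$ on the chosen path). Once that claim is secured, the rest is a clean pigeonhole calculation driven by the decreasing chain of $X^{sub}$-values together with the definition of $d(G)$.
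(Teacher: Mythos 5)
Your proof is correct, and it arrives at exactly the paper's bound $d(G)*(|Q|+1)+|Q|$ from the same two counting ingredients --- the query-relevant set can strictly shrink at most $|Q|$ times along a root-to-leaf path, and between shrinkages the chain of productions can be assumed simple, hence of length at most $d(G)$ --- but the formalization is genuinely different. The paper argues top-down and recursively: it classifies a tree generating $X$ as either \emph{producing} $X$ (every child generates a proper subset) or \emph{broadcasting} it (some child generates all of $X$), defines $height_p(M,X)$ and $height_b(M,X)$, and chains the inequalities $height_p(M,X)\leq 1+\max_{M',X'\subset X}height(M',X')$ and $height_b(M,X)\leq d(G)+\max_{M'}height_p(M',X)$; the crucial fact that a broadcast chain may be taken to be a simple derivation sequence is asserted inside the second inequality rather than proved. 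You instead fix a single node-minimal witness tree and run an exchange argument: if two nodes on a path shared the pair $(A, X^{sub})$, splicing the lower subtree into the upper position would preserve the generated set and strictly shrink the tree. This one claim simultaneously yields the simplicity of each constant-$X^{sub}$ stretch and the pigeonhole count, and it makes the ``cut out redundant recursion'' step fully explicit (including the point that replacement preserves $leaves(T)\cap Q$ as a set even though leaves form a bag). Your version is more self-contained and rigorous as a standalone proof; the paper's version buys the $height_p/height_b$ machinery that it reuses informally when bounding the iterations of \emph{Score}.
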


\begin{proof}
Note that if  $M\in\Sigma$ generates $X\subseteq Q$, then the parse trees that generate $X$ fall in one of  two classes: (1) each child subtree generates a {\bf subset} of $X$, in which case we say the parse tree \emph{produces} the set $X$; (2) some child subtree generates $X$ {\bf by itself}, in which case we say the parse tree \emph{broadcasts} $X$.

We denote by $height(M,X)$ ($M\in \Sigma/\Delta, X\subseteq Q$) the minimum height of parse trees of $M$ that generate $X$; \\
$height(M,X)=-1$ if $M$ cannot generate $X$. Similarly, we denote by  $height_p(M,X)$ ($height_b(M,X)$) the minimum height of parse trees of $M$ that produce (broadcast) $X$.
For $height_p(M,X)$ we have the following:

\begin{equation}\label{eq:3.0}
\small
 height_p(M,X)=\begin{cases} 0 &\text{if } M\in \Delta, X=\{M\} \cap Q \\ -1 &\text{if } M\in \Delta, X\not=\{M\}\cap Q  \\
-1 & \text{if } M\in \Sigma/\Delta, |X|\leq 1\end{cases}
\end{equation}

\begin{equation}\label{eq:3.1}
\small
 height_p(M,X)\leq 1+\max\limits_{M'\in \Sigma,X'\subset X}height(M',X')
\end{equation}

\noindent
Turning to $height_b(M,X)$, we have:

\begin{equation}\label{eq:3.2}
\small
height_b(M,X) \leq  d(G) + \max\limits_{M^\prime\in \Sigma}{height_p(M^\prime ,X)}
\end{equation}

\eat{
\begin{equation}\label{eq:3.3}
\small
\begin{split}
&height(M,X)\\
 =&\begin{cases} \min(height_p(M,X), height_b(M,X)) & \mbox{if} \; \substack{height_p(M,X) \not= -1\\ height_b(M,X)\not= -1}\\
 \max(height_p(M,X), height_b(M,X))&\mbox{otherwise}
  \end{cases}\\
 \leq &d(G) + \max\limits_{M^\prime \in \Sigma}{ height_p(M^\prime ,X)}
  \end{split}
 \end{equation}}
 \begin{equation}\label{eq:3.3}
 \small
 height(M,X)\leq \max(height_p(M,X),height_p(M,X))
 \end{equation}
 
\noindent
So we have $height(M,X)\leq d(G)*(|Q|+1) + |Q|$.
\end{proof}

\eat{ 
\begin{lemma}\label{part1}
Given a grammar $G=(\Sigma, \Delta, S,R)$ and a query $Q\subseteq \Delta$, $\forall M\in\Sigma,~\forall X\subseteq Q$,
 $height_p(M,X)\leq |X|*(d(G)+1)$. 
\end{lemma}

\begin{proof} 
 Note that $\forall M\in \Sigma$, $height_p(M,X)= 0$ when $|X|=0$ or $|X|=1$. Moreover, $\forall M\in\Sigma, X\subseteq Q (|X|>1)$, $height_p(M,X)\leq 1 + \max\limits_{M'\in \Sigma,X'\subset X} height(M',X')$
\end{proof}
}

\eat{
\begin{example}
For example,  given the grammar in Example~\ref{ex:ex1}, and query $Q=\{s_1,b,c\}$, $T_1$ produces $\{b,c\}$ while $T_2$ broadcasts $\{b,c\}$. It is easy to check that $height_b(S,Q)=3$, $height_p(S,Q)=4$.
\begin{figure}[h]
\centering
\begin{tikzpicture}[node distance=15]
\node (r4) {$r_4$};
\node(center)[below of = r4] {};
\node(r5) [left of = center, node distance = 15] {$r_5$};
\node(r7) [right of = center,node distance=15] {$r_7$};
\node(b) [below of=r5, node distance = 15] {$b$};
\node(c) [below of=r7, node distance = 15] {$c$};
\draw[-] (r4) -- (r5);
\draw[-] (r4) -- (r7);
\draw[-] (r5) --(b);
\draw[-] (r7) --(c);
\node(T) [below of = r4, node distance = 60] {$T_1$};

\node(r1)[right of=r4, node distance=80]{$r_1$};
\node(0center)[below of=r1] {};
\node(r3)[right of=0center]{$r_3$};
\node(s2)[below of=r3]{$s_2$};
\node (1r4) [left of=0center, node distance=25]{$r_4$};
\node(1center)[below of = 1r4] {};
\node(1r5) [left of = 1center, node distance = 15] {$r_5$};
\node(1r7) [right of = 1center,node distance=15] {$r_7$};
\node(1b) [below of=1r5, node distance = 15] {$b$};
\node(1c) [below of=1r7, node distance = 15] {$c$};
\draw[-] (1r4) -- (1r5);
\draw[-] (1r4) -- (1r7);
\draw[-] (1r5) --(1b);
\draw[-] (1r7) --(1c);
\draw[-] (r1) --(r3);
\draw[-] (r1) --(1r4);
\draw[-] (r3) --(s2);
\node(1T) [below of = r1, node distance = 60] {$T_2$};

\end{tikzpicture}
\end{figure}

\end{example}
}

Using Lemma~\ref{part2} we can prove the following.
 
\begin{theorem}
The data complexity of {\em Match}
is \\$O(|G|*d(G))$.
\end{theorem}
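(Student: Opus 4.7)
The plan is to bound the total running time of \emph{Match} by the product of the number of outer iterations (loop \textbf{L}) and the cost of one iteration, treating $|Q|$ as a constant under data complexity.

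First I would bound the number of iterations. By Lemma \ref{part2}, for every symbol $M \in \Sigma$ and every $X \subseteq Q$ that $M$ can generate, there is a witnessing parse tree of height at most $d(G)\cdot(|Q|+1) + |Q|$. Since by construction $F_i(M)$ records exactly the query-relevant keyword sets realizable by parse trees of height at most $i$, once $i$ exceeds this bound no new element can appear in any $F_i(M)$, so the termination test on line 13 of Algorithm~\ref{algo:AF} triggers. Treating $|Q|$ as a constant, the number of iterations is $O(d(G))$, and the early termination check on line 12 only shortens this.

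Next I would bound the cost of a single iteration. Each iteration touches every production $r : M \to \alpha_1\alpha_2\cdots\alpha_n \in R$, and $\sum_{r \in R}|body(r)|$ is $O(|G|)$ by definition of grammar size. The key observation is that every $F_{i-1}(\alpha_j)$ is a family of subsets of $Q$, so $|F_{i-1}(\alpha_j)| \leq 2^{|Q|}$, a constant under data complexity; moreover $F_i(M)$ itself never exceeds $2^{|Q|}$ elements. Although lines 10--11 are written as a nested Cartesian product over the body, the union is idempotent and the range lives in the lattice of subsets of $Q$, so the product-union can be computed incrementally: maintain a running set $C_0 = \{\emptyset\}$ and set $C_j = \{X \cup Y \mid X \in C_{j-1}, Y \in F_{i-1}(\alpha_j)\}$, collapsing duplicates; after each step $|C_j| \leq 2^{|Q|}$, so the update is $O(1)$. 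Thus each production costs $O(|body(r)|)$, giving an $O(|G|)$ bound per iteration.

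Multiplying the two bounds gives $O(|G| \cdot d(G))$. The main obstacle is precisely the second step: a naive reading of the nested \textbf{ForEach} on line 10 suggests a cost of $\prod_{j=1}^n |F_{i-1}(\alpha_j)|$, which is exponential in $|body(r)|$; the argument must justify (by the lattice-of-subsets-of-$Q$ observation above) that the effective per-production work collapses to $O(|body(r)|)$ under data complexity, so that the whole iteration is linear in $|G|$.
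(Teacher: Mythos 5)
Your proposal is correct and follows essentially the same route as the paper: bound the number of iterations of loop \textbf{L} by $d(G)\cdot(|Q|+1)+|Q|$ via Lemma~\ref{part2}, and bound the cost of each iteration by $O(|G|)$ once $|Q|$ is treated as a constant. Your explicit incremental computation of the product-union (collapsing each intermediate result into a family of at most $2^{|Q|}$ subsets of $Q$ after every body symbol) is a useful elaboration of the per-production cost $O(|body(r)|\cdot 2^{|Q|})$ that the paper simply asserts without justifying why the nested loop does not blow up exponentially in $|body(r)|$.
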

 
\begin{proof}
The size of a grammar  $G=(\Sigma, \Delta, S,R)$
is defined as the sum of the sizes of its productions, \\
$|G|=\sum\limits_{r\in R}{(1+|body(r)|)}$. 
 By Lemma~\ref{part2} we know that the number of iterations of Algorithm~\ref{algo:AF} is bounded by 
$d(G)*(|Q|+1) + |Q|$.  
Each iteration (loop {\bf L}) of Algorithm~\ref{algo:AF} processes all productions, 
each of which takes $O(|body(r)|*2^{|Q|})$. 
Since the query size is considered as a constant, this yields a total time
of $O(|G|*d(G))$.
\end{proof}

\subsection{Optimized keyword query match}
\label{subsec:optimization}
We now introduce two improvements to {\em Match}, one of which
avoids unproductive calculations introduced by variables that are
fixed early, and the other of which reduces the size of $F_i(M)$.

\textbf{Optimization 1: Symbol Dependencies}
Borrowing ideas from semi-naive 
Datalog evaluation, we observe that a grammar yields symbol dependencies through the head and body
structure of rules, e.g.  that in 
{\em Match} the start variable $S$ will not be fixed until $A$ is fixed (a variable $M$ is \emph{fixed} in the $i^{th}$ iteration, iff $\forall j> i,F_j(M)=F_i(M)$, $\forall j< i,F_j(M)\subset F_i(M)$). 
The first optimization is to avoid unproductive calculations introduced by variables which are
fixed early.

Given a grammar $G=(\Sigma,\Delta,S,R)$, we create a \emph{precedence
  graph} $\cal{G=(V,E)}$ of symbols as follows: $\cal{V}$= $\Sigma$,
and a directed edge $(M, M^\prime)$ is added to $\cal{E}$ iff $\exists
r\in R, head(r)=M, body(r)\ni M'$. $\cal G$ has a directed cycle iff
$G$ is recursive. Two symbols are \emph{mutually recursive} iff they
participate in the same cycle of $\cal G$. Mutual recursion is an
equivalence relation on $\Sigma$, where each equivalence class
corresponds to a strongly connected component of $\cal G$. Denote by
$[M]$ the symbol equivalence class of $M$, and perform a topological
sort of $\cal G$ to construct a list $[M_1],\,[M_2]\ldots
[M_n]$. 
Clearly, if there is a path from $[M_j]$ to $[M_i]\,(i < j)$, then
symbols in $[M_i]$ are fixed earlier than symbols in $[M_j]$; symbols
in the same equivalence class are fixed in the same iteration.

\textbf{Optimization 2: Set Domination} We note that an element in $F_i(M)$ 
is \emph{useless} if it is a subset of (\emph{dominated by}) another element in $F_i(M)$. 
For example, $\{b\}$ is useless in $F_3(A)=\{\{b\},\{b,c\}\}$
 of Example~\ref{ex:algo1}.
Using set domination, $|F(M)|$ decreases from $2^{|Q|}$ to $\binom{|Q|}{\lfloor |Q|/2\rfloor}$.


\begin{algorithm}[h]
\small
\DontPrintSemicolon
\KwIn{a grammar $G=(\Sigma,\Delta,S,R)$ and a query $Q\subset \Delta$\\
}
\KwOut{boolean}
\textbf{Initialization}\\
Precompute the symbol equivalence classes $[M]$\\
Precompute the topological order of the symbol equivalence classes, $T\_order$\\

\Begin{
\nl\lForEach{$M \in \Sigma$}{$F(M)= \emptyset$}  \\
\nl\ForEach{$[M]\in T\_order$} {
	\tcp{\scriptsize Find the relevant productions of $[M]$}
\nl	$R^\prime\gets \{r|r\in R, head(r)\in [M]\}$\\
\nl	$\Sigma'\gets \bigcup\limits_{r\in R'}(head(r)\cup body(r))$\\
\tcc{\scriptsize all descendants  have been fixed and hence are treated as terminals}
\nl	$\Delta'\gets \Sigma'/\ ([M]/\Delta)$ \\
	\tcc{\scriptsize $F(\cdot)$ is a global variable visible for both $OptMatch()$ and $Match()$}
\nl	\lIf{$Match(Q,\Sigma',\Delta', M,R^\prime)$}{\Return{true}}\\
}
\nl\Return {false}
}

\caption{{\em OptMatch}}
\label{algo:OAF}
\end{algorithm}

Algorithm~\ref{algo:OAF} gives the optimized algorithm,
{\em OptMatch}. Note that the topological order of symbol equivalence
classes is query-independent and can be precomputed.  Line 6 of
{\em OptMatch} calls {\em Match} for the current equivalence class.  $F(M)$ is
global, and is used in lines 2-3 of {\em Match}.  Set domination is
captured in the \emph{addElement} method in {\em Match}.

\begin{example}\label{ex:algo2}
Consider the grammar in Example~\ref{ex:ex1} and query $Q=\{b,c\}$. One topological order of the symbol equivalence classes is $[b],[c],[B],[C],[s_1],[s_2],[A],[S]$ where each of the classes consists of the symbol itself. For $[b]$, $R'=\emptyset,~\Sigma'=\{b\},~\Delta'=\{b\}$ (lines 3-5); after calling {\em Match} (line 6), $F(b)=\{\{b\}\}$. Similarly, $F(c)=\{\{c\}\}$, $F(B)=\{\{b\}\}$. Now we process $[C]$ where  $R'=\{r_6,r_7\},~\Sigma'=\{C,B,c\},~\Delta'=\{B,c\}$. When calling {\em Match} for $[C]$, the initialization results in $F_0(B)=F(B)=\{\{b\}\},~F_0(c)=F(c)=\{\{c\}\}$ (line 3 of Algorithm~\ref{algo:AF}). At the end of {\em Match}, we get $F(C)=\{\{b\},\{c\}\}$. We can check that  {\em OptMatch} terminates after processing $[A]$ since $Q \in F(A)=\{\{b,c\}\}$.
\end{example}

\newpage
 \subsection{Searching a bag grammar repository}
\label{subsec:searchrepo}

Given a bag grammar repository and query $Q$, we must retrieve all
grammars that match $Q$. A straightforward way to do this is to run
$OptMatch$
over each grammar.  This way a grammar that is reused by other
grammars will be processed multiple times. One solution is to union
productions of all grammars to form a universal grammar.  However,
this grammar would be too large to fit in memory.

We therefore process grammars one by one while recognizing grammar
reuse; an individual grammar can be assumed to fit in main memory.  We
first build inverted indexes to help identify {\em candidate
grammars}, i.e. grammars in which every keyword of the query appears
(although they may not simultaneously occur in some bag in the
language of the grammar).  Each index maps a keyword to a list of
grammars in which the keyword appears.  Given a query, we find
candidate grammars by intersecting the corresponding lists.

We then process candidate grammars (using $OptMatch$) so that a
grammar is always processed earlier than the grammars that reuse it.
Specifically, we create a precedence graph where nodes of the graph
are grammars, and there is an edge from $G_i$ to $G_j$ iff $G_j$
reuses $G_i$ (i.e. the start module of $G_i$ appears in $G_j$ as a
variable). The graph is a $DAG$, since there is a temporal dimension
to reuse. A topological order of the $DAG$ is the order in which
grammars are processed.

We cache intermediate results ($F(S)$) for grammars that are reused
and clean them from memory when there are no unprocessed grammars that
reuse them. In this way, we balance memory size and overhead of
redundant computations.

   
%
\eat{
\begin{algorithm}[h]
\DontPrintSemicolon
\KwIn{A keyword query $Q$ and a repository $R$}
\KwOut{A set of grammars}
\textbf{Initialization}\\
Precompute the topological order of all grammars, denoted as $L_G=G_1$, $G_2\ldots G_n$\\
Denote by $order(G)$ the order of $G$ i.e. its index in $L_G$\\
Denote by $reuse(G)$ the set of grammars that reuse $G$\\
Denote by $L(q)$ the grammar list  indexed by keyword $q$\\
\emph{$search(Q,R)$}\\
\Begin{
  $result \gets \emptyset $  \\
\tcc{\scriptsize Find the candidate grammars $L_c$, grammars of which are in the topological order}
  $L_c = \bigcap\limits_{q\in Q}{L(q)}$\\
 \ForEach{grammar $G=(\Sigma, \Delta, S, P) \in L_c$}{
      Load $G$ into the memory\\
      \If{$optMatch(Q,\Sigma, \Delta, S, P)$}{
	$result \gets result \cup reuse(G)$\\
	$L_c.remove(reuse(G))$
      }\lElse {
	Store the pair $(G, \,F(S))$ in memory
      }
      \ForEach{$(G^\prime, F(S^\prime))$ in the memory} {
	\If{ $\max\limits_{G''\in reuse(G')}(G'') \leq order(G)$} {
	    Clear $(G^\prime, F(S^\prime))$  from the memory
      }}}
      \Return{$result$}
}
\caption{Pseudo Code \textbf{Search a repository}}
\label{algo:AR}
\end{algorithm}}


\section{Ranking}
\label{sec:ranking}
\eat{In the previous sections we described our workflow model that is
  based on context-free graph grammars.  We also studied the
  complexity of the following decision problem: given a keyword query,
  and a keyword-labeled workflow specification, does there exist a
  workflow run in which all keywords are assigned?  We showed that the
  decision problem can be answered in time polynomial in the size of
  the grammar and exponential in the size of the query, and presented
  several efficient algorithms.}

For large repositories of grammars, more grammars may match a query
than can be shown to a user, motivating ranking.  In this section we
describe a relevance measure based on probabilistic grammars
(Section~\ref{sec:ranking:semantics}) and develop an algorithm for
computing the relevance of a grammar to a query
(Section~\ref{sec:ranking:algo}).  We also describe an efficient
top-$k$ algorithm (Section~\ref{sec:ranking:topK}).

\subsection{Semantics of relevance} 
\label{sec:ranking:semantics}

\eat{
Recall that workflows are represented by context-free bag grammars in our formalism.  
It is therefore natural to turn to probabilistic (or
stochastic) context-free grammars (PCFG) for ranking. PCFGs have been
used to analyze the probability that a string is generated by a
particular grammar, with applications in Natural Language
Processing~\cite{DBLP:books/daglib/0001548} and beyond.  Although our
grammars generate bags rather than strings, the formalism applies
literally in our setting.}

Probabilistic \eat{(or stochastic) }context-free grammars (PCFG) have
been used in applications such as natural language processing (NLP) to
analyze the probability that a string is generated by a particular
grammar.  It is therefore natural to use them for ranking.
Although our grammars generate bags rather than strings, the formalism applies literally in our setting.

\begin{definition}
  \emph{ \textbf{(Probabilistic Context-Free Bag\\Grammar)} A
    \emph{probabilistic context-free bag grammar} $G=(\Sigma,\Delta,
    S,R,\rho)$ is a context-free bag grammar in which each production
    is augmented with a probability $\rho: R\rightarrow (0,1]$ such
    that}
$\forall M \in \Sigma/ \Delta, ~\sum\limits_{r \in productions(M)}\rho(r)=1$.
\end{definition}

\eat{ The PCFG formalism will allow us to estimate the probability
  that a particular grammar will generate an execution that matches
  the keyword query.  Then, for a given keyword query, and for a
  workflow repository, our goal will be to rank workflow
  specifications on the probability that they will generate an
  execution matching the query.  }

In a PCFG, which production $r\in productions(M)$ is chosen at a given
composite module $M$ is independent of the choices that lead to $M$.
Thus, the probability of a parse tree $T$, which we will denote
$\rho(T)$, is the product of the probabilities of productions used in
the derivation, i.e.,
  $\rho(T)=\prod\limits_{r \in productions(T)}{\rho(r)}$.

Probabilities of productions in $G$ may be given by an expert or mined
from the corpus.  In this paper, we do not consider how these
probabilities are derived, but note that much work on the topic has
been done in the NLP community~\cite{DBLP:books/daglib/0001548}, and
the techniques are likely applicable here.\eat{ Without loss of
  generality, in the remainder of this paper we assume that all
  productions of a given module $M$ have equal probabilities.}

Given a grammar $G$ and a query $Q$, our goal is to
compute a relevance score, denoted $score(G,Q) \in [0,1]$,
representing the likelihood of $G$ to generate a parse tree that
matches $Q$.  We want these scores to be comparable across grammars,
which would enable us to say that, if $score(G,Q) > score(G',Q)$, then
$G$ is more query-relevant than $G'$.  Generating scores that are
comparable across grammars turns out to be tricky, because, as we show
next, parse trees of probabilistic context-free bag grammars may not
form a valid probabilistic space.

\eat{
Suppose that query $Q=\{ a \}$ is given, and consider grammar $G$
 in Figure~\ref{fig:nonrec}.
This is a non-recursive grammar that generates exactly three parse
trees with probabilities: $\rho(T_1)= \rho(T_2) =
\rho(T_3) = 0.25$ (Recall that, because we are working with bag
grammars, parse trees are unordered.)

\begin{figure}[h]
\begin{minipage}{.4\linewidth}
 \begin{eqnarray*}
  r_1:S\rightarrow & \{A, A\} & (1)\\
  r_2:A\rightarrow & \{a\}& (0.5)\\
  r_3:A\rightarrow & \{b\} & (0.5)\\
\end{eqnarray*}
\end{minipage}
\begin{minipage}{0.6\linewidth}
\centering
\begin{tikzpicture}[node distance=15]
  \node(r1) {$r_1$};
\node(center)[below of=r1] {};
\node(r21) [left of = center,node distance=10]{$r_2$};
\node(r22)[right of = center,node distance=10]{$r_2$};
\node(a1)[below of=r21]{$a$};
\node(a2)[below of=r22]{$a$};
\node(T1)[below of = center, node distance=30]{$T_1$};
\draw[-] (r1)--(r21);
\draw[-] (r1)--(r22);
\draw[-] (r21)--(a1);
\draw[-] (r22)--(a2);

  \node(1r1) [right of = r1, node distance=40]{$r_1$};
\node(1center)[below of=1r1] {};
\node(1r21) [left of = 1center,node distance=10]{$r_2$};
\node(1r22)[right of = 1center,node distance=10]{$r_3$};
\node(1a1)[below of=1r21]{$a$};
\node(1a2)[below of=1r22]{$b$};
\node(T2)[below of = 1center, node distance=30]{$T_2$};
\draw[-] (1r1)--(1r21);
\draw[-] (1r1)--(1r22);
\draw[-] (1r21)--(1a1);
\draw[-] (1r22)--(1a2);

  \node(2r1) [right of = r1, node distance=80]{$r_1$};
\node(2center)[below of=2r1] {};
\node(2r21) [left of = 2center,node distance=10]{$r_3$};
\node(2r22)[right of = 2center,node distance=10]{$r_3$};
\node(2a1)[below of=2r21]{$b$};
\node(2a2)[below of=2r22]{$b$};
\node(T3)[below of = 2center, node distance=30]{$T_3$};
\draw[-] (2r1)--(2r21);
\draw[-] (2r1)--(2r22);
\draw[-] (2r21)--(2a1);
\draw[-] (2r22)--(2a2);

\end{tikzpicture}
\end{minipage}
\caption{Non-recursive bag grammar}
\label{fig:nonrec}
\end{figure}

\eat{
\begin{verbatim}
 T1:   r1             T2:  r1       T3:   r1
      / \                 / \            /  \
     r2  r2              r2  r3         r3   r3
     |   |               |   |          |    |
     a   a               a   b          b    b
\end{verbatim}}

The parse trees do not form a valid probabilistic space, because their
probabilities do not sum to $0.75 < 1$.  Such PCFGs are referred to
as {\em improper}.  Note that, while non-recursive string PCFGs are
always proper, non-recursive bag PCFGs may be improper.  An easy fix,
which works well for grammar $G$, is to normalize the probabilities of
parse trees by their sum.  However, as we will see next, this does not
work for recursive PCFGs.
}

Consider the grammar $G'$ in Figure~\ref{fig:rec} that consists of two
productions, chosen with equal probability (probabilities are
indicated in parentheses).  Since $G'$ is recursive, it generates an
infinite number of parse trees that match query $Q=\{a\}$.  Two of
these are shown in Figure~\ref{fig:rec}.

\begin{figure}[h]
\small
 \centering
\begin{minipage}{0.55\linewidth}
\begin{eqnarray*}
r_1:S\Longrightarrow &\{S,S,S\}&(0.5)\\
r_2:S\Longrightarrow&\{a\}&(0.5)
\end{eqnarray*}
\end{minipage}
\begin{minipage}{0.4\linewidth}
\begin{tikzpicture}[node distance=15]
 \node(r2) {$r_2$};
\node(a)[below of = r2] {$a$};
\node(T1)[below of = r2, node distance=30]{$T_1$};
\draw[-] (r2)--(a);

\node(1r1)[right of =r2, node distance=40] {$r_1$};

\node(1r21)[below of =1r1]{$r_2$};
\node(1r20)[left of=1r21] {$r_2$};
\node(1r22)[right of =1r21]{$r_2$};

\node(1a0)[below of = 1r20]{$a$};
\node(1a1)[below of = 1r21]{$a$};
\node(1a2)[below of = 1r22]{$a$};
\node(T2)[below of=1r1, node distance = 40]{$T_2$};

\draw[-](1r1)--(1r20);
\draw[-](1r1)--(1r21);
\draw[-](1r1)--(1r22);

\draw[-](1r20)--(1a0);
\draw[-](1r21)--(1a1);
\draw[-](1r22)--(1a2);
\end{tikzpicture}
\end{minipage}
\caption{Recursive bag grammar}
\label{fig:rec}
\end{figure}

\eat{
\begin{verbatim}
 T1:  r2          T2:     r1
      |                 / | \
      a                r2 r2 r2
                       |  |   |
                       a  a   a
\end{verbatim}
}

These trees have probabilities: $\rho(T_1)=0.5$ and $\rho(T_2) = 0.5^4
= 0.0625$.  Unfortunately, it is not possible to compute a
normalization factor by summing the probabilities of the infinitely
many parse trees, because this sum is
irrational~\cite{Chi99statisticalproperties}. Generally, given a PCG
$G=(\Sigma,\Delta, S,R,\rho)$, and a query $Q\subseteq \Delta$, it is
customary to define a relevance score $score(G,Q)$ using {\em max} or
{\em sum} semantics.

\eat{{,Etessami:2009:RMC:1462153.1462154}. For $G'$, the sum is
  $score_2( G,Q)=(\sqrt{5}-1)/2$. This observation holds true for
  string and bag PCFGs alike. \julia{Need to compute the probability
    for bags.}}

\eat{
Given a probabilistic context-free grammar (PCG)
$G=(\Sigma,\Delta,R,S,\rho)$, a query $Q\subseteq \Delta$, a score
function is a function $score:G\times Q \rightarrow \mathbb{R_{\ge
    \mbox{0}}}$, which measures how much the grammar is related to the
query. \emph{max} and \emph{sum} semantics are common in ranking of
relational data, XML, business processes and etc.. We will next
discuss these two semantics.
}

\eat{
We note that the probability distribution of two grammars are
independent, which makes two grammars not comparable no matter we use
max or sum semantics. More clearly, given one PCG, we are able to tell
which one of two parse trees is more likely by comparing their
probabilities. Yet given two parse trees each from a different PCG, it
is not reasonable to say one parse tree is more likely than the other,
due to the fact that the probabilities of these two parse trees are
from two probability distributions. To make the probabilities from two
grammars comparable, we do normalization.
}

\eat{
\begin{enumerate}
\item $score_{sum}(G,Q)=\sum\limits_{\substack{r(T)\in
      R(S)\\Q \subseteq leaves(T)}}{\rho(T)}/ \sum\limits_{r(T)\in
    R(S)}{\rho(T)}$
\item $score_{max}(G,Q)=\max\limits_{\substack{r(T)\in
      R(S)\\Q \subseteq leaves(T)}}{\rho(T)}/ \max\limits_{r(T)\in
    R(S)}{\rho(T)}$
\end{enumerate}
}

\begin{equation}
\small
   score_{sum}(G,Q)=\sum\limits_{\substack{r(T)\in
      R(S)\\Q \subseteq leaves(T)}}{\rho(T)}/ \sum\limits_{r(T)\in
    R(S)}{\rho(T)}
\end{equation}
\begin{equation}
\small
    score_{max}(G,Q)=\max\limits_{\substack{r(T)\in
      R(S)\\Q \subseteq leaves(T)}}{\rho(T)}/ \max\limits_{r(T)\in
    R(S)}{\rho(T)}
\label{eq:score}
\end{equation}

Sum semantics is intuitive: we normalize the total probability of all
parse trees matching the query by the total probability of all parse
trees.  Unfortunately, as we argued above, this value cannot be
computed because probabilities may be irrational. It was shown
in~\cite{Etessami:2009:RMC:1462153.1462154} that $score_{sum}$ may be
approximated by solving a monotone system of polynomial equations.
However, this approach is in PSPACE, and is very expensive in
practice. 

Motivated by these considerations, we opt for max scoring semantics,
where the score of a grammar for a query is computed by dividing the
probability of the most likely parse tree matching the query by the
probability of the over-all most likely parse tree.  This semantics is
reasonable, and, as we will show next, $score_{max}(G,Q)$ can be
computed efficiently.  We refer to $score_{max}(G,Q)$ simply as
$score(G,Q)$ in the remainder of the paper.

\eat{
According to \cite{Etessami:2009:RMC:1462153.1462154}, sum semantics
may result in irrational probabilities (see example.~\ref{irrational
  probability}). Similar to\cite{Etessami:2009:RMC:1462153.1462154},
$score_2(G,Q$ could be obtained by solving a monotone system of
polynomial equations (a.k.a. MSPE), which has been proved to be in
$PSPACE$. And hence, we use max semantics as our ranking semantics. We
will give an efficient algorithm to compute the score. (see
Algorithm.\ref{algo:AF_Ranking}). Note that $ \max\limits_{r(T)\in
  R(S)}{\rho(T)}$ can be precomputed using the same algorithm. We will
refer to $ \max\limits_{r(T)\in R(S)}{\rho(T)}$ simply as
$\rho_{max}(S)$, $score_1(G,Q)$ as $score(G,Q)$.

\begin{example}\label{irrational probability}
Given a grammar 
$$G:\; S\rightarrow SSS\;(0.5) \;|\; a\;(0.5)\;$$
and a search query $Q=\{a\}$, we know that $Q$ matches all parse trees
of $G$, i.e. $score_2(
G,Q)=(\sqrt{5}-1)/2$. \cite{Etessami:2009:RMC:1462153.1462154}
\end{example}
}

\subsection{Computing the score of a workflow}
\label{sec:ranking:algo}

We now present Algorithm~\ref{algo:AF_score}, {\em Score}, which
computes the relevance score of grammar $G$ for query $Q$ per
Equation~\ref{eq:score}.  This is a fixpoint algorithm that is similar
in spirit to {\em Match}.  

Note that the algorithm of~\cite{DBLP:journals/pvldb/DeutchMPY10} can
also be used to compute the relevance score of grammar $G$ for query
$Q$.  This algorithm uses a similar framework
as~\cite{DBLP:journals/jcss/DeutchM12}, and so is very general, but is
less efficient in our particular scenario.  We will give experimental
support to this claim in Section~\ref{sec:exp:rank}.

Recall from {\em Match} that $F_i(M)$ represents the set of sets of
query-relevant keywords that module $M$ matches, and that are derived
by parse trees of height $\leq i$.  {\em Score} uses $F_i(M)$ and a
data structure $\rho_i(M,X)$, in which it stores, for each $X \in
F_i(M)$, the score of the corresponding parse tree.  Like {\em Match},
{\em Score} manipulates a global data structure $F(M)$; {\em Score}
also maintains the corresponding global $\rho(M,X)$.  These data
structures will become important when we consider an optimization,
called {\em OptScore}.

\eat{
\setlength{\algomargin}{0em}
\begin{algorithm}
\DontPrintSemicolon
\KwIn{grammar $G=(\Sigma,\Delta,S,R, \rho)$, query $Q$\\
}
\KwOut{$score(Q,\Sigma, \Delta, S,R, \rho)$\\ 
}
\Begin{
\ForEach{$M \in \Delta$} {
\If{$F(M)=\emptyset$} {$F_0(M)\gets\{\{M\}\cap Q\};~\rho_0(M,\{M\}\cap Q) \gets  1.0$ }
      \lElse{ $F_0(M)=F(M);~\rho_0(M,X)=\rho(M,X)$}}
\lForEach{$M\in \Sigma/\Delta$}{
    $F_0(M)\gets\emptyset$} 
$i\gets 0$\\
\nlset{ L:} 
$i\gets i + 1$\\
\ForEach{production $r: M \rightarrow \alpha_1\alpha_2 \ldots \alpha_n \in R$} {
	\ForEach{$X_1\in F_{i-1}(\alpha_1)\ldots X_n\in F_{i-1}(\alpha_n)$}{
		$X \gets \bigcup_{j=1}^{n}X_j$\\
		\If{$X\not\in F_i(M)$}{
			$F_i(M).addElement(X)$\\
			$\rho_i(M, X)\gets \rho(r)\times \prod_{j=1}^n{\rho_{i-1}(\alpha_j, X_j)}$\\
		}\Else{
		$\rho_i(M, X)\gets \max(\rho_{i}(M,X), \rho(r)\times	\prod_{j=1}^n{\rho_{i-1}(\alpha_j, X_j))}$\\
			}
		}
}
\If{$\forall M\in \Sigma/\Delta,~F_{i}(M)=F_{i-1}(M)$} {
	\lIf{$Q\not\in F_i(S)$}{\Return{$0.0$}}\\
	\If{$\forall M\in \Sigma,\forall X\subseteq Q,~\rho_i(M,X)=\rho_{i-1}(M,X)$} {
	  $F(M)=F_i(M);~\rho(M,X)=\rho_i(M,X)$\\
	  \Return{$\rho(S,Q)/\rho_{max}(S)$}}
}
goto $L$
}
\caption{Score}
\label{algo:AF_score}
\end{algorithm}
}

\begin{algorithm}[t]
\small
\DontPrintSemicolon

\KwIn{grammar $G=(\Sigma,\Delta,S,R)$ and query $Q\subseteq \Delta$\\}
\KwOut{double}
\Begin{          
\nl\ForEach{$M\in \Delta$}{
\nl\If {$F(M)=\emptyset$}{$F_0(M) \gets \{\{M\}\cap Q\}$;~$ \rho_0(M,\{M\}\cap Q)\gets 1.0$}
\nl \Else {
    $F_0(M)\gets F(M)$ $\forall X\in F_0(M), \rho_0(M,X)\gets \rho(M,X)$}}
   
\nl \lForEach{$M\in \Sigma/\Delta$}{$F_0(M)\gets\emptyset$}\\

\nl$i \gets 0$ \tcp{\scriptsize a counter for iteration times}
\nl\textbf{L:}~$i\gets i+1$\\
\nl \ForEach{$M\in\Delta$}{$F_i(M)\gets F_{i-1}(M)$\\
	$\forall X\in F_i(M),\rho_i(M,X)\gets \rho_{i-1}(M,X)$}
\nl\lForEach{$M\in\Sigma/\Delta$}{$F_i(M)\gets \emptyset$}\\
\nl\ForEach{production $r: M \rightarrow \alpha_1\alpha_2 \ldots \alpha_n \in R$} {
\nl\ForEach{$X_1\in F_{i-1}(\alpha_1)$, $\ldots X_n\in F_{i-1}(\alpha_n)$}{
$X\gets \bigcup\limits_{j=1}^n{X_j}$
 $prob\gets \rho(r)\times \prod_{j=1}^n{\rho_{i-1}(\alpha_j, X_j)}$\\
\lIf{$X\not\in F_{i}(M)$}{
$\rho_{i}(M,X)\gets prob$}\\
\lElse{$\rho_i(M, X)\gets \max(\rho_{i}(M,X), prob)$}
\nl$F_i(M).addElement(X)$}
}
\LinesNotNumbered
\nlset{13} \lIf{$\exists M\in\Sigma/\Delta, ~F_{i}(M)\not=F_{i-1}(M)$} {goto $L$}\\
 \lIf{$Q\not\in F(S)$}{\Return{$0.0$}}~\tcp{not matching}
\lIf{$\exists M\in \Sigma,\exists X\in F_i(M),~\rho_i(M,X)>\rho_{i-1}(M,X)$}{goto $L$}\\
$F(M)=F_i(M);~\rho(M,X)=\rho_i(M,X)$\\
 \Return{$\rho(S,Q)/\rho_{max}(S)$}
}

\caption{$Score$}
\label{algo:AF_score}
\end{algorithm}

Algorithm {\em Score} starts by storing the set of query-relevant
keywords annotating each terminal module $M$ in $F_0(M)$, and by
recording the probability score of 1.0 in $\rho_0$.  Next, for
non-terminal modules, $F_0(M)$ is initialized to an empty set.  The
bulk of the processing happens next, where, at each iteration $i$, we
consider each production $r$ rooted at $M$, and generate all sets of
query terminals resulting from parse trees of height at most $i$
rooted at $M$.  We record the resulting subset of query keywords $X$
if this subset has not been seen before, or it if is the
highest-scoring parse tree for this subset at the current round.  We
compute the probability score of a parse tree resulting from
production $r$ as the product of the probabilities of its subtrees and
the probability of $r$.

{\em Score} terminates when either no new subsets of the query are
generated, or no better (more probable) derivations of existing
subsets are found.  {\em Score} returns the normalized probability of
the most probable derivation tree matching $Q$. Note that
normalization factor $\rho_{max}(S)$ is query-independent and can be
computed by invoking $Score(G,\emptyset)$.  We compute $\rho_{max}(S)$
offline and store it for future use.

\begin{example}\label{ex:algo3}
  Consider the grammar in Example~\ref{ex:ex1} and assume that
  productions with the same head are equally likely. Given the query
  $Q=\{b,c\}$, {\em Score} calculates $\rho_{max}(S)=1/3$,
  $\rho(S,Q)=\frac{1}{3}*\frac{1}{3}*\frac{1}{2}$ and
  returns 
  $\frac{1}{3}*\frac{1}{3}*\frac{1}{2}/\rho_{max}(S)=\frac{1}{6}$ when
  it terminates at the end of the $5^{th}$ iteration.
\end{example}

The worst-case running time of {\em Score} is polynomial in the size
of the grammar.  This can be shown by a similar argument as for {\em
  Match} (Section~\ref{subsec:data complexity}), and is based on the
observation that, for any symbol $M$, the height of the most probable
parse tree generating any subset of $Q$ is bounded by $\leq
d(G)*(|Q|+1) + |Q|$.  

\eat{
Similar to the decision problem, we can prove that, given a grammar
$G=(\Sigma,\Delta,S,R)$ and a query $Q\subseteq \Delta$, the most
probable parse tree for $M\in \Sigma$ to generate any $X\in F(M)$ has
the height $\leq d(G)*(|Q|+1) + |Q|$, which implies a polynomial data
complexity.}

We also developed an optimized version of {\em Score}, which we call
{\em OptScore}.  We do not detail the {\em OptScore} algorithm here,
but note that it is based on the two optimizations performed in {\em
  OptMatch}.  The first optimization, {\bf symbol dependencies},
identifies equivalence classes of modules of $G$ based on their reuse
of variables on the right-hand-side of productions.  {\em OptScore}
computes a topological ordering among equivalence classes, and runs
algorithm {\em Score} for each class in reverse topological order,
saving intermediate results in global data structures $F(M)$ and
$\rho(M,X)$.  The second optimization, {\bf set domination}, includes
probabilities in the notion of domination: a set $X_1\in F_i(M)$
dominates $X_2\in F_i(M)$ iff $X_1\supset X_2$ and $\rho_i(M,X_1)\geq
\rho_i(M,X_2)$.


\eat{
\begin{algorithm}
\DontPrintSemicolon
\KwIn{\\A probabilistic grammar $G=(\Sigma,\Delta,S,R, \rho)$\\
A keyword query $Q\subseteq \Delta$\\}
\KwOut{$double$}
\textbf{Initialization}\\
Compute the variable equivalence classes $[M]$\\
Compute the topological order of the variable equivalence classes, $T\_order$\\
Mark all variables as unfixed\\
\emph{$optScore(Q, \Sigma, \Delta, S, R,\rho)$}\\
\Begin{
\ForEach{$[M]\in T\_order$} {
	\tcp{\scriptsize Find the relevant productions of $[M]$}
	$R^\prime\gets \{r|r\in R, head(r)\in [M]\}$\\
	$\Sigma'\gets \bigcup\limits_{r\in R'}(head(r)\cup body(r))$\\
	$\Delta'\gets \Sigma'/\ [M]$\\
	\tcc{\scriptsize $F(\cdot)$ and $\rho(\cdot)$ are global variables visible for both $optScore()$ and $Score()$}
	$score(Q,\Sigma',\Delta', M,R^\prime,\rho)$\\
	
}
\lIf{$Q\in F(S)$}{\Return {$\rho(S,Q)/\rho_{max}(S)$}}\\
\Return {$0.0$}
}
\caption{Optimized algorithm of Algorithm.\ref{algo:AF_ranking}(almost the same as Algorithm.\ref{algo:OAF})}
\label{algo:OAF_ranking}
\end{algorithm}
}

\begin{figure}[t!]
\small
 \begin{minipage}[!t]{0.4\linewidth}
\[
\begin{split}
r_1: S\Longrightarrow & \{S,A\}  \\
r_2: S\Longrightarrow & \{S,S,B\}\\
r_3: S\Longrightarrow & \{s\}    \\
r_4: A\Longrightarrow & \{a\}    \\
r_5: B\Longrightarrow & \{B,S\}  \\
r_6: B\Longrightarrow & \{b\}    \\
\end{split}
\]
\end{minipage}
\begin{minipage}{0.6\linewidth} 
\begin{tikzpicture}[node distance=15]
\node(r1) {$r_1$};
\node(center)[below of=r1]{};
\node(r3) [left of=center] {$r_3$};
\node(r4) [right of=center] {$r_4$};
\node(s) [below of=r3] {$s$};
\node(a)[below of=r4]{$a$};
\node(T1)[below of=r1, node distance=40] {$T_1$};
\draw[-] (r1)--(r3);
\draw[-] (r1)--(r4);
\draw[-] (r3)--(s);
\draw[-] (r4)--(a);

\node(1r1) [right of=r1, node distance=80] {$r_1$};
\node(1center)[below of=1r1]{};
\node(1r11)[left of=1center] {$r_1$};
\node(1r4)[right of=1center]{$r_4$};
\node(1center1)[below of=1r11] {};
\node(1r3) [left of=1center1] {$r_3$};
\node(1r41) [right of=1center1] {$r_4$};
\node(1s) [below of=1r3] {$s$};
\node(1a1)[below of=1r41]{$a$};
\node(1a)[below of=1r4]{$a$};
\node(T2)[below of=1r1, node distance=60] {$T_2$};
\draw[-] (1r1)--(1r11);
\draw[-] (1r1)--(1r4);
\draw[-] (1r11)--(1r3);
\draw[-] (1r11)--(1r41);
\draw[-] (1r4)--(1a);
\draw[-] (1r3)--(1s);
\draw[-] (1r41)--(1a1);
\end{tikzpicture}
\end{minipage}
\begin{minipage}[!t]{0.1\linewidth}
\[
\]
\end{minipage}
\begin{minipage}[!t]{0.9\linewidth}
\begin{tikzpicture}[node distance=15]
\node (S) {$r_2$};
\node(S1)[below of =S] {$r_3$};
\node (S0)[left of = S1, node distance  = 30] {$r_1$};
\node (B)[ right of = S1] {$r_6$};
\node (S3)[below of = S0, left of = S0] {$r_3$};
\node (A)[below of = S0, right of = S0] {$r_4$};
\node (s1)[below of = S1] {$s$};
\node (b)[below of = B] {$b$};
\node (a)[below of = A] {$a$};
\node (s2)[below of = S3] {$s$};
\node (T1)[below of = S, node distance = 60] {$T_3$};

\draw[-] (S)--(S0);
\draw[-] (S)--(S1);
\draw[-] (S)--(B);
\draw[-] (S0)--(S3);
\draw[-] (S0)--(A);
\draw[-] (S3)--(s2);
\draw[-] (A)--(a);
\draw[-] (S1)--(s1);
\draw[-] (B)--(b);

 \node (1S) [right of= S, node distance = 100]{$r_2$};
\node(1S1)[below of =1S] {$r_3$};
\node (1S0)[left of = 1S1, node distance  = 30] {$r_1$};
\node (1B)[ right of = 1S1,node distance = 30] {$r_5$};
\node (1S3)[below of = 1S0, left of = 1S0] {$r_3$};
\node (1A)[below of = 1S0, right of = 1S0] {$r_4$};
\node (1s1)[below of = 1S1] {$s$};

\node (1B1)[below of = 1B,left of=1B] {$r_3$};
\node (1S4)[below of = 1B,right of=1B] {$r_6$};
\node (1b)[below of = 1B1] {$s$};
\node (1s3)[below of = 1S4] {$b$};

\node (1a)[below of = 1A] {$a$};
\node (1s2)[below of = 1S3] {$s$};
\node (T2)[below of = 1S, node distance = 60] {$T_4$};

\draw[-] (1S)--(1S0);
\draw[-] (1S)--(1S1);
\draw[-] (1S)--(1B);
\draw[-] (1S0)--(1S3);
\draw[-] (1S0)--(1A);
\draw[-] (1S3)--(1s2);
\draw[-] (1A)--(1a);
\draw[-] (1S1)--(1s1);
\draw[-] (1B)--(1B1);
\draw[-] (1B)--(1S4);
\draw[-] (1B1)--(1b);
\draw[-] (1S4)--(1s3);
\end{tikzpicture}
\end{minipage}
\caption{An example of a grammar.}
\label{fig:rep}
\vspace{-0.4cm}
\end{figure}

\subsection{Identifying the top-$k$ workflows}
\label{sec:ranking:topK}

We conclude our discussion of ranking by presenting an efficient way
to retrieve, and compute the scores of, the top-$k$ workflows in a
repository.  Given a repository, a query $Q$, and an integer $k$, a
naive approach is to compute $score(G,Q)$ using, e.g., {\em OptScore},
sort grammars in decreasing order of score, and return the top-$k$.
Here, $OptScore$ may be executed on the entire repository, or only on
its promising subset, leveraging an inverted index that maps each
keyword to the set of workflows in which it appears.  Even assuming
that only the grammars matching $Q$ are considered (i.e., grammars for
which {\em Match(G,Q)} returns true), this naive approach will still
require us to compute $score(G,Q)$ for many more than $k$ grammars.

We use the {\em Threshold Algorithm} (TA)~\cite{FaginLN01} to limit
the number of score computations.  Our use of TA is based on the
observation that $score(G,Q) \leq \min_{X \subset Q} score(G,X)$.
In particular, $score(G,Q)$ is at most as high as the score of $G$ for
any single-keyword subset of $Q$.

We leverage this observation and build inverted lists, one per keyword
$a$, storing all grammars $G$ that match $a$, in decreasing order of
$score(G,\{a\})$.  Then, given a multi-keyword query $Q$, we access the
query-relevant lists sequentially in parallel, and compute
$score(G,Q)$ for the first $k$ grammars.  We refer to the current
$k^{th}$ highest score as $\theta$, and we update $\theta$ as the
algorithm proceeds.

We consider grammars in inverted list order, and, when an unseen
grammar $G$ is encountered, retrieve its entries from all inverted
lists with random accesses, and compute the score upper-bound $ub(G,Q)
= \min_{a \in Q} score (G,\{a\})$.  If $ub(G,Q) > \theta$, we compute
$score(G,Q)$ using an algorithm from Section~\ref{sec:ranking:algo}
and update $\theta$ if necessary.  TA terminates when the score
upper-bound of unseen grammars, computed as the minimum of current
scores in the relevant inverted lists, is lower than $\theta$.

\eat{
Given a query $Q$, TA algorithm sequentially accesses the index lists
each of which represents a keyword in $Q$. For each grammar $G$ seen
currently, it does a random access to retrieve all values from all
lists, which gives an upper bound. The threshold $\theta$ of TA
algorithm is the $k^{th}$ minimum score of grammars seen. For each
grammar the upper bound of which is greater than $\theta$, run
Algorithm~\ref{algo:AF_ranking} to get its score and update
$\theta$. Once all the values of some sequential access are less or
equal to $\theta$, which means the upper bound of all unseen grammars
is less than $\theta$, TA algorithm terminates.
}

\section{Result Presentation}
\label{sec:resultPresentation}
Since grammars may be large and complex objects, it is important to
develop presentation mechanisms that help the user understand where
keyword matches occur in the result
grammars. 
 Interestingly, a single grammar may match a query in many ways, more
than can be shown to a user. In Section~\ref{sec:ranking} we proposed
to compute probabilities of parse trees, and to use these
probabilities to rank grammars.  In this section, we build on this
idea and propose to choose the most probable parse trees that are
structurally irredundant. We refer to such trees as {\em
  representative parse trees}, and describe them in
Section~\ref{sec:pres:model}.  We then give an algorithm for finding
the top-$k$ representative parse trees for a given grammar in
Section~\ref{sec:pres:algo}.

\subsection{Representative parse trees}
\label{sec:pres:model}

Recursion gives rise to structural redundancy.  Consider the grammar
in Figure~\ref{fig:rep}, and its parse trees $T_1$ and $T_2$.  These
trees both match query $Q = \{s, a\}$.  Both trees fire the same
productions in the same order, and, while $T_1$ cuts through the
chase, $T_2$ loops by firing $r_1$ twice in sequence, and by
generating $a$ along the path $S \rightarrow A \rightarrow a$ twice.
The concept of a {\em representative parse tree} (\rpt for short),
which we define next, models the intuition that, while both trees
match the query in the same way (by firing the same productions in the
same order), $T_1$ is more concise than $T_2$.

For convenience, we will sometimes represent parse trees as bags of
paths, denoted $paths(T)$.  Recall that a path is a sequence of
productions that ends with a terminal, e.g., $r_1 r_3 s$ is a path in
$T_1$ in Figure~\ref{fig:rep}.  We can represent $T_1$ as $paths(T_1)
= \{ r_1 r_3 s,~r_1 r_4 a \}$. Note that the paths representation may
be ambiguous i.e. the same bag of paths may correspond to two
different parse trees (See Figure~\ref{fig:ambiguity}).

\begin{figure}[h]
 \begin{minipage}[!t]{0.3\linewidth}
\[
\begin{split}
r_1:S\rightarrow & SS\\
r_2: S\rightarrow &AB\\
r_3: A\rightarrow &a_1\\
r_4: A\rightarrow &a_2\\
r_5: B\rightarrow &a_1\\
r_6: B\rightarrow &a_3\\
\end{split}
\]
 \end{minipage}
\begin{minipage}{0.5\linewidth}
\begin{tikzpicture}
 \node (S) {$r_1$};
\node(center)[below of =S,node distance = 20] {};
\node (S1)[left of = center] {$r_2$};
\node (S2)[ right of = center] {$r_2$};
\node (A1)[below of = S1, left of = S1,node distance = 20] {$r_3$};
\node (B1)[below of = S1, right of = S1,node distance = 20] {$r_5$};
\node (A2)[below of = S2, left of = S2,node distance = 20] {$r_4$};
\node (B2)[below of = S2, right of = S2,node distance = 20] {$r_6$};
\node (a1)[below of = A1, node distance = 20] {$a_1$};
\node (a2)[below of = B1, node distance = 20] {$a_1$};
\node (a3)[below of = A2,node distance = 20] {$a_2$};
\node (a4)[below of = B2,node distance = 20] {$a_3$};
\node (T1)[below of = S, node distance = 80] {$T_1$};

 \node (1S) [below of =S,node distance = 100]  {$r_1$};
\node(1center)[below of =1S,node distance = 20] {};
\node (1S1)[left of = 1center] {$r_2$};
\node (1S2)[right of = 1center] {$r_2$};
\node (1A1)[below of = 1S1, left of = 1S1,node distance = 20] {$r_3$};
\node (1B1)[below of = 1S1, right of = 1S1,node distance = 20] {$r_6$};
\node (1A2)[below of = 1S2, left of = 1S2,node distance = 20] {$r_4$};
\node (1B2)[below of = 1S2, right of = 1S2,node distance = 20] {$r_5$};
\node (1a1)[below of = 1A1, node distance = 20] {$a_1$};
\node (1a2)[below of = 1B1, node distance = 20] {$a_3$};
\node (1a3)[below of = 1A2,node distance = 20] {$a_2$};
\node (1a4)[below of = 1B2,node distance = 20] {$a_1$};
\node (1T1)[below of = 1S, node distance = 80] {$T_2$};

\draw[-] (S)--(S1);
\draw[-] (S)--(S2);
\draw[-] (S1)--(A1);
\draw[-] (S1)--(B1);
\draw[-] (S2)--(A2);
\draw[-] (S2)--(B2);
\draw[-] (A1)--(a1);
\draw[-] (B1)--(a2);
\draw[-] (A2)--(a3);
\draw[-] (B2)--(a4);

\draw[-] (1S)--(1S1);
\draw[-] (1S)--(1S2);
\draw[-] (1S1)--(1A1);
\draw[-] (1S1)--(1B1);
\draw[-] (1S2)--(1A2);
\draw[-] (1S2)--(1B2);
\draw[-] (1A1)--(1a1);
\draw[-] (1B1)--(1a2);
\draw[-] (1A2)--(1a3);
\draw[-] (1B2)--(1a4);
\end{tikzpicture}
\end{minipage}
\caption{$paths(T_1)=paths(T_2)$ while $T_1\not=T_2$}
\label{fig:ambiguity}
\end{figure}

\begin{definition}\label{def:pathSubsumption}
\textbf{(Path Subsumption)} Path $p$ subsumes path $p'$,
  denoted $p\prec p'$ if $p$ is a sub-sequence of $p'$.
\end{definition}
For example, $r_1 r_3 s \prec r_1 r_1 r_3 s$.  Note that, since a path
ends with a terminal, if $p \prec p'$ then the paths must end in the
same terminal.  We use path subsumption to define the main concept of
this section, parse tree subsumption.

\begin{definition}\label{def:parseTreeSubsumption}
\textbf{(Parse Tree Subsumption)} Parse tree
    $T$ \emph{subsumes} parse tree $T^\prime$, denoted $T\prec
    T^\prime$, iff $head(root(T))=head(root(T'))$ and there exists
    an \emph{onto} mapping from $paths(T^\prime)$ to $paths(T)$, in
    which, if $p^\prime\in paths(T^\prime)$ is mapped to $p\in
    paths(T)$ then $p\prec p^\prime$.  
\end{definition}

For example, consider the parse trees in Figure~\ref{fig:rep}. Observe
that $T_1 \prec T_2$ according to
Definition~\ref{def:parseTreeSubsumption}.  The onto mapping from
$paths(T_2)$ to $paths(T_1)$ is: $r_1 r_1 r_3 s \rightarrow r_1 r_3
s$, $r_1 r_1 r_4 a \rightarrow r_1 r_4 a$ and $r_1 r_4 a \rightarrow
r_1 r_4 a$.  In contrast, no subsumption holds between parse trees
$T_2$ and $T_3$.


\begin{definition}
\textbf{(Representative Parse Tree)} A \emph{representative
      parse tree} (\rpt) $T$ of a grammar $G$ is a parse tree
    s.t. there does not exist a parse tree $T^\prime$ of $G$ that
    subsumes $T$.
\end{definition}

\eat{
\begin{example}
  Consider the grammar in Example.\ref{ex:ex1}, Fig.\ref{fig:ex3}
  shows three parse trees. $\\paths(T_2)=\{r_1r_2s_1, r_1r_4a_1,
  r_1r_4a_1, r_1r_5a_2\}$ $\\paths(T_3)=\{r_1r_2s_1, r_1r_4a_1,
  r_1r_5a_2, r_1r_5a_2\}$ $\\paths(T_4)=\\\{r_1r_1r_2s_1,
  r_1r_1r_4a_1,r_1r_1r_4a_1,r_1r_1r_5a_2, r_1r_4a_1, r_1r_4a_1,
  r_1r_5a_2\}$

  We can check that $T_2$ subsumes $T_4$. \eat{We denote by $(i,j)$
    the $i^{th}$ element of $paths(T^\prime)$ is mapped to the
    $j^{th}$ element of $paths(T)$.  There are two onto mapping
    functions from $paths(T_4)$ to $paths(T_2)$,
    i.e. $\\\{(0,0),(1,1),(2,1),(3,3),(4,2),(5,2),(6,3)\}$ and
    $\\\{(0,0),(1,1),(2,1),(3,3),(4,1),(5,2),(6,3)\}$} However, $T_2$
  does not subsume $T_3$, even though all paths in $T_3$ are
  subsumed. That's because two paths $r_1r_4a_1$ in $T_2$ subsume the
  same path $r_1r_4a_1$ in $T_3$ and hence there does not exist an
  onto mapping function from $paths(T_3)$ to $paths(T_2)$. It is easy
  to check both $T_2$ and $T_3$ are \rpts.

\end{example}
}

We now list several important properties of \rpts. Given a path $p$, we denote by $len(p)$ the length of $p$.

\begin{theorem}\label{the:matching}
  A grammar $G$ matches a query $Q$ iff there exists an \rpt $T$ in
  $G$ that generates $Q$.
\end{theorem}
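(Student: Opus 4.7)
The plan is to prove both directions of the biconditional separately, with $(\Leftarrow)$ following essentially by definitions and $(\Rightarrow)$ requiring a well-founded descent through the subsumption pre-order.

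For $(\Leftarrow)$, suppose an \rpt $T$ whose root is a production with head $S$ satisfies $Q \subseteq leaves(T)$. Since $T$ is, in particular, a parse tree of $G$ rooted at $S$, its leaf-bag belongs to $L(G)$, so $G$ matches $Q$ directly from the definition of matching. (If one admits \rpts rooted at other variables, the properness assumption on $G$ lets us embed $T$ as a subtree of a parse tree rooted at $S$ whose leaves still contain those of $T$.)

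For $(\Rightarrow)$, I would fix a parse tree $T_0$ rooted at $S$ with $Q \subseteq leaves(T_0)$ (one exists because $G$ matches $Q$), and iteratively replace it by a strict subsumer until reaching an \rpt $T^\ast$. The construction rests on two auxiliary facts. \emph{Leaf preservation:} if $T \prec T'$, then $T$ and $T'$ have the same set of leaf terminals; indeed, the onto map $\phi$ from $paths(T')$ to $paths(T)$ pairs each path with a subsequence ending in the same terminal, and the ontoness plus the subsequence condition force set-equality of leaves (not necessarily of leaf bags). In particular, every subsumer of $T_0$ still generates $Q$ and retains $S$ as its root head, since the root head is preserved by definition. \emph{Well-founded descent:} the measure $\mu(T) := \sum_{p\in paths(T)} len(p)$ satisfies $\mu(T) \leq \mu(T')$ whenever $T \prec T'$, by a direct calculation using $len(\phi(p')) \leq len(p')$ and $|\phi^{-1}(p)| \geq 1$; equality forces $\phi$ to be a length-preserving bijection, so $paths(T) = paths(T')$ as bags, from which the identity map witnesses $T' \prec T$. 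Contrapositively, strict subsumption strictly decreases $\mu$, and since $\mu$ is a non-negative integer the descent terminates at an \rpt $T^\ast$ that still generates $Q$ and is rooted at $S$.

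The main obstacle is the correct interpretation of ``\rpt'' in light of the path-representation ambiguity shown in Figure~\ref{fig:ambiguity}: two distinct parse trees can share the same bag of paths and therefore mutually subsume one another. The argument implicitly reads the definition of \rpt as ``minimal under \emph{strict} subsumption''\,---\,equivalently, minimal modulo the mutual-subsumption equivalence\,---\,and it is precisely the strictness portion of the $\mu$ argument that validates this reading and rules out infinite descent among path-equivalent trees.
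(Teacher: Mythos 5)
Your proof is correct, and at the top level it follows the same route as the paper: the direction from an \rpt to a match is immediate (modulo rooting the tree at $S$, which you handle via properness), and the other direction replaces a matching parse tree by an \rpt that subsumes it, using the fact that subsumption preserves the set of leaf terminals. The difference is that the paper's proof simply asserts that such an \rpt exists --- it says a matching tree's leaves ``must also be contained in the leaves of an \rpt'' without arguing that the subsumption pre-order admits minimal elements below every tree. That step is not automatic: subsumption is only a pre-order, and the paper's own Figure~\ref{fig:ambiguity} exhibits distinct trees with identical path bags that mutually subsume one another, so a naive reading of the \rpt definition (``no other tree subsumes $T$'') could leave \emph{no} \rpt generating $Q$ at all and would make the descent ill-founded. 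Your potential $\mu(T)=\sum_{p\in paths(T)} len(p)$ resolves both issues cleanly: it is non-increasing along subsumption, and equality forces the witnessing onto map to be the identity on a common path bag, hence mutual subsumption; so strict subsumption strictly decreases $\mu$ and the descent terminates at a tree minimal under strict subsumption, which is exactly the reading of ``representative'' under which the theorem (and the rest of Section~\ref{sec:resultPresentation}) is true. In short, your argument buys a rigorous termination proof and pins down the intended definition, at the cost of one extra lemma; the paper's version is shorter but leaves that gap implicit.
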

\begin{proof}
The forward direction is trivial.  For the reverse, recall that if a grammar $G$ matches $Q$, then $Q$ must be contained in the leaves of some parse tree for
$G$. Then it must also be contained in the leaves of an \rpt, since an \rpt generates the same terminal set as the trees that it subsumes.
\end{proof}

\begin{lemma}
  If a parse tree is representative, then all its subtrees are also
  representative. 
\label{lm:bottomup}
\end{lemma}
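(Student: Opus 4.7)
The plan is to proceed by contradiction. Suppose $T$ is a representative parse tree of $G$ but that some subtree $T'$ of $T$, rooted at an internal node $v$, is not representative. By definition of \rpt, there then exists a parse tree $T''$ of $G$ with $head(root(T'')) = head(root(T'))$ and $T'' \prec T'$, witnessed by an onto map $\sigma : paths(T') \to paths(T'')$ such that $\sigma(q) \prec q$ for every $q \in paths(T')$.

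The key construction is to form a new parse tree $\hat{T}$ from $T$ by splicing out $T'$ at $v$ and inserting $T''$ in its place. Since $head(root(T'')) = head(root(T'))$, this splice respects the grammar and yields a well-formed parse tree of $G$ with $head(root(\hat{T})) = head(root(T))$. My aim is then to show $\hat{T} \prec T$, which contradicts representativeness of $T$.

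To exhibit the required onto subsuming map $\tau : paths(T) \to paths(\hat{T})$, I would split $paths(T)$ into two classes: paths $p$ that do not pass through $v$ (these appear verbatim in $\hat{T}$, so set $\tau(p) = p$ and note $p \prec p$ trivially), and paths that do pass through $v$, which decompose uniquely as $p = \pi \cdot q$ where $\pi$ is the production sequence from $root(T)$ down to $v$ and $q \in paths(T')$. For these I set $\tau(p) = \pi \cdot \sigma(q)$, which is a path of $\hat{T}$ because $T''$ sits at $v$ in $\hat{T}$. The onto property of $\tau$ follows from the onto property of $\sigma$ (every path in $\hat{T}$ through $v$ is hit) together with the identity on the untouched portion; the subsumption $\tau(p) \prec p$ follows from $\sigma(q) \prec q$ together with the easy fact that prepending a common prefix $\pi$ preserves sub-sequence order, so $\pi \cdot \sigma(q) \prec \pi \cdot q$.

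The main obstacle is simply being careful with the bag semantics of $paths(\cdot)$ and with the fact that $T'$ may occur only once as a subtree of $T$ at position $v$, while multiple root-to-leaf paths of $T$ can share the prefix $\pi$; Definition~\ref{def:parseTreeSubsumption} is phrased as an onto map on bags, so multiplicities must line up, and this is exactly why we extend $\sigma$ path by path rather than by any coarser tree-level operation. Once the map $\tau$ is in place, $\hat{T} \prec T$ follows directly from Definition~\ref{def:parseTreeSubsumption}, contradicting the hypothesis that $T$ is representative and establishing the lemma.
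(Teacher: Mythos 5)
Your proposal is correct and follows essentially the same route as the paper: replace the offending subtree by a tree that subsumes it, and show the spliced tree subsumes $T$, contradicting representativeness. The only difference is one of detail --- the paper dismisses the final step with ``it is easy to see that $T'\prec T$,'' whereas you explicitly construct the onto map on $paths(T)$ by extending $\sigma$ with the common prefix $\pi$, which is exactly the verification the paper leaves implicit.
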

\begin{proof}
Proof is by contradiction. Let $T$ be an \rpt. Suppose a subtree of $T$ denoted by $T_{sub}$ is subsumed by tree $T'_{sub}$. Let $T'$ be the tree obtained
from $T$ by replacing $T_{sub}$ with $T'_{sub}$. It is easy to see that $T'\prec T$, which contradicts that $T$ is an \rpt tree.
\end{proof}

The converse is not true, see $T_2$ in Figure~\ref{fig:rep} for a
  counter-example. One can verify that the two child subtrees of $T_2$ are \rpts. $T_2$ however is subsumed by $T_1$ and hence is not representative.

\begin{lemma}
  Given parse trees $T$ and $T'$, if $T \prec T'$ then $height(T)\leq
  height(T')$.
\label{lm:height}
\end{lemma}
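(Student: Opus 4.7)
The plan is to reduce the height inequality to a statement about path lengths, since a parse tree's height is determined by its longest root-to-leaf path. Specifically, I would first argue that for any parse tree $T$, $height(T)$ equals (up to a fixed offset depending on how height is measured) $\max_{p \in paths(T)} len(p)$. This is immediate from the definition of a parse tree: the longest root-to-leaf path realizes the tree's height.

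Next, I would exploit the two ingredients supplied by the hypothesis $T \prec T'$. First, path subsumption is length-monotone: if $p \prec p'$, i.e., $p$ is a subsequence of $p'$, then $len(p) \leq len(p')$. Second, the mapping $f\colon paths(T') \to paths(T)$ guaranteed by Definition~\ref{def:parseTreeSubsumption} is \emph{onto}, so every path of $T$ is the image under $f$ of at least one path of $T'$.

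Putting these together, let $p^\ast \in paths(T)$ be a path realizing $height(T)$, i.e., $len(p^\ast) = \max_{p \in paths(T)} len(p)$. By surjectivity of $f$, there is some $q \in paths(T')$ with $f(q) = p^\ast$, and by the definition of subsumption $p^\ast \prec q$. Length-monotonicity then yields
\[
  height(T) \;=\; len(p^\ast) \;\leq\; len(q) \;\leq\; \max_{p' \in paths(T')} len(p') \;=\; height(T'),
\]
which is the desired conclusion. There is essentially no obstacle here; the only subtle point is to invoke the \emph{onto} clause of Definition~\ref{def:parseTreeSubsumption} (rather than trying to map paths of $T$ into $T'$ directly), since it is the surjectivity that guarantees the longest path of $T$ is covered by some path of $T'$.
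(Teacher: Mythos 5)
Your proof is correct and follows essentially the same route as the paper's: take a longest path $p^\ast$ of $T$ realizing $height(T)$, use the surjectivity of the subsumption mapping to find a path $q$ of $T'$ with $p^\ast \prec q$, and conclude via length-monotonicity of subsequences that $height(T) = len(p^\ast) \leq len(q) \leq height(T')$. You merely spell out the appeal to the \emph{onto} clause more explicitly than the paper does; there is no substantive difference.
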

\begin{proof}
 Let $p_{max}$ be one of the longest paths in $paths(T)$. Then $height(T)= len(p_{max})$.  Since $T\prec T'$, for any path $p\in paths(T)$, there is a path
$p'\in paths(T')$ s.t. $p\prec p'$. Let $p'$ be a path in $paths(T')$ that $p_{max}$ subsumes. Note that $len(p)\leq len(p')$ if $p\prec p'$. Then
$height(T)=len(p_{max})\leq len(p')\leq height(T')$. 
\end{proof}

Consider trees $T_3$ and $T_4$ in Figure~\ref{fig:rep}.  These trees
are of the same height, yet $T_3 \prec T_4$.  

\begin{lemma} If $T \prec
T'$ and $height(T) = height(T')$, then $T$ and $T'$ must be rooted at
the same production.
\end{lemma}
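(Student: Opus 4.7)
The plan is to leverage the path-length inequality from Lemma that accompanies the height argument, and push it to equality to force two extremal paths to coincide. Concretely, I would fix a longest root-to-leaf path $p_{\max} \in paths(T)$, so that $len(p_{\max}) = height(T)$. By the definition of $T \prec T'$, the mapping from $paths(T')$ onto $paths(T)$ is surjective, so there exists some $p' \in paths(T')$ that is mapped to $p_{\max}$, giving $p_{\max} \prec p'$, and therefore $len(p_{\max}) \le len(p')$.

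Next I would combine this with the height hypothesis. Since $len(p') \le height(T')$ and $height(T) = height(T')$, chaining gives
\[
height(T) \;=\; len(p_{\max}) \;\le\; len(p') \;\le\; height(T') \;=\; height(T),
\]
so every inequality is an equality, and in particular $len(p_{\max}) = len(p')$.

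The key step is then the observation that $p_{\max}$ is a sub-sequence of $p'$ of the same length, which forces $p_{\max} = p'$ as sequences of productions. Since by construction every path in a parse tree begins with the root production, the first symbol of $p_{\max}$ is $root(T)$ and the first symbol of $p'$ is $root(T')$, so equality of the two sequences yields $root(T) = root(T')$, as required.

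The main obstacle, and really the only subtlety, is guaranteeing that path subsumption (a sub-sequence relation, not a prefix relation) is strong enough to align the first productions. A priori, $p_{\max}$ could match $p'$ while skipping the initial production of $p'$, but the height equality rules this out because a strict sub-sequence is shorter than its parent. Once this length-forcing argument is in place, the conclusion is immediate, and no extra structural reasoning about the trees themselves is needed.
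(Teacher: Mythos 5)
Your proof is correct and follows essentially the same route as the paper's: pick a longest path $p_{\max}$ in $T$, use surjectivity of the subsumption mapping to obtain $p' \in paths(T')$ with $p_{\max} \prec p'$, force $len(p_{\max}) = len(p')$ from the height equality so that the sub-sequence relation collapses to $p_{\max} = p'$, and read off the common root production from the first symbol. No gaps.
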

\begin{proof}
 Let $p_{max}$ be one of the longest paths in $T$. Since $T\prec T'$, there exists a path $p'\in paths(T')$ s.t. $p_{max}\prec p'$. Thus
$height(T)=len(p_{max})\leq len(p')\leq height(T')$.  Since $height(T)=height(T')$, $len(p_{max})= len(p')$. Recall that $p_{max}\prec p'$. Thus $p_{max}=p'$.
Since $p_{max}~(p')$ starts with the root production of $T~(T')$, $T$ and $T'$ are rooted at the same production.
\end{proof}

\begin{theorem}\label{the:subsumptionComplexity}
  Given trees $T$ and $T'$, the time complexity of checking if
  $T \prec T'$ is polynomial in tree size.
\end{theorem}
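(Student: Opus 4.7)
The plan is to reduce the subsumption check to a standard bipartite matching feasibility problem. As preprocessing, I would first verify in constant time that $head(root(T)) = head(root(T'))$; otherwise report false. Next, enumerate $paths(T)$ and $paths(T')$. Since each path corresponds to a distinct leaf, each bag has size at most the number of nodes in the tree, and each individual path has length bounded by the tree's height. Hence the pairwise sub-sequence test $p \prec p'$ from Definition~\ref{def:pathSubsumption} can be done with the standard two-pointer scan in time linear in the combined path lengths, and the complete compatibility table is computable in polynomial time in $|T|+|T'|$.

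The core of the argument is to build a bipartite graph $H$ with $paths(T')$ on the left, $paths(T)$ on the right (each path occurrence treated as a distinct vertex, to accommodate the bag semantics), and an edge between $p' \in paths(T')$ and $p \in paths(T)$ whenever $p \prec p'$. I would then prove the following equivalence: an onto mapping $f:paths(T') \to paths(T)$ satisfying the constraint of Definition~\ref{def:parseTreeSubsumption} exists if and only if (i) every left vertex of $H$ has at least one neighbor, and (ii) $H$ admits a matching that saturates every right vertex. The forward direction follows by restricting $f$ to one preimage per right vertex. For the reverse, fix a right-saturating matching $\mu$; this supplies an injection $p \mapsto \mu(p)$ covering $paths(T)$, and condition (i) lets me freely route each still-unassigned $p'$ to one of its compatible right neighbors, producing the required surjection.

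Condition (i) is a linear scan of $H$. Condition (ii) is a standard bipartite matching question of the desired cardinality, solvable in polynomial time by, e.g., Hopcroft--Karp in $O(|E(H)| \sqrt{|V(H)|})$. Chaining these steps yields an overall polynomial-time algorithm in $|T|+|T'|$.

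The main obstacle is getting the equivalence in the second paragraph fully correct, rather than the algorithmic piece itself. Specifically, one must confirm that greedily dispatching the unmatched left vertices after fixing $\mu$ never interferes with coverage on the right side, and that the bag (not set) nature of $paths(T)$ and $paths(T')$ is faithfully captured. I would handle the latter by treating duplicate paths as independent vertices of $H$, so that feasibility on bags reduces cleanly to feasibility on the resulting multigraph, and the former by noting that once $\mu$ saturates the right side, any further assignment from a left vertex merely adds redundancy to an already-covered right vertex and cannot invalidate the mapping.
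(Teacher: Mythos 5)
Your proposal is correct and follows essentially the same route as the paper's proof: both reduce the onto-mapping condition of Definition~\ref{def:parseTreeSubsumption} to (i) every path of $T'$ having at least one subsuming path in $T$ and (ii) a bipartite matching saturating $paths(T)$, computed via a polynomial max-flow/matching algorithm (the paper uses Ford--Fulkerson on a unit-capacity network, you use Hopcroft--Karp, which is immaterial to the polynomial bound). The only cosmetic difference is that you make the root-head check and the bag-as-distinct-vertices convention explicit, which the paper leaves implicit.
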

\begin{proof}
We prove the theorem by giving a polynomial algorithm of checking if $T\prec T'$. 

The algorithm starts by computing a mapping $f:paths(T')\rightarrow paths(T)^*$ such that for any path $p'\in paths(T')$, $f(p')\ni p$ iff $p\prec p'$. Note that if $\exists p'\in paths(T'),f(p')=\emptyset$ then $T\not\prec T'$. 

 We then build a flow network s.t. the network has maximum flow $|paths(T)|$ iff there exists a {\bf surjective} mapping $g:paths(T')\rightarrow
paths(T)$ s.t. if $g(p')=p$, $p\prec p'$. The flow network is built as follows.  Let $N=(V,E)$ be a network with $s,t$ being the source and the sink of $N$,
respectively. For each path $p\in paths(T)$ $(p'\in paths(T'))$, there is a node $p$ $(p')\in V$. For each $p\in paths(T)$, there is an edge from $p$ to sink
$t$ in $E$. For each $p'\in paths(T')$, there is an edge from source $s$ to $p'$ in $E$ and an edge from $p'$ to $p$ for any $p\in f(p')$. Every edge has a
capacity of $1$. It is easy to see that $N$ has maximum flow $|paths(T)|$ iff such $g$ exists.

It is easy to see that $T\prec T'$ iff $\forall p'\in paths(T'),f(p')\not=\emptyset$ and $g$ exists.

Note that given two paths $p,p'$, checking if $p\prec p'$ can be done in time $len(p)+len(p')$, i.e. $O(height(T'))$. It takes
$O(|paths(T)|*|paths(T')|*height(T'))$ to compute $f$. In the worst case, the size of $f$ could be $|paths(T)|*|paths(T')|$.  It takes
$O(|paths(T)|*|paths(T')|)$ to build the flow network. Using the Ford-Fulkerson algorithm \cite{Kleinberg:2005:AD:1051910}, it takes $O(|paths(T)|^2*|paths(T')|)$
to compute the maximum flow. In total, the algorithm has a time complexity $O(|paths(T)| *
|paths(T')| * height(T') + |paths(T)|^2 * |paths(T')|)$.
\end{proof}

\eat{
\begin{lemma}
 Given a terminal {\bf bag} $s$ of a
  grammar, the height of parse trees $T$ in which $leaves(T)=s$
  is bounded by $|s|\times(d(G)+1) + |s|$.
\end{lemma}
\begin{proof}
Proof is very similar to the proof of Lemma~\ref{part2}.
\end{proof}
}

\begin{lemma}
  Given a terminal {\bf set} of a grammar,
  the height of \rpts that generate it may be
  exponential in grammar size.
\end{lemma}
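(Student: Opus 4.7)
The lemma is an existence claim, so the natural plan is to exhibit a family of grammars $\{G_n\}$ of size $O(n)$ together with terminal sets $X_n$ and an \rpt $T_n$ of $G_n$ generating $X_n$ whose height is $\Omega(2^n)$. In contrast to the bag case, where $leaves(T)$ is rigidly prescribed and a polynomial height bound follows by a counterpart of Lemma~\ref{part2}, generating a \emph{set} lets each terminal appear with any multiplicity, and I will exploit this extra freedom to build tall \rpts that no shorter tree can subsume.

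First, I would design $G_n$ so that each terminal $a_i\in X_n$ is producible only through a specific, forced sequence of mutually recursive productions. The grammar would have a start variable $S$ together with auxiliary variables arranged as a cascade of alternations and recursions, structured so that (i) reaching $a_i$ in any parse tree forces a prescribed chain of production firings, and (ii) placing the terminal $a_{i+1}$ anywhere in the tree forces the derivation to re-enter and fully unroll a fresh copy of the chain that yields $a_i$. This doubling-in-depth effect, reminiscent of grammars whose shortest derivations are known to blow up exponentially in grammar size, makes the $a_n$-path through $T_n$ of length $\Omega(2^n)$. Second, I would identify the specific parse tree $T_n$ realizing this cascade and check directly that $leaves(T_n)\supseteq X_n$ and $height(T_n)=\Omega(2^n)$.

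The third and hardest step is to show that $T_n$ is representative. By Definition~\ref{def:parseTreeSubsumption}, any candidate $T'\prec T_n$ must provide, for every $p'\in paths(T_n)$, a path $p\in paths(T')$ with $p\prec p'$, and the mapping must be onto $paths(T')$. I would argue by contradiction: if such a $T'$ exists, then for each $a_i\in X_n$, the tree $T'$ has at least one root-to-$a_i$ path, and, by construction of $G_n$, that path must contain the entire cascade of productions needed to reach $a_i$. The interleaving of the cascades across different $a_i$ then forces $T'$ to host a path of length $\Omega(2^n)$, contradicting $T'\prec T_n$ with $T'\ne T_n$. I expect the main obstacle to lie in this last contradiction, specifically in ruling out clever collapsings in which $T'$ reuses a single spine to reach many $a_i$ at once; the construction must be tight enough that every pair of $a_i$-paths has disjoint "signatures" of forced productions, so that no onto mapping to a shorter $T'$ is possible. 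Once these signatures are established, the exponential lower bound on the height of $T_n$ is immediate.
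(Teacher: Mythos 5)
Your overall plan---exhibit a linear-size family $G_n$ with an \rpt of exponential height---is the right shape, but the mechanism you propose for producing the height is the step that fails. You want a single root-to-leaf path of length $\Omega(2^n)$ that is \emph{forced}, i.e.\ every derivation reaching $a_{i+1}$ must ``re-enter and fully unroll a fresh copy'' of the chain for $a_i$. Context-freeness rules this out: once a variable is reached, the subderivation below it is independent of the history, so the set of productions that \emph{must} occur on a root-to-$a_i$ path is at most a simple chain of length bounded by $|R|$, which is linear in grammar size. Any path longer than that repeats a recursive production without being forced to, and such repetition along a single spine that generates no new distinct paths is exactly the redundancy that parse-tree subsumption (Definition~\ref{def:parseTreeSubsumption}) collapses---compare $T_1\prec T_2$ in Figure~\ref{fig:rep}, where the doubled firing of $r_1$ is mapped away by the onto mapping. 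So the ``clever collapsings'' you flag as the main obstacle are not an edge case to be engineered around; they defeat any construction whose exponential height lives inside one forced chain. The tree you would build either cannot exist or is not representative.

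The paper's construction draws the exponential from a different place: a linear-size diamond cascade $A\Rightarrow B\mid C$, $B\Rightarrow D$, $C\Rightarrow D$, $D\Rightarrow E\mid F$, $E\Rightarrow a$, $F\Rightarrow a$ admits $2^{k}$ pairwise incomparable \emph{simple derivation sequences} from $A$ to the single terminal $a$, each of only linear length. The linear recursion $S\Rightarrow AS\mid s$ is then fired once per such sequence, with each instance of $A$ realizing a different one. The spine has exponentially many levels, so the tree is exponentially tall, yet no firing is redundant: each level contributes a root-to-leaf path with a production signature not subsumed by any other, which is what blocks an onto subsuming map onto a shorter tree. In short, the height comes from the \emph{number} of non-redundant recursion firings (justified by exponentially many distinct short paths), not from the length of any single forced path; your proposal is missing this counting idea, and without it the representativeness argument in your third step cannot be completed.
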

\begin{proof}
Consider the grammar:

\begin{tabular}{lllllllllll}
S$\rightarrow$&AS~|~s&&A$\rightarrow$&B~|~C&&B$\rightarrow$ &D\\
C$\rightarrow$&D&&D$\rightarrow$&E~|~F&&E$\rightarrow$ &a&&F$\rightarrow$&a
\end{tabular}

There are an exponential number of different paths (in the grammar size)  for $A$ to derive $\{a\}$, precisely $2^2=4$. By firing
$S\rightarrow AS$ once, we get one instance of $A$ and the
height increases by $1$. Consider a tree that is derived by firing $S\rightarrow AS$ four times where instances of $A$ derive $\{a\}$ distinctly. One can verify
that the tree is an \rpt, and that the height of the tree is exponential in the grammar size.
\end{proof}


\eat{
\begin{proof}
Suppose $T\prec T'$.  Given a node $u$ of tree $T$, let $r(u)$ be the production $u$ represents. Denote by $V(T)$ the set of nodes of $T$.  We now
show that we can find a mapping $g:V(T)\rightarrow V^*(T')$  where (1)  $\forall v'\in g(v), r(v)=r(v')$; and (2) for any two $u,v\in V(T)$, $g(u)\cap
g(v)=\emptyset$;
and (3) for each path $p=v_1\ldots v_n$ in $T$, there exists a path $p'=v_1'\ldots v_m'$ in $T'$ s.t. $\forall i=1\ldots n,  \exists  v'_{k_i}\in g(v_i) (0<
k_i\leq m)$ s.t. $v'_{k_i}$ is an ancestor of $v'_{k_j}$ if $k_i<k_j$. Observe that if $g$  exists, condition (2) guarantees $\rho(T)\geq\rho(T')$. e.g.
Revisiting trees $T_1,T_2$ in Figure~\ref{fig:rep}, one such mapping $g$ is shown below. For each node $u$ in $T_1$, $g(u)$ is the set of nodes in $T_2$ that
are underlined by the same mark as $u$.

\begin{figure}[h]
\centering
\begin{tikzpicture}[node distance=20]
\node(T1) {$T_1$};
\draw node[circle,draw, inner sep=0.5pt,minimum size=1pt, below of=T1] (r1) {$r_1$};
\node(center)[below of=r1]{};
\node(r3) [left of=center] {$\underline{r_3}$};
\node(r4) [right of=center] {$\uwave{r_4}$};
\node(s) [below of=r3] {$\uuline{s}$};
\node(p1) [below of=s,node distance=15] {$p_1$};
\node(a)[below of=r4]{$\dotuline{a}$};
\node(p2) [below of=a,node distance=15] {$p_2$};

\draw[-] (r1)--(r3);
\draw[-] (r1)--(r4);
\draw[-] (r3)--(s);
\draw[-] (r4)--(a);

\node(T2)[right of=T1, node distance=85] {$T_2$};
\node(1r1) [circle,draw, inner sep=0,minimum size=1, below of=T2] {$r_1$};
\node(1center)[below of=1r1]{};
\node(1r11)[left of=1center] {$r_1$};
\node(1r4)[right of=1center]{$\uwave{r_4}$};
\node(1center1)[below of=1r11] {};
\node(1r3) [left of=1center1] {$\underline{r_3}$};
\node(1r41) [right of=1center1] {$\uwave{r_4}$};
\node(1s) [below of=1r3] {$\uuline{s}$};
\node(1p1) [below of=1s,node distance=15] {$p'_1$};
\node(1a1)[below of=1r41]{$\dotuline{a}$};
\node(1p2) [below of=1a1,node distance=15] {$p'_2$};
\node(1a)[below of=1r4]{$\dotuline{a}$};
\node(1p3) [below of=1a,node distance=15]{$p'_3$};

\draw[-] (1r1)--(1r11);
\draw[-] (1r1)--(1r4);
\draw[-] (1r11)--(1r3);
\draw[-] (1r11)--(1r41);
\draw[-] (1r4)--(1a);
\draw[-] (1r3)--(1s);
\draw[-] (1r41)--(1a1);
\end{tikzpicture}
\end{figure}

Since $T\prec T'$, let $f$ be one onto mapping from $paths(T')$ to $paths(T)$ where $p\prec p'$ if $f(p')=p$,   and $f^{-1}:paths(T)\rightarrow paths(T')^*$
be the inverse function of $f$. We now show that we can find such $g$ when given $f$.

Given a node $v$ of $T$,  let $paths(v)$ be the bag of paths that go through $v$. Let $f^{-1}(paths(v))=\bigcup\limits_{p\in paths(v)}f^{-1}(p)$. Let
$T_{root}$ be the root node of $T$. $g$ maps $T_{root}$ to the set of nodes of $T'$ which are the highest common ancestors of paths $f^{-1}(paths(T_{root}))$
that represent production $r(T_{root})$. e.g. root of $T_1$ is mapped to root of $T_2$.
 For each node $u$ in $T$ whose ancestors have been mapped, $g$ maps $u$ to the set of nodes in $T'$ that are highest
common ancestors of  $f^{-1}(paths(u))$ while are descendants of at least one of  $g(parent(u))$ and represent $r(u)$. e.g. Consider $T_1,T_2$ above. There is
only one onto mapping $f$. Let $v$ be the node representing $r_4$ in $T_1$. Then $f^{-1}(paths(v))=\{p'_2,p_3'\}$. $g(parent(v))$ is the root of $T_2$. Thus
$g(v)$ is the set of nodes in $T_2$ that are underlined with tilde.

Note that the process guarantees for any $u,v\in paths(T)$ that $u$ is an ancestor of $v$, $g(u)\cap g(v)=\emptyset$. For each $u,v$ that neither of $u,v$
is an ancestor of each other, if $f^{-1}(paths(u))\cap f^{-1}(paths(u))=\emptyset$ then $g(u)\cap g(v)=\emptyset$. Now consider $u,v$ that $g(u)\cap
g(v)=\emptyset$.

Now consider a set of nodes $N$ where for
any $u,v\in N$, $f^{-1}(paths(u))\cap f^{-1}(paths(u))\not=\emptyset$ and for any $u\not\in N$, $\not\exists v\in N$, s.t. $f^{-1}(paths(u))\cap
f^{-1}(paths(u))\not=\emptyset$. We now prove that $|N|\leq |\bigcup\limits_{u\in N} g(u)|$

It is clear that $g$ is a one-to-one mapping iff
$paths(T)=paths(T')$.
\end{proof}}

\begin{lemma}
A tree may be subsumed by two different \rpts.
\end{lemma}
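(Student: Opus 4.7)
The plan is to prove this lemma by explicit construction, leveraging the path-ambiguity phenomenon already exhibited in Figure~\ref{fig:ambiguity}, which presents a grammar with two structurally distinct parse trees $T_1, T_2$ satisfying $T_1 \ne T_2$ yet $paths(T_1) = paths(T_2)$.

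First, I would note that by Definition~\ref{def:parseTreeSubsumption}, parse-tree subsumption is determined entirely by the existence of an onto, subsequence-respecting map between path bags. Because $paths(T_1) = paths(T_2)$, for every parse tree $T^\ast$ the conditions $T_1 \prec T^\ast$ and $T_2 \prec T^\ast$ are equivalent: the very same onto witness serves for both. To exhibit a nontrivial $T^\ast$, I would wrap $T_1$ inside one extra firing of the root production $r_1: S \rightarrow \{S,S\}$ of Figure~\ref{fig:ambiguity}, so that $T^\ast$ has an outer $r_1$ whose two $S$-children are $T_1$ itself and, say, the $r_2$-rooted subtree $r_2(r_3, r_5)$. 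Each path of $T^\ast$ then prepends a single $r_1$ to a path of one of its two children, and the map from $paths(T^\ast)$ to $paths(T_1)$ that strips this leading $r_1$ is onto and respects the subsequence order; hence $T_1 \prec T^\ast$ and therefore also $T_2 \prec T^\ast$.

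The main obstacle is verifying that $T_1$ and $T_2$ are both \rpts. I would argue by case analysis on the root production of any hypothetical proper subsumer $T'$ of $T_1$. A candidate $T'$ rooted at $r_2$ has exactly two root-to-leaf paths (one per $A$- and $B$-child), and these two paths cannot carry an onto subsumption map from the four distinct paths of $T_1$, which reach three distinct terminals $a_1, a_2, a_3$. A candidate $T'$ rooted at $r_1$ with a strictly smaller path bag must omit at least one of $r_3, r_4, r_5, r_6$ and so fails to subsume the corresponding path of $T_1$. Thus no parse tree with a strictly smaller path bag subsumes $T_1$, and symmetrically for $T_2$, so each qualifies as an \rpt.

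Combining the two steps yields the claim: the witness tree $T^\ast$ is subsumed by two distinct \rpts $T_1$ and $T_2$.
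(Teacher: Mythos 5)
Your wrapping construction (an extra firing of $r_1$ over one of the candidate representatives, with the strip-the-leading-$r_1$ onto map) is sound and is essentially the same device the paper uses. The flaw is in your choice of the two ``different'' \rpts. The trees $T_1,T_2$ of Figure~\ref{fig:ambiguity} satisfy $paths(T_1)=paths(T_2)$, and since Definition~\ref{def:parseTreeSubsumption} depends only on the path bags (plus the root head), this equality immediately gives $T_1\prec T_2$ \emph{and} $T_2\prec T_1$. So you are caught either way the \rpt definition is read. Read literally (``no other parse tree subsumes $T$''), each of $T_1,T_2$ is subsumed by the other, a distinct tree, so \emph{neither is an \rpt} and your witness $T^\ast$ is not subsumed by any \rpt at all. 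Read the way the paper actually operates (see the proofs of Lemma~\ref{lem:rpG} and Theorem~\ref{the:non-recur}, which take $paths(T)=paths(T')$ as the conclusion establishing representativeness), trees with equal path bags are identified, and then $T_1$ and $T_2$ are not ``two different \rpts'' in any substantive sense --- they are the same representative written as two tree layouts, which is exactly the ambiguity Figure~\ref{fig:ambiguity} was introduced to flag. Under neither reading does your example establish the lemma's intended content.

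The paper avoids this by choosing a grammar ($S\rightarrow SS \mid AAA$, $A\rightarrow a_1\mid a_2$) in which the two representatives $T_1=\{r_2r_3a_1,r_2r_3a_1,r_2r_4a_2\}$ and $T_2=\{r_2r_3a_1,r_2r_4a_2,r_2r_4a_2\}$ have genuinely different path bags (indeed different terminal bags), hence neither subsumes the other --- the multiplicity mismatch defeats any onto map --- while both subsume the wrapper $T_3=\{r_1T_1,r_1T_2\}$. To repair your proof you would need to replace your pair by two subsumers whose path bags differ; your secondary argument that no strictly smaller $r_1$- or $r_2$-rooted tree subsumes them (which is essentially right, though it silently excludes deeper $r_1$-nestings, whose paths carry two occurrences of $r_1$ and so cannot be subsequences of any path of $T_1$) would then carry over.
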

\begin{proof}
Consider grammar:

 \begin{tabular}{lllll}
$r_1: S\rightarrow$ & SS&&$r_2: S\rightarrow$ &AAA\\
$r_3: A\rightarrow$ & $a_1$&&$r_4: A\rightarrow$ & $a_2$
\end{tabular}

There are three trees $T_1=\{r_2r_3a_1,r_2r_3a_1,r_2r_4a_2\}$,
$\\T_2=\{r_2r_3a_1,r_2r_4a_2,r_2r_4a_2\}$, $T_3=\{r_1T_1,r_1T_2\}$
where $T_1$, $T_2$ are representative, and $T_1\prec T_3$, $T_2\prec
T_3$.
\end{proof}

The rest of this section is devoted to proving that all parse trees of a non-recursive grammar are representative (Theorem~\ref{the:non-recur}),  and that representative parse trees are at least as probable as any parse trees that they subsume (Theorem ~ \ref{the:linear-recur}, for linear-recursive grammars).  This forms the basis of
our result presentation algorithm, Algorithm~\ref{algo:repTrees}.

\subsubsection{Theorem~\ref{the:non-recur}}

We first extend parse tree subsumption to bags (Definition~\ref{def:bag_subsumption}) and forests (Definitions~\ref{def:forest} and \ref{def:forest_subsumption}). To obtain Theorem~\ref{the:non-recur}, we first consider a special case of non-recursive grammars whose productions have distinct symbols on the right-hand side (Lemma~\ref{lem:rpSub} and \ref{lem:rpG}) and then generalize to non-recursive grammars (Lemma~\ref{lem:frtSub}, \ref{lem:frtSize} and \ref{lem:frtG}).

\begin{definition}\label{def:bag_subsumption}
{\bf (Bag Subsumption)} We say a bag of paths {\em subsumes} another bag of paths denoted by $Bag\prec Bag'$ iff there exists an onto function $f:Bag'\rightarrow Bag$ s.t. $p\prec p'$ if
$f(p')=p$. 
\end{definition}

We say $T$ is a {\em tree of production} $r$ if $T$ is rooted at production $r$, and that $T$ is a {\em tree of symbol} $M$ if $T$ is rooted at a production of $M$. 

\begin{lemma}\label{lem:rpPath}
 Given a non-recursive grammar $G$, and any two of its trees of the same symbol, say $T,T'$, if $\exists {\cal S}'\subseteq paths(T')$ s.t. $paths(T)\prec {\cal
S}'$ then $T,T'$ must be rooted at the same production.
\end{lemma}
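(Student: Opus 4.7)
The plan is to argue by contradiction, using the hypothesis of non-recursion to force the two roots to coincide. Let $T$ be rooted at production $r$ and $T'$ at production $r'$. Since both are trees of the same symbol, $head(r) = head(r') = M$. Suppose for contradiction that $r \neq r'$.

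By the subsumption hypothesis $paths(T) \prec \mathcal{S}'$, Definition~\ref{def:bag_subsumption} yields an onto map $f : \mathcal{S}' \to paths(T)$ with $f(p') \prec p'$ for every $p' \in \mathcal{S}'$, i.e.\ $f(p')$ is a subsequence of $p'$. Pick any $p' \in \mathcal{S}'$; such a $p'$ exists because $paths(T)$ is nonempty and $f$ is onto. Every path in $paths(T)$ begins at the root of $T$, so $f(p')$ begins with $r$; likewise $p'$ begins with $r'$. Since $f(p')$ is a subsequence of $p'$, its first entry $r$ must occur at some position $i$ of $p'$. Because the first entry of $p'$ is $r' \neq r$, we have $i \geq 2$.

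Now, a root-to-leaf path in a parse tree is, by construction, a chain of productions where the body of each contains the head of the next. Thus the prefix $p'_1 p'_2 \cdots p'_i$ is a derivation sequence in the sense of the Derivation Sequence definition that starts from $head(r') = M$ and terminates by firing production $r$, which itself has head $M$. This exhibits a derivation from $M$ to $M$, making $M$ recursive in $G$ and contradicting the non-recursiveness hypothesis. Hence $r = r'$, as required.

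The only non-routine step is the transition from ``$r$ appears at a position below $r'$ on a single root-to-leaf path'' to ``$M$ derives $M$ in $G$.'' This is immediate from the definition of a parse tree (consecutive productions on a path are linked through head-body containment), but it is the logical pivot of the argument and should be stated explicitly; the remaining manipulation is just unpacking the subsumption and subsequence definitions.
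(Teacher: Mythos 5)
Your proof is correct and follows essentially the same route as the paper's: extract a single pair $p \prec p'$ from the subsumption, note that $r$ (the first entry of $p$) must occur somewhere on $p'$, and use non-recursiveness to rule out a second production with head $M$ appearing strictly below $r'$ on that path. The paper states this last step as ``any path contains at most one production of a given symbol, else the grammar is recursive,'' which is exactly the pivot you identify, so the two arguments coincide.
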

\begin{proof}
Note that since the grammar is non-recursive, any path in its parse trees has at most one production of the same symbol (otherwise, the grammar is recursive). 
Let $T,T'$ be rooted at production $r,r'$, respectively. Since $paths(T)\prec {\cal S}'$, $\exists p'\in paths(T'), p\in paths(T)$ s.t. $p\prec p'$. Then
$r$ must appear somewhere in $p'$. Recall that $T,T'$ are two trees of the same symbol. Thus $r=r'$. 
\end{proof}

\begin{lemma}\label{lem:rpSub}
 Consider a non-recursive grammar in which each production has distinct symbols on the right-hand side. If $T,T'$ are two of its trees of the same
symbol, then $\not\exists{\cal S}'\subset paths(T')$ s.t. $paths(T)\prec {\cal S}'$.
\end{lemma}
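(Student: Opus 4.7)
The plan is to prove the lemma by strong induction on the height of $T'$, using Lemma~\ref{lem:rpPath} to pin down the root structure. By Lemma~\ref{lem:rpPath}, any subsumption $paths(T) \prec {\cal S}' \subseteq paths(T')$ between two trees of the same symbol forces $T$ and $T'$ to share a root production $r : M \rightarrow \alpha_1, \ldots, \alpha_n$. The distinct-RHS hypothesis guarantees $\alpha_1, \ldots, \alpha_n$ are pairwise distinct, so both $paths(T)$ and $paths(T')$ split cleanly according to which subtree a path descends into after firing $r$. Write $T_i, T'_i$ for these subtrees, and let $R_i = \{r\cdot p : p \in paths(T_i)\}$ and ${\cal S}'_i = {\cal S}' \cap \{r\cdot q' : q' \in paths(T'_i)\}$.

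The base case is immediate: if $T'$ is a single-terminal node then $T$ is the same node, $paths(T) = paths(T') = \{M\}$, and the only sub-bag of $\{M\}$ that supports an onto map back to $\{M\}$ is $\{M\}$ itself. For the inductive step, assume for contradiction that $paths(T) \prec {\cal S}' \subsetneq paths(T')$ via onto $f : {\cal S}' \rightarrow paths(T)$. Because ${\cal S}'$ is a strict sub-bag of $paths(T')$, there is some index $i^{*}$ for which $\{q' : r\cdot q' \in {\cal S}'_{i^{*}}\}$ is strictly contained in $paths(T'_{i^{*}})$.

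The crux is a sub-claim that $f$ respects the subtree partition, i.e., $f({\cal S}'_i) \subseteq R_i$ for each $i$. I would argue this by combining two structural facts that together require both non-recursion and distinct RHS. First, non-recursion forces every root-to-leaf path in any parse tree to have pairwise distinct head symbols (a repeated head would witness a derivation cycle), so if $p \prec q'$ then each production of $p$ is pinned to a unique position in $q'$ by its head, and $p$ and $q'$ must use the same production at every shared head. Second, distinct RHS makes sibling subtrees $T'_i, T'_j$ anchored by distinct symbols $\alpha_i \ne \alpha_j$, so the only way a path $q' \in paths(T'_i)$ could subsume a $T_j$-path (with $j \ne i$) is if $\alpha_i$ derives $\alpha_j$ through $T'_i$; but in that case a global counting argument, comparing how many $T$-paths enter each $T_j$ with the supply of candidate subsumers from ${\cal S}'$, shows that any onto $f$ can be taken to keep indices matched.

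Granted the sub-claim, $f$ restricts to an onto map ${\cal S}'_{i^{*}} \rightarrow R_{i^{*}}$; stripping the leading $r$ gives $paths(T_{i^{*}}) \prec \{q' : r\cdot q' \in {\cal S}'_{i^{*}}\}$, which is strictly contained in $paths(T'_{i^{*}})$. Since $T'_{i^{*}}$ has strictly smaller height than $T'$, the induction hypothesis applied to the pair $(T_{i^{*}}, T'_{i^{*}})$ supplies the contradiction. The hard part will be making the subtree-preservation sub-claim fully rigorous: non-recursion alone permits $\alpha_j$ to appear inside $T'_i$ via derivation, and it is the distinct-RHS hypothesis, together with the onto requirement forcing every $R_j$ to be covered, that ultimately prevents cross-subtree mappings from being consistent.
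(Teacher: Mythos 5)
There is a genuine gap, and you have located it yourself: the ``sub-claim'' that the onto map $f$ can be taken to respect the subtree partition ($f({\cal S}'_i)\subseteq R_i$) is exactly the hard content of the lemma, and your sketch of it --- ``a global counting argument \ldots shows that any onto $f$ can be taken to keep indices matched'' --- is not an argument. Cross-subtree subsumption is genuinely possible in a non-recursive grammar with distinct right-hand sides: if $\alpha_i$ derives $\alpha_j$, a path descending into $T'_i$ can contain a production of $\alpha_j$ and hence can subsume a path of $T_j$. So the partition-preservation claim cannot be read off from the head-pinning observation; it has to be fought for, and once you fight for it you are essentially reproving the lemma.

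The paper's proof avoids this claim entirely and instead runs a \emph{simultaneous} induction (on the height of $T$, though that difference is cosmetic) of Lemma~\ref{lem:rpSub} together with Lemma~\ref{lem:rpG}, which asserts that $T\prec T'$ forces $paths(T)=paths(T')$. The case analysis is then: if every subtree pair satisfies $T_{\alpha_i}\prec T'_{\alpha_i}$, the inductive instance of Lemma~\ref{lem:rpG} gives $paths(T)=paths(T')$, contradicting ${\cal S}'\subsetneq paths(T')$; if some $T_{\alpha_i}\nprec T'_{\alpha_i}$, then (by the inductive instance of Lemma~\ref{lem:rpSub}) the paths of $T_{\alpha_i}$ \emph{cannot} all be covered from within $T'_{\alpha_i}$, so a cross-mapping is \emph{forced}, giving ``$\alpha_j$ derives $\alpha_i$''; chasing this through the siblings produces a chain of derivability relations that either closes into a cycle (contradicting non-recursion) or leaves some subtree's paths uncovered (contradicting surjectivity). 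Note that this step also uses Lemma~\ref{lem:rpG} on the smaller subtrees to know that a subsuming subtree pair has \emph{equal} path bags, so that cross-mapped paths genuinely steal coverage. Your single-lemma induction has no access to this equality, so even if you repaired the sub-claim you would still be missing a needed inductive invariant. The fix is to strengthen the induction hypothesis to include the statement of Lemma~\ref{lem:rpG} and to replace ``cross-mappings can be avoided'' with ``cross-mappings are forced and lead to a contradiction.''
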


In other words, this lemma is saying that if $T,T'$ are of the same
symbol, then $\forall {\cal S}'\subset paths(T'), \forall g:{\cal S}'\rightarrow paths(T)$ (not necessarily a function), s.t. if $g(p')=p$ then $p\prec p'$, $g$
is not onto. Additionally, if
$T\nprec T'$,  then $\forall g:paths(T')\rightarrow paths(T)$ (not necessarily a function) s.t. $p\prec p'$ if
$g(p')=p$, $g$ is not onto.
The proof of this  will be given with the proof of Lemma~\ref{lem:rpG}. 

\begin{lemma}\label{lem:rpG}
 Given a non-recursive grammar $G$ in which each production has distinct symbols on the right-hand side, all its parse trees  are \rpts.
\end{lemma}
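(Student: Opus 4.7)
The plan is to prove Lemma~\ref{lem:rpG} jointly with Lemma~\ref{lem:rpSub} by strong induction on the height $h$ of the parse trees involved, since the proof of each lemma at height $h$ naturally uses the other at heights strictly less than $h$. The strengthened inductive hypothesis has two clauses: (a) for any two parse trees $T_1,T_2$ of $G$ of height at most $h-1$ with the same root symbol, $T_1\prec T_2$ implies $T_1=T_2$; and (b) for any two such trees, no proper ${\cal S}\subsetneq paths(T_2)$ satisfies $paths(T_1)\prec {\cal S}$. The base case is immediate (both trees are the same terminal leaf), and Lemma~\ref{lem:rpG} follows from clause (a) lifted to the appropriate height.

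For the inductive step, suppose $T'\prec T$ with witnessing onto map $g:paths(T)\to paths(T')$ satisfying $g(p)\prec p$. By Lemma~\ref{lem:rpPath}, $T$ and $T'$ share the same root production $r$. Since $body(r)$ consists of distinct symbols $M_1,\ldots,M_n$, the subtrees $T_i$ of $T$ and $T'_i$ of $T'$ rooted at each $M_i$ are uniquely determined, and paths partition as $paths(T)=\bigsqcup_i P_i$ and $paths(T')=\bigsqcup_i P'_i$ according to which subtree a path enters after $r$. The key structural observation is: if $p\in P_k$ and $g(p)\in P'_j$, then the second production of $g(p)$, whose head is $M_j$, must appear as a later production on $p$ inside $T_k$; since $G$ is non-recursive, every symbol occurring in that suffix of $p$ is $M_k$ or a proper descendant of $M_k$, forcing $M_j$ to be $M_k$ or a proper descendant of $M_k$ in the grammar's derivation order.

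The heart of the argument is to show that $g$ preserves the partition, i.e., $g(P_i)=P'_i$ for every $i$. I prove this by a secondary induction over the ``proper ancestor'' partial order on $\{M_1,\ldots,M_n\}$ (well-defined because $G$ is non-recursive). Pick a maximal element $M_{i^*}$; the structural observation forces $g^{-1}(P'_{i^*})\subseteq P_{i^*}$, and onto-ness of $g$ then yields $P'_{i^*}\subseteq g(P_{i^*})$. Stripping the common prefix $r$ exhibits a subset ${\cal S}\subseteq paths(T_{i^*})$ with $paths(T'_{i^*})\prec {\cal S}$; inductive clause (b) applied to $T'_{i^*}$ and $T_{i^*}$ rules out ${\cal S}\subsetneq paths(T_{i^*})$, so $g(P_{i^*})\subseteq P'_{i^*}$ and $g|_{P_{i^*}}$ is onto $P'_{i^*}$. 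Hence $T'_{i^*}\prec T_{i^*}$, and inductive clause (a) gives $T_{i^*}=T'_{i^*}$. Removing $P_{i^*}$ and $P'_{i^*}$ and iterating on the next maximal element of the remaining partial order yields $T_i=T'_i$ for every $i$, hence $T=T'$. Clause (b) at height $h$ is proved by the same blueprint, replacing the onto map on all of $paths(T)$ by one on ${\cal S}'$.

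The main obstacle is the mutual entanglement of the two clauses through the partial order on $\{M_1,\ldots,M_n\}$: without distinct right-hand sides the partition $\bigsqcup P_i$ cannot be defined cleanly, and without non-recursion the structural observation fails, allowing $g$ to shuffle paths across subtrees in ways that break the secondary induction. Threading the induction so that clause (b) at strictly smaller height supplies exactly what is needed to force $g$ to preserve each block at the current height is the delicate bookkeeping step.
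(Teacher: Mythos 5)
Your proposal is correct and takes essentially the same route as the paper: a simultaneous induction on height with Lemma~\ref{lem:rpSub}, using Lemma~\ref{lem:rpPath} to pin down a common root production and non-recursion to show the witnessing onto map cannot shuffle paths across sibling subtrees. The only difference is organizational: where the paper handles the inductive step by an exhaustive case analysis on which subtrees subsume which (cases $X_1$--$X_4$, then a chain-of-derivability contradiction for $n$-ary bodies), you peel off maximal elements of the derivability partial order on the body symbols, which is a cleaner formalization of the same contradiction.
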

\begin{proof}
We prove this lemma and Lemma~\ref{lem:rpSub} simultaneously. (For this lemma, we basically prove that if two trees $T,T'$ of $G$ are s.t. $T\prec T'$ then
$paths(T)=paths(T')$). Proof is by induction over the height of $T$. W.l.o.g., we assume each production has exactly two symbols on the right-hand
side.

{\bf Basis: } Consider $height(T)=1$, i.e. $T$ is rooted a production the right-hand side of which consists of terminals. 

{\em Lemma~\ref{lem:rpSub}:} We prove Lemma~\ref{lem:rpSub} by contradiction. Let $T'$ be a tree of the same symbol as $T$ and $paths(T)$
subsumes some proper subset of $paths(T')$. By Lemma~\ref{lem:rpPath}, $T,T'$ must be rooted at the same production. Then
$paths(T)=paths(T')$, which contradicts $paths(T)\prec {\cal S}'$. 

{\em Lemma~\ref{lem:rpG}:} If $T\prec
T'$, then by Lemma~\ref{lem:rpPath}, $T'$ is rooted at the same production as $T$. Thus $paths(T')=paths(T)$.

{\bf Inductive Step: } Suppose  Lemma~\ref{lem:rpSub} and  Lemma~\ref{lem:rpG} hold for $height(T)\leq k$. Consider when $height(T)=k+1$.

{\em Lemma~\ref{lem:rpSub}:} We  prove Lemma~\ref{lem:rpSub} holds for  $height(T)= k+1$ by contradiction. Suppose $paths(T)$
subsumes some proper subset ${\cal S}'$ of $paths(T')$.  Then by Lemma~\ref{lem:rpPath}, $T,T'$ must be rooted at the same production.  Since each production
in $G$ has distinct
symbols on the right-hand side, let $r:M\Rightarrow AB$ be the root production of $T,T'$. Let $T_A,T_B$ ($T'_A,T'_B$) be the child subtrees of $T$ ($T'$) where
$T_A$ ($T'_A$) is a tree of $A$, $T_B$ is a tree of $B$, (see
Figure~\ref{fig:partOfT}). Let $f:{\cal S}'\rightarrow paths(T)$ be one onto mapping s.t. $p\prec p'$ if
$f(p')=p$.  Let $f^{-1}$ be the inverse function of $f$. 

\begin{figure}[h]
\centering
\begin{tikzpicture}[node distance=15]
\node(T1) {$T$};
\draw node[below of=T1, node distance=10] (r1) {$r$};
\node(center)[below of=r1, node distance=15]{};
\node(r3) [left of=center] {$T_A$};
\node(r4) [right of=center] {$T_B$};

\draw[-] (r1)--(r3);
\draw[-] (r1)--(r4);

\node(T2)[right of=T1, node distance=85] {$T'$};
\node(1r1) [below of=T2, node distance=10] {$r$};
\node(1center)[below of=1r1]{};
\node(1r11)[left of=1center] {$T_A' $};
\node(1r4)[right of=1center]{$T_B'$};

\draw[-] (1r1)--(1r11);
\draw[-] (1r1)--(1r4);

\end{tikzpicture}
\vspace{-0.5mm}
\caption{Part of $T,T'$}
\label{fig:partOfT}
\end{figure}

We now list all possible situations below and then show all these cases will lead to contradictions.

\begin{enumerate}\centering
\item[Case $X_1:$] $ T_A\prec T_A' \wedge T_B\prec T_B'$
\item[Case $X_2:$] $T_A\nprec T_A' \wedge T_B\prec T'_B$
\item[Case $X_3:$] $T_A\prec T_A' \wedge T_B\nprec T'_B$
\item[Case $X_4:$] $T_A\nprec T_A' \wedge T_B\nprec T_B'$
\end{enumerate}

\begin{enumerate}
 \item[Case $X_1.$] Since $height(T_A)\leq k$ and $height(T_B)\leq k$,  by hypothesis of Lemma~\ref{lem:rpG}, $paths(T_A)=paths(T_A')$ and
$paths(T_B)=paths(T_B')$, i.e. $paths(T)=paths(T')$. Clearly, it contradicts the assumption that $paths(T)$
subsumes ${\cal S}'\subset paths(T')$.

\item[Case $X_2$.]  Since $T_A\nprec T_A'$, and by hypothesis of Lemma~\ref{lem:rpSub},  $f$ has to map some path of $T_B'$ to some path in $T_A$, i.e $B$
(indirectly) derives $A$. However, since $height(T_B)\leq k$,  by hypothesis of Lemma~\ref{lem:rpG},
$paths(T_B)=paths(T_B')$. Since the grammar is non-recursive, $f^{-1}$ maps all paths in $T_B$ to paths in $T_B'$. Thus there is a path in $T_B$ that $f$ does not map paths over, which contradicts $f$ is onto. 

\item[$Case X_3$.] Same as $X_2$.

\item[$Case X_4$.] If $f$ does not map any path in $T_B'$ to paths in $T_A$, since $T_A\nprec T_A'$, then
$paths(T_A)$ subsumes some
proper subset of $paths(T_A')$, which contradicts the hypothesis of Lemma~\ref{lem:rpSub}. Thus $f$ maps some path in $T_B'$ to a path in $T_A$, i.e. $B$
(indirectly) derives $A$. Similarly, for $T_B$, $f$ maps some path in $T_A'$ to a path in $T_B$, i.e. $A$
(indirectly) derives $B$, which contradicts the grammar is non-recursive.
\end{enumerate}

{\em Lemma~\ref{lem:rpG}:} Consider when $height(T)=k+1$. 

Note that if $T\prec T'$ then by Lemma~\ref{lem:rpPath} $root(T)=root(T')$.  We reuse Figure~\ref{fig:partOfT}. Since
$T\prec T'$,
let $f:paths(T')\rightarrow paths(T)$ be one onto mapping s.t. $p\prec p'$ if $f(p')=p$. 

Again, consider the four cases $X_1,X_2,X_3,X_4$. We now argue only $X_1$ could be true.

\begin{enumerate}
 \item [Case $X_1$.] Since $height(T_A)\leq k$ and $height(T_B)\leq k$,  by hypothesis $paths(T_A)=paths(T_A')$ and $paths(T_B)=paths(T_B')$. 
Then
 Lemma~\ref{lem:rpG} holds for $T$. 

\item[Case $X_2$.] Now since $T_B\prec T_B'$,  $height(T_B)\leq k$,  by hypothesis
$paths(T_B)=paths(T_B')$. Since $T_A\nprec T_A'$ and $T\prec T'$, either 
  \begin{enumerate}
  \item 
  functions exist from $paths(T'_A)$ to $paths(T_A)$ where $p\prec p'$ if $p'$ is mapped to $p$,  but none is {\bf onto}, i.e. 
  $ \exists p_1\in paths(T_A), p_2'\in paths(T'_B)$ s.t. $f(p_2')=p_1$, i.e. $B$
  (indirectly) derives $A$; or 

  \item  functions do not exist from $paths(T'_A)$ to $paths(T_A)$ where $p\prec p'$ if $p'$ is mapped to $p$, i.e. $\\\exists p'_1\in paths(T'_A), p_2\in
paths(T_B)$ s.t. $f(p_1')=p_2$, i.e. $A$ (indirectly) derives  $B$.
  \end{enumerate}

Note that (a) and (b) cannot both be true since the grammar is non-recursive. We now prove that both (a) and (b) will lead to contradictions.

We first consider (a). Since
$paths(T_B)=paths(T'_B)$, let $p^*_2\in paths(T_B)$ be the corresponding path of $p_2'$. Note that $f$ is onto, then $f$ has to map some
path in $T_A'$ to $p^{*}_2$, i.e. $A$ (indirectly) derives $B$, which contradicts the grammar is non-recursive. 

Now consider (b). Since $A$ derives $B$ and the grammar is non-recursive, $f$ maps all paths in $T_B'$ to  paths in $T_B$. Since $f$ maps
some path in $T_A'$ to $T_B$, $paths(T_A)$ subsumes a proper subset of $paths(T_A')$, which contradicts the hypothesis of Lemma~\ref{lem:rpSub}.


\item[Case $X_3$.] Same as $X_2$.
\item[Case $X_4$.] Since $T\prec T'$ and $T_A\nprec T_A'$, by hypothesis of Lemma~\ref{lem:rpSub}, $f$ has to map a path in $T_B'$ to some path in $T_A$,
i.e. $B$ (indirectly) derives $A$. Similarly, $f$ has to map a path in $T_A'$ to some path in $T_B$, i.e.
 $A$ (indirectly) derives $B$. It contradicts the grammar is non-recursive.
\end{enumerate}

Now we withdraw the restriction that each production has exactly two symbols on the right-hand side. The basis still holds. Now consider the inductive
step. Let $r:M\Rightarrow \alpha_1\ldots \alpha_n$ be the root production of $T$. Denote by $T_{\alpha_i}$ ($T_{\alpha_i}'$) the child subtree of $T$ ($T'$)
that is rooted at some production of $\alpha_i$. We need to consider all the following two cases.
\begin{enumerate}\centering
\item[Case $X_1':$]  $\bigwedge\limits_{i=1}^n (T_{\alpha_i}\prec T'_{\alpha_i})$
\item[Case $X_2':$]  $(\exists i, T_{\alpha_i}\nprec T_{\alpha_i}')$
\end{enumerate}
  Again, $X'_1$ indicates
$paths(T)=paths(T')$ by hypothesis. Thus both Lemma~\ref{lem:rpSub} and Lemma~\ref{lem:rpG} hold.

We now argue $X'_2$ will lead to contradictions for both Lemma~\ref{lem:rpSub} and Lemma~\ref{lem:rpG}.  W.l.o.g., suppose $T_{\alpha_1}\nprec T_{\alpha_1}'$. 

We start with proving Lemma~\ref{lem:rpSub} by contradiction. Suppose $paths(T)$ subsumes a proper subset $\cal S$' of some tree $T'$ which is of the same symbol
as $T$. Then $T'$ is also rooted at $r$. Let $f:{\cal S}'\rightarrow paths(T)$ be one onto mapping s.t. $p\prec p'$ if $f(p')=p$. By hypothesis, w.l.o.g., $f$
has to map a path in $paths(T_{\alpha_2}')$ to some path in $T_{\alpha_1}$, i.e. $\alpha_2$ (indirectly) derives $\alpha_1$. Similarly, by hypothesis and the
grammar is non-recursive, w.l.o.g., $f$ has
to map a path in $paths(T_{\alpha_3}')$ to some path in $T_{\alpha_2}$, i.e. $\alpha_3$ (indirectly) derives $\alpha_2$ and so on. Now consider $\alpha_n$. Note
that all $\alpha_1\ldots \alpha_{n-1}$ can (indirect) derive $\alpha_n$. So there is a path in $T_{\alpha_n}$ that $f$ does not map paths over, which contradicts
$f$ is onto.

We now prove Lemma~\ref{lem:rpG}. Since $T\prec T'$, $T'$ is also rooted at $r$. Let $f:paths(T')\rightarrow paths(T)$ be one onto mapping s.t. $p\prec p'$
if $f(p')=p$. Since $T_{\alpha_1}\nprec T'_{\alpha_1}$ and by hypothesis of Lemma~\ref{lem:rpSub}, w.l.o.g., $f$
has to map a path in $paths(T_{\alpha_2}')$ to some path in $T_{\alpha_1}$, i.e. $\alpha_2$ (indirectly) derives $\alpha_1$ and so on. Finally, there is a path
in $T_{\alpha_n}$ that $f$ does not map paths over, which contradicts $f$
is onto.
\end{proof}

We now extend Lemma~\ref{lem:rpSub} and Lemma~\ref{lem:rpG} to
grammars in which a production can have multiple occurrences of a
symbol on the right-hand side. To get there we first introduce some notions.

\begin{definition}\label{def:forest}
{\bf (Forest)} Given a grammar, a {\em forest} $F$ is a bag of parse
trees. We use $paths(F)$ for the bag union of paths of trees in $F$;
$height(T)$ for the maximum height of trees in $F$; and  $|F|$  for the number of
trees in $F$. 
\end{definition}

\begin{definition}\label{def:forest_subsumption}
{\bf (Forest Subsumption)} Given a grammar and two of its forests
$F,F'$, we say $F$ {\em subsumes} $F'$, denoted by $F\prec F'$ iff
$paths(F)\prec paths(F')$.
\end{definition}

We say $F$ is a forest of symbol $M$ iff all trees in $F$ are rooted
at productions of $M$. We say $F$ is a forest of production $r$ iff all trees in $F$ are rooted
at $r$.

\begin{lemma}\label{lem:frtSub}
 Given a non-recursive grammar  and any two of its forests $F,F'$, if
 $F,F'$ are of the same symbol and the same size, then  
 $\not\exists{\cal S}'\subset paths(F')$
s.t. $paths(F) \prec {\cal S}'$.
\end{lemma}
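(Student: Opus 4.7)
The plan is to prove Lemma~\ref{lem:frtSub} by strong induction on $k := \max\{height(T) : T \in F \cup F'\}$, in the same spirit as the proof of Lemma~\ref{lem:rpSub} but operating on forests. The forest formulation replaces two things that the tree proof had ``for free'': the same-root-production property (given by Lemma~\ref{lem:rpPath}) is replaced by the same-size hypothesis, and the child-by-position pairing (valid under distinct RHS symbols) is replaced by a child-by-symbol pairing that merges positions that share a non-terminal.

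Suppose, for contradiction, that $paths(F) \prec {\cal S}'$ for some ${\cal S}' \subsetneq paths(F')$ via an onto $f : {\cal S}' \to paths(F)$. The first step is to show that the multiset of root productions of $F$ equals that of $F'$. Every path in $paths(F)$ starts with the root production of its host tree, so, writing $M$ for the common symbol of $F$ and $F'$, a path $p$ beginning with $r$ (with $head(r)=M$) can only be subsumed by a path $p'$ beginning with $r'$ (with $head(r')=M$) such that $r$ occurs somewhere in $p'$; by non-recursion, this forces $r=r'$, exactly as in Lemma~\ref{lem:rpPath}. Thus $f$ partitions into $f_r : {\cal S}'_r \to P_r(F)$ for each production $r$ with $head(r)=M$, where $P_r(\cdot)$ denotes the paths beginning with $r$. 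Onto-ness of each $f_r$, combined with $|F|=|F'|$ and an accounting of how many paths each tree rooted at $r$ contributes, is then used to deduce that the number of trees of $F$ rooted at $r$ equals that of $F'$, for every $r$.

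Next, I would reduce the claim to child subforests. For every non-terminal $X$ appearing on the right-hand side of some production with head $M$, collect all child subtrees of symbol $X$ across all trees of $F$ into a forest $G_X$, and analogously form $G'_X$ from $F'$. By the first step, $|G_X|=|G'_X|$ for every $X$, and the trees in these subforests have height at most $k-1$. The subsumption $paths(F) \prec {\cal S}'$ induces subsumption behavior on the $G_X, G'_X$; I would proceed by the analogue of the four-case analysis ($X_1, X_2, X_3, X_4$) in the proof of Lemma~\ref{lem:rpSub}, lifted to subforests. In the analogue of $X_1$ (every $G_X \prec G'_X$ with equality of bags) the induction hypothesis forces $paths(G_X)=paths(G'_X)$ for all $X$, which together with Step 1 gives $paths(F)=paths(F')$, contradicting ${\cal S}' \subsetneq paths(F')$. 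In any other case some $G_X \nprec G'_X$, which by the inductive hypothesis of Lemma~\ref{lem:rpSub}'s analogue forces $f$ to send some path in a $Y$-child-subforest ($Y \neq X$) to a path inside a tree of $G_X$, so that $Y$ derives $X$; iterating this across symbols yields a cycle among non-terminals, contradicting non-recursion.

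The main obstacle is Step 1: unlike the tree setting, where Lemma~\ref{lem:rpPath} immediately fixes the root, here equality of root-production multisets must be extracted from the pair ``same size'' plus ``non-recursion-constrained subsumption,'' and this requires a careful pigeonhole count because two trees rooted at the same production can contribute very different numbers of paths. I would formalize this by inducting jointly on the structure used in Steps 1--3, or by proving a slightly strengthened inductive invariant that simultaneously tracks, for each production $r$, the bag of paths of $F$ and $F'$ starting with $r$, and uses the height-induction to pass from path counts to subtree counts.
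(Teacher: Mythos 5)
Your overall architecture matches the paper's: a simultaneous height induction, a decomposition of $F$ and $F'$ by root production and then by child symbol, and a reuse of the four-case non-recursion cycle argument from Lemma~\ref{lem:rpSub}/\ref{lem:rpG}. However, there is a genuine gap exactly at the point you flag as ``the main obstacle,'' and your proposed resolutions do not close it. Onto-ness of each $f_r$ at the level of \emph{paths} cannot yield equality (or even an inequality) of \emph{tree} counts per root production: a single tree of $F'$ rooted at $r$ with many paths can cover, under subsumption, all the paths of several trees of $F$ rooted at $r$, while $F'$ spends its remaining trees on a different production $r''$; the totals $|F|=|F'|$ are then consistent with unequal per-production counts. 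So the ``careful pigeonhole count'' on path contributions fails in principle, not just in execution.

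What actually closes this gap in the paper is a third statement carried through the mutual induction, Lemma~\ref{lem:frtSize}: if $F,F'$ are forests of the same symbol and $paths(F)$ subsumes some ${\cal S}'\subseteq paths(F')$, then $|F|\leq|F'|$ --- crucially with \emph{no} same-size hypothesis. The paper applies the inductive instance of this one-sided inequality to the child subforests $F_A,F_B$ versus $F_A',F_B'$ of each per-production block $F_i,F_i'$ (which have strictly smaller height), obtaining $|F_i|\leq|F_i'|$ for every $i$; combined with $\sum_i|F_i|=|F|=|F'|\geq\sum_i|F_i'|$ (the last inequality because the $F_i'$ are only those trees of $F'$ contributing paths to ${\cal S}'$), this forces $|F_i|=|F_i'|$ for all $i$, which is your Step 1. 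Your ``slightly strengthened inductive invariant'' is pointing in the right direction, but the specific invariant needed is this unconditional size monotonicity under subsumption, and it must be proved in the same induction as Lemmas~\ref{lem:frtSub} and~\ref{lem:frtG} rather than derived from them. A secondary, repairable issue: your child subforests $G_X$ merge $X$-children across different root productions of $M$; the decomposition must be per production \emph{and} per symbol, since a path prefixed by $r_1$ cannot subsume one prefixed by $r_2\neq r_1$.
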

\begin{proof}
Proof can be found in Lemma~\ref{lem:frtG}. 
Note that Lemma~\ref{lem:rpSub} is a special case, i.e. $|F|=1$. 
\end{proof}

\begin{lemma}\label{lem:frtSize}
Given a non-recursive grammar and any two of its forests $F,F'$, if $F,F'$
are of the same symbol and $\exists {\cal S}'\subseteq paths(T')$
s.t. $paths(F)\prec {\cal S}'$, then $|F|\leq |F'|$.
\end{lemma}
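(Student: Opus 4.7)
The plan is to prove Lemma~\ref{lem:frtSize} by contradiction, directly reducing it to Lemma~\ref{lem:frtSub}. Suppose for contradiction that $|F| > |F'|$. The idea is to artificially enlarge $F'$ into a forest $F''$ of the same symbol and of the same size as $F$, and then show that the subsumption $paths(F) \prec {\cal S}'$ translates into a subsumption of a \emph{proper} subset of $paths(F'')$, contradicting Lemma~\ref{lem:frtSub}.

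First I would dispose of the trivial edge case $F' = \emptyset$: then $paths(F') = \emptyset$, so ${\cal S}' = \emptyset$, and $paths(F) \prec \emptyset$ forces $paths(F) = \emptyset$. Since every parse tree contributes at least one root-to-leaf path, this yields $F = \emptyset$, contradicting $|F| > |F'| = 0$.

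In the main case $F' \neq \emptyset$, fix any tree $T^{*} \in F'$ and define $F''$ as the bag union of $F'$ with $|F| - |F'|$ additional copies of $T^{*}$. Then (i) $F''$ is still a forest of symbol $M$ since $T^{*}$ is rooted at a production of $M$; (ii) $|F''| = |F|$; and (iii) $paths(F'') = paths(F') \uplus (|F|-|F'|) \cdot paths(T^{*})$, which strictly contains $paths(F')$ as a bag because $paths(T^{*}) \neq \emptyset$ and $|F| - |F'| \geq 1$. Therefore ${\cal S}' \subseteq paths(F') \subsetneq paths(F'')$, so $paths(F) \prec {\cal S}'$ exhibits subsumption of a proper sub-bag of $paths(F'')$. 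Applying Lemma~\ref{lem:frtSub} to the pair $(F, F'')$, which satisfies its hypotheses (same symbol $M$, same size $|F|$), yields the desired contradiction.

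I expect the main obstacle to be purely bookkeeping: making sure that ``augmentation by duplication'' preserves both the symbol invariant and the bag semantics of $paths(\cdot)$, and that the strict inclusion ${\cal S}' \subsetneq paths(F'')$ is genuinely strict (which is where non-emptiness of $paths(T^{*})$ and of $F'$ are used). Once these points are established the argument is a one-line appeal to Lemma~\ref{lem:frtSub}, so the non-recursion hypothesis is used only indirectly through that lemma.
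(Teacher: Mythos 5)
Your proof is correct, but it takes a genuinely different route from the paper's. The paper proves Lemma~\ref{lem:frtSub}, Lemma~\ref{lem:frtSize} and Lemma~\ref{lem:frtG} by a single simultaneous induction on $height(F)$, and its argument for Lemma~\ref{lem:frtSize} repeats the whole structural machinery: partition $F$ into sub-forests $F_i$ by root production, split each into child forests $F_A,F_B$, run the case analysis $X_0$--$X_4$, apply the induction hypothesis of Lemma~\ref{lem:frtSize} to the (lower-height) children, and sum $|F|=\sum_i|F_i|\leq\sum_i|F_i'|\leq|F'|$. You instead reduce Lemma~\ref{lem:frtSize} to Lemma~\ref{lem:frtSub} in one step by padding $F'$ with extra copies of some $T^*\in F'$ up to size $|F|$, so that ${\cal S}'$ becomes a \emph{proper} sub-bag of $paths(F'')$. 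This is sound: $F''$ is still a forest of the same symbol, the padding is nonempty because $paths(T^*)\neq\emptyset$, and your handling of $F'=\emptyset$ closes the edge case. What your version buys is brevity and the elimination of a duplicated case analysis; what you must still respect is the placement inside the mutual induction. Since Lemma~\ref{lem:frtSub}'s own proof invokes Lemma~\ref{lem:frtSize} on child forests of strictly smaller height, Lemma~\ref{lem:frtSize} cannot be dropped from the simultaneous induction entirely --- but your reduction creates no circularity, because the padding leaves $F$ (and hence $height(F)$, the induction parameter) unchanged, so you invoke Lemma~\ref{lem:frtSub} at exactly the height established earlier in the same inductive step. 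One cosmetic point: the lemma statement's ``$paths(T')$'' is a typo for ``$paths(F')$'', which you correctly read through.
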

\begin{proof}
We defer the proof to Lemma~\ref{lem:frtG}.
\end{proof}

\begin{lemma}\label{lem:frtG}
Given a non-recursive grammar and any two forests $F,F'$, if $F,F'$
are of the same symbol and the same size and $F\prec F'$, then $paths(F)=paths(F')$.
\end{lemma}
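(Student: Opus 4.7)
I plan to prove Lemmas~\ref{lem:frtSub}, \ref{lem:frtSize}, and \ref{lem:frtG} simultaneously by induction on $\max(height(F), height(F'))$, extending the simultaneous induction used for single trees in the proofs of Lemmas~\ref{lem:rpSub} and \ref{lem:rpG}. The base case, where all trees in $F$ and $F'$ are single terminal nodes, is immediate: both forests consist of $|F|=|F'|$ copies of the same terminal, so $paths(F) = paths(F')$ trivially.

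In the inductive step, the first move is to partition both forests by root production: let $F_r = \{T \in F : root(T) = r\}$ and similarly $F'_r$. An extension of Lemma~\ref{lem:rpPath} (leveraging the fact that, in a non-recursive grammar, two paths in forests of the same symbol $M$ with $p \prec p'$ must share their first production, else $M$ would indirectly derive $M$) lets me show that every onto mapping $f : paths(F') \to paths(F)$ witnessing $F \prec F'$ restricts to an onto mapping from $paths(F'_r)$ to $paths(F_r)$ for each production $r$ of $M$. Hence $F_r \prec F'_r$ for every $r$; the analogous decomposition applies to the hypothesis of Lemma~\ref{lem:frtSize}.

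Since trees in $F_r$ and $F'_r$ are all rooted at the same production $r : M \to \alpha_1 \ldots \alpha_n$, I would then descend one more level by forming, for each distinct $\alpha$ in $body(r)$, the child sub-forests $F_r^{\alpha}$ and $F'^{\alpha}_r$ consisting of the $\alpha$-rooted children of trees in $F_r$ and $F'_r$. These sub-forests have height $\leq k$, so the induction hypothesis applies. Mirroring the case analysis $X_1'$--$X_2'$ from the proof of Lemma~\ref{lem:rpG} at the forest level: if $F_r^{\alpha} \prec F'^{\alpha}_r$ holds for every $\alpha \in body(r)$, then by the induction hypothesis of Lemma~\ref{lem:frtSize}, $|F_r^{\alpha}| \leq |F'^{\alpha}_r|$, and since $|F_r^{\alpha}| = m_{\alpha}|F_r|$ with $m_{\alpha}$ the multiplicity of $\alpha$ in $body(r)$, this yields $|F_r| \leq |F'_r|$. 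Combined with $|F| = |F'|$, we obtain $|F_r| = |F'_r|$, hence $|F_r^{\alpha}| = |F'^{\alpha}_r|$; the induction hypothesis of Lemma~\ref{lem:frtG} then gives $paths(F_r^{\alpha}) = paths(F'^{\alpha}_r)$, and assembling back over $\alpha$ and $r$ yields $paths(F) = paths(F')$.

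The main obstacle will be ruling out the case where some $F_r^{\alpha_i} \nprec F'^{\alpha_i}_r$. Following the $X_2'$ template from the proof of Lemma~\ref{lem:rpG}, if this fails then the induction hypothesis of Lemma~\ref{lem:frtSub} forces $f$ to route some path originating from $F'^{\alpha_j}_r$ (with $\alpha_j \neq \alpha_i$) into $F_r^{\alpha_i}$, which implies that $\alpha_j$ indirectly derives $\alpha_i$. Chasing this redirection across the $n$ body positions of $r$ eventually produces a derivation cycle among the $\alpha$'s, contradicting non-recursion. The complication over the single-tree case is purely combinatorial: each $F_r^{\alpha}$ now aggregates sibling subtrees from many different parent trees (weighted by $m_{\alpha}$), so one must verify that onto-ness is preserved under this aggregation and that the multiplicities cancel consistently when comparing sizes. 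Once this case analysis is in place, Lemma~\ref{lem:frtG} follows immediately from the argument in the preceding paragraph.
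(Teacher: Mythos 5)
Your proposal is correct and follows essentially the same route as the paper: a simultaneous induction on the three forest lemmas, first partitioning both forests by root production (justified by the non-recursive analogue of Lemma~\ref{lem:rpPath}), then descending to the child sub-forests per body symbol, using the size-summation argument to force $|F_r|=|F'_r|$, and ruling out the remaining cases by chasing path redirections into a derivation cycle that contradicts non-recursion. Your explicit multiplicity bookkeeping $|F_r^{\alpha}|=m_{\alpha}|F_r|$ is just a cleaner packaging of the paper's case $X_0$ (where the two body symbols coincide), so the two arguments are the same in substance.
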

\begin{proof}
We prove Lemma~\ref{lem:frtSub}, Lemma~\ref{lem:frtSize} and this
lemma by induction over $height(F)$. W.l.o.g., we
assume each production has exactly two symbols on the right-hand side.

{\bf Basis: } Consider when $height(F)=1$, i.e. all trees in $F$ are
rooted at productions the right-hand side of which are all
terminals. Let $F'$ be a forest of the same symbol as $F$.

{\em Lemma~\ref{lem:frtSub}} If $F'$ is of the same size as $F$ and some proper subset ${\cal S}'$ of $paths(F')$ can be subsumed
by $paths(F)$, since the grammar is non-recursive,
$paths(F)=paths({\cal S}')$. Since ${\cal
  S}'\subset paths(F')$, $|F|<|F'|$. It contradicts $|F|=|F'|$.

Similarly, {\em Lemma~\ref{lem:frtSize}} \eat{ If some subset ${\cal S}'$ of $paths(F')$ can be subsumed
by $paths(F)$, since the grammar is non-recursive,
$paths(F)\subseteq paths({\cal S}')$. Since ${\cal
  S}'\subseteq paths(F')$, $|F|\leq |F'|$.}
and {\em Lemma~\ref{lem:frtG}} hold.

{\bf Inductive Step: }  Suppose Lemma~\ref{lem:frtSub},
\ref{lem:frtSize} and \ref{lem:frtG} hold for
$height(F)\leq k$. We now prove they also hold for $height(T)=k+1$. Let $F_i$ ($i=1\ldots n$) be the forest consisting of trees
in $F$ which are of the same production.

{\em Lemma~\ref{lem:frtSub}}. We prove by contradiction. Let $F'$ be a
forest of the same symbol and the same size as $F$ and some proper
subset ${\cal S}'$ of $paths(F')$ can be subsumed by
$paths(F)$. Since $paths(F)\prec {\cal
  S}'$ and the grammar is non-recursive, we can find $F_i'\subseteq
F'$ ($i=1\ldots n$) s.t.  $F_i'$ is of the same
production as $F_i$. Specifically, let $F_i'$ consist of trees 
some path of which appears in ${\cal S}'$ and rooted at the same
production as $F_i$.  Note that no path in $F_i$ can subsume paths in
$F_j'$ if $i\not=j$, since the grammar is non-recursive. Then $\forall
i$, $\exists {\cal S}_i'\subseteq paths(F_i')$ s.t. $paths(F_i)\prec
{\cal S}_i'$. And  $\exists
i$, $\exists {\cal S}_i'\subset paths(F_i')$ s.t. $paths(F_i)\prec
{\cal S}_i'$. We first prove $\forall i$, $|F_i|=|F_i'|$.

Consider $F_i$ (for any $i$). Let $r:M\Rightarrow AB$ be the root
production of $F_i$.  Let $F_A$ ($F_B$) consist of child subtrees of trees in
$F_i$ that are rooted at productions of $A$ ($B$).  Let $F_A'$ ($F_B'$) consist of child subtrees of trees in
$F_i'$ that are rooted at productions of $A$ ($B$). 
Note that $height(F_A)\leq k$, $height(F_B)\leq
k$. Consider the following cases:

\begin{enumerate}\centering
\item[Case $X_0:$] $ A=B$
\item[Case $X_1:$] $ A\not=B \wedge F_A\prec F_A'  \wedge F_B\prec F_B'$
\item[Case $X_2:$] $A\not=B \wedge F_A\nprec F_A'  \wedge F_B\prec F_B'$
\item[Case $X_3:$] $A\not=B \wedge F_A\prec F_A'  \wedge F_B\nprec F_B'$
\item[Case $X_4:$] $A\not=B \wedge F_A\nprec F_A'  \wedge F_B\nprec F_B'$
\end{enumerate}

Consider case $X_0$. By assumption, $\exists {\cal S}''\subseteq
paths(F'_A\cup F'_B)$ s.t. $paths(F_A\cup F_B)\prec {\cal S}''$. By
hypothesis of Lemma~\ref{lem:frtSize}, $|F_A\cup F_B|\leq |F_A'\cup
F_B'|$. Thus $|F_i|\leq |F_i'|$. Consider case $X_1$. Since $F_B\prec F_B'$, by hypothesis of
  Lemma~\ref{lem:frtSize}, $|F_B|\leq |F_B'|$. Thus $|F_i|\leq
  |F_i'|$. Case $X_3$ is similar to $X_2$. Consider case $X_4$. Since $paths(F_i)\prec {\cal S}_i'$, $B$ derives $A$ and $A$
  derives $B$, which contradicts the grammar is non-recursive.

Thus, $\forall i$, $|F_i|\leq |F_i'|$.  Note that by assumption,
$\sum\limits_{i}|F_i|=|F|=|F'|\geq \sum\limits_{i}|F_i'|$. Thus
$\forall i, |F_i|=|F_i'|$.

Recall that  $\exists
i$, $\exists {\cal S}_i'\subset paths(F_i')$ s.t. $paths(F_i)\prec
{\cal S}_i'$. Again, we consider the five cases
$X_0,X_1,X_2,X_3,X_4$ and show all these cases lead to contradictions. For $X_0$, by assumption, $\exists {\cal
  S}''\subset paths(F_A'\cup F_B')$ s.t. $paths(F_A\cup F_B) \prec
{\cal S}''$. Note that $|F_A\cup F_B|=|F_A'\cup F_B'|$ since
$|F_i|=|F_i'|$. It contradicts the hypothesis of
Lemma~\ref{lem:frtSub}. For $X_1$, by hypothesis of
Lemma~\ref{lem:frtG}, $paths(F_A)=paths(F_A')$ and
$paths(F_B)=paths(F_B')$. It contradicts the assumption. For
$X_2,X_3,X_4$, they all
contradict the grammar is non-recursive as Lemma~\ref{lem:rpG}.

{\em Lemma~\ref{lem:frtSize}.}  Again, let $F_i'$ be the forest of
trees some path of which appears in ${\cal S}'$ and rooted at the same
production as $F_i$. Consider $F_i$ (for any $i$). Let $r:M\Rightarrow AB$ be the root production of
$F_i$. $F_A,F_B,F_A',F_B'$ are defined as above. Consider the five cases listed above. For $X_1$, by assumption,
$\exists {\cal S}''\subseteq paths(F_A'\cup F_B')$
s.t. $paths(F_A\cup F_B) \prec {\cal S}''$. By hypothesis of
Lemma~\ref{lem:frtSize}, $|F_A\cup F_B|\leq |F_A'\cup F_B'|$. Thus
$|F_i|\leq |F_i'|$. For $X_1$, by hypothesis of Lemma~\ref{lem:frtSize},
$|F_A|\leq |F_A'|$. Thus $|F_i|\leq |F_i'|$. $X_2,X_3,X_4$ lead
to contradictions as Lemma~\ref{lem:rpG}. Thus $|F|=\sum\limits_{i}|F_i|\leq \sum\limits_i
|F_i'|\leq |F'|$. 

{\em Lemma~\ref{lem:frtG}.}  Again, let $F_i'$ be the forest of
trees in $F'$ that are rooted at the same
production as $F_i$. Let $F$ be rooted at $r:M\Rightarrow AB$. $F_A,F_B,F_A',F_B'$ are defined as above.  By assumption, $F_i\prec F_i'$. By hypothesis of
Lemma~\ref{lem:frtSize}, $|F_i|\leq |F_i'|$. Note that $\sum\limits_i
|F_i|=\sum\limits_i |F_i'|$ by assumption. Thus $\forall i,
|F_i|=|F_i'|$.  Consider the five cases again. For $X_0$, by assumption
$F_A\cup
F_B\prec F_A' \cup F_B' $. By hypothesis of Lemma~\ref{lem:frtG},
$paths(F_A\cup F_B)=paths(F_A'\cup F_B')$. Thus
$paths(F)=paths(F')$. Similar for $X_1$. $X_2,X_3,X_4$ will lead
to contradictions as Lemma~\ref{lem:rpG}.

We now withdraw the restriction to allow productions have arbitrary
number of symbols on the right-hand side. The argument is similar to Lemma~\ref{lem:rpG}.
\end{proof}

\begin{theorem}\label{the:non-recur}
 All parse trees of non-recursive grammars are  \rpts.
\end{theorem}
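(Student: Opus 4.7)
The plan is to derive this theorem as an immediate corollary of Lemma~\ref{lem:frtG}, applied to singleton forests. The heavy lifting --- the induction on tree height together with the case analysis $X_0,\ldots,X_4$ on the root production and the repeated use of non-recursion to rule out cross-subtree mappings --- has already been done at the forest level, so only a lightweight specialization is needed.

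Concretely, let $G$ be a non-recursive grammar and let $T$ be any parse tree of $G$. To show that $T$ is representative, I take an arbitrary parse tree $T'$ of $G$ with $T' \prec T$ and argue that the subsumption must be trivial, in the sense that $paths(T') = paths(T)$. By Definition~\ref{def:parseTreeSubsumption}, $T' \prec T$ forces $head(root(T')) = head(root(T))$, so the singleton forests $F = \{T\}$ and $F' = \{T'\}$ are forests of the same symbol and (trivially) of the same size $|F| = |F'| = 1$. By Definition~\ref{def:forest_subsumption}, $F' \prec F$ because $paths(F') \prec paths(F)$ is precisely $paths(T') \prec paths(T)$.

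Now Lemma~\ref{lem:frtG} applies verbatim: with $G$ non-recursive and $F,F'$ two forests of the same symbol and the same size satisfying $F' \prec F$, we conclude $paths(F') = paths(F)$, i.e.\ $paths(T') = paths(T)$. So whenever some $T'$ subsumes $T$, the witnessing mapping between paths is in fact a bijection between identical bags of paths; no $T'$ can strictly improve on $T$ by collapsing or shortening paths. By the definition of an \rpt, $T$ is therefore representative, and since $T$ was arbitrary, every parse tree of $G$ is an \rpt.

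I do not anticipate a real obstacle, since the delicate inductive argument is already packaged inside Lemma~\ref{lem:frtG}. The only point worth double-checking in the write-up is that a singleton is a legitimate instance of Definition~\ref{def:forest} (bags of size one are allowed) and that ``same symbol, same size'' is automatic for two trees related by $\prec$; both hold immediately.
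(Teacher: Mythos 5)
Your proposal is correct and follows essentially the same route as the paper: the paper's own proof also reduces the theorem to Lemma~\ref{lem:frtG} by viewing $T$ and $T'$ as singleton forests of the same symbol and size and concluding $paths(T)=paths(T')$. Your additional checks (that singletons are legitimate forests and that ``same symbol'' follows from Definition~\ref{def:parseTreeSubsumption}) are sound and merely make explicit what the paper leaves implicit.
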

\begin{proof}
Given two trees $T,T'$, if $T\prec T'$ then forest $F=\{T\}$ subsumes
forest $F'=\{T'\}$. By Lemma~\ref{lem:frtG}, $paths(F)=paths(F')$.
\end{proof}

\subsubsection{Theorem~\ref{the:linear-recur}}

We now wish to prove that a representative parse tree is as at least as probable as any parse tree that it subsumes.  Although for general grammars this is not always the case
(see the example below), for an important subclass of grammars called {\em linear recursive} it is true.   Since  most real world workflows are linear-recursive  \cite{DBLP:conf/sigmod/BaoDM11},  we will base our top-k \rpts algorithm (presented in the next subsection) on finding representative parse trees.

\begin{example}\label{ex:gen_pro}
Consider the following grammar:

\noindent
\begin{tabular}{l l}
$r_1: S\Rightarrow \{S,S,S\}~(0.01)$&$r_2:S\Rightarrow \{A,B\}~(0.09)$\\
$r_3: S\Rightarrow \{s_1\}~(0.9)$&\\
$r_4:A\Rightarrow \{A,A\}~(0.5)$&$r_5:A\Rightarrow \{a\}~(0.5)$\\
$r_6: B\Rightarrow \{B,B\}~(0.5)$&$r_7:B\Rightarrow \{b\}~(0.5)$
\end{tabular}

$T,T'$ are two of its parse trees, shown below. Note that $T\prec T'$ while $\rho(T)<\rho(T')$.
\begin{figure}[h]
\begin{subfigure}{0.2\textwidth}
\begin{tikzpicture}
\tikzset{level distance=20}
\tikzset{level 1/.style={sibling distance=30}}
\tikzset{level 2/.style={sibling distance=15}}
\tikzset{level 3/.style={sibling distance=5}}
\node (T) {$r_1$}
child { node {$r_2$} 
          child { node {$r_5$} child {node {$a$}}}
          child {node {$r_7$} child {node {$b$}}}
    }
child { node {$r_2$} 
          child { node {$r_5$} child {node {$a$}}}
          child {node {$r_7$} child {node {$b$}}}
    }
child {node {$r_3$} child {node {$s_1$}}};
\node[below of=T, node distance=80] {$T$};
\end{tikzpicture}
\end{subfigure}
\begin{subfigure}{0.2\textwidth}
\begin{tikzpicture}
\tikzset{level distance=20}
\tikzset{level 1/.style={sibling distance=30}}
\tikzset{level 2/.style={sibling distance=20}}
\tikzset{level 3/.style={sibling distance=10}}
\tikzset{level 4/.style={sibling distance=5}}
\node (T) {$r_1$}
child { node {$r_2$}
          child {node {$r_4$}
                   child { node {$r_5$} child {node {$a$}}}
                   child {node {$r_5$} child {node {$a$}}}}
          child {node {$r_6$}
                   child { node {$r_7$} child {node {$b$}}}
                   child {node {$r_7$} child  {node {$b$}}}}
     }
child { node {$r_3$} child {node {$s_1$} }}
child {node {$r_3$} child {node {$s_1$}}
};
\node[below of=T, node distance=90] {$T'$};
\end{tikzpicture}
\end{subfigure}
\caption{$\rho(T)=0.000005<\rho(T')=0.000011$}
\end{figure}
\end{example}

\begin{definition}
{\bf (Linear-Recursive Grammars)} A grammar \G is {\em linear-recursive} iff for each production $r: M\Rightarrow \alpha_1\ldots \alpha_n \in R$, at most one symbol on the right-hand side of $r$ can derive $M$.  
\end{definition}

The grammar in Example~\ref{ex:gen_pro} is not linear-recursive,
because production $r_1$ contains $3$ symbols on the right-hand side
which can derive $S$.   Productions $r_4$ and $r_6$ are also problematic.
Note that linear-recursive grammars are s.t. for each symbol $M$,
every (partial) execution (not derivation) derived by $M$ contains at most one $M$.

\begin{theorem}\label{the:linear-recur}
Given a linear-recursive grammar, and any two of its parse trees, $T,T'$, if $T\prec T'$ then $\rho(T)\geq \rho(T')$.
\end{theorem}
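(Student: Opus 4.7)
The plan is to induct on $height(T')$, leveraging the following structural consequence of linear-recursivity: for any variable $M$ and any parse tree $T^{\ast}$, the nodes of $T^{\ast}$ labeled with productions of $M$ form a single ancestor-descendant chain. This holds because the linear-recursive hypothesis forbids two children of any production from both deriving $M$, so all $M$-descendants of any node are confined to a single child subtree. The base case, in which $T' = T$ (or when $T'$ contains no productions), gives $\rho(T) = \rho(T')$ immediately.

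For the inductive step, I would split on whether $root(T) = root(T')$. In Case A, with $root(T) = root(T') = r: M \Rightarrow \alpha_1 \ldots \alpha_n$, stripping the leading $r$ from all paths reduces $T \prec T'$ to a subsumption $F_T \prec F_{T'}$ between the child forests, which share the same root-symbol multiset. I would show that the child subtrees can be paired by root symbol so that each pair $(T_i, T'_{\sigma(i)})$ satisfies $T_i \prec T'_{\sigma(i)}$: the linear-recursive property guarantees that at most one body position holds a symbol that derives $M$, and for every other body position the symbol does not derive $M$ and so its subtree contains no $M$-productions, which (by an argument adapting Lemma~\ref{lem:frtG}) prevents paths from cross-subsuming paths originating from subtrees rooted at different symbols. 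The inductive hypothesis applied to each pair, together with the common root probability $\rho(r)$, yields $\rho(T) \geq \rho(T')$.

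In Case B, with $root(T) = r \neq r' = root(T')$ (both productions of $M$), every path in $T$ begins with $r$, so the onto subsumption mapping forces every root-to-leaf path in $T'$ to contain $r$. Since $r$ can only label $M$-nodes of $T'$, which form a single chain by the structural observation, let $v$ be the highest $r$-labeled node. Every leaf of $T'$ must then be a descendant of $v$ (otherwise some path would bypass $v$ and contain no $r$-node), which constrains the path from $root(T')$ down to $v$ to have no side branches, so every production on this stem has right-hand side of size $1$. Consequently $\rho(T') = \rho(T'_v) \cdot \prod_{r_i \text{ on stem}} \rho(r_i) \leq \rho(T'_v)$, and the subsumption restricts to $T \prec T'_v$; the inductive hypothesis (valid since $height(T'_v) < height(T')$) then gives $\rho(T) \geq \rho(T'_v) \geq \rho(T')$. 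The principal obstacle is the rigorous establishment of the child-subtree pairing in Case A when a body symbol has multiplicity greater than one: careful bookkeeping of how the onto path mapping interacts with the linear-recursive chain structure is required, and the non-recursive forest lemmas (\ref{lem:frtSub}--\ref{lem:frtG}) likely need to be invoked in combination with the inductive hypothesis on the single potentially recursive branch.
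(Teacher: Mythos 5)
Your strategy (structural induction on $height(T')$ with a case split on whether the roots coincide) is genuinely different from the paper's, which never decomposes into child subtrees: the paper writes $\rho(T)=\prod_r \rho(r)^{|V(T,r)|}$, stratifies the $r$-labelled nodes into the levels $top(V(T,r),i)$, and bounds $|V(T,r)|\leq|V(T',r)|$ level by level using Corollary~\ref{cor:linear-recur}, whose proof rests on counting \emph{termination productions} in forests of a common production. Your Case B is a nice argument and essentially sound, but note that your opening structural claim is false as stated: in the linear-recursive (indeed non-recursive) grammar $S\Rightarrow\{A,B\}$, $A\Rightarrow\{C\}$, $B\Rightarrow\{C\}$, $C\Rightarrow\{c\}$, the two $C$-production nodes are incomparable. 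Linear-recursivity only forbids two body symbols of a production \emph{of $M$} from deriving $M$; it says nothing about productions of other heads. The chain property you need does hold in the restricted form you actually use in Case B (nodes labelled by productions of $M$ inside a tree whose root is itself a production of $M$, via an lca argument), so Case B is recoverable.

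The genuine gap is Case A. The claim that the child subtrees of $T$ and $T'$ can be paired by root symbol with $T_i\prec T'_{\sigma(i)}$ is neither proved nor supported by the reason you give. The observation that non-$M$-deriving body positions yield subtrees free of $M$-productions does not prevent cross-subsumption between sibling subtrees: in $M\Rightarrow\{A,B\}$ with $B\Rightarrow\{A,B\}\mid\{b\}$, $A\Rightarrow\{a\}$, the grammar is linear-recursive, neither $A$ nor $B$ derives $M$, yet $B$ derives $A$, so a path of $T$'s $A$-subtree can subsume a path of $T'$'s $B$-subtree; the obstruction is governed by derivability among the $\alpha_i$ themselves, not by $M$. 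Moreover Lemmas~\ref{lem:frtSub}--\ref{lem:frtG} cannot be "adapted" to these subtrees, since they are proved only for non-recursive grammars and the non-$M$-deriving subtrees may still be recursive in other variables. There is also the onto-ness problem you allude to: even if each path of $T'_{\sigma(i)}$ is subsumed by some path of $T_i$, the restriction of the global onto map need not be onto $paths(T_i)$, and when a body symbol occurs with multiplicity you additionally need a perfect matching of the corresponding subtrees, which is exactly the difficulty the paper's forest formalism (and Lemma~\ref{lem:linear_size}, with its termination-production count) is built to handle. Since Case A is where all of the content of the theorem lives, the proposal as it stands does not constitute a proof; to complete it you would need either to prove the pairing claim for linear-recursive grammars or to abandon the decomposition in favor of a global production-counting argument like the paper's.
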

\begin{proof}
  Let $V(T,r)$ be the set of nodes in $T$ which represent production
$r$. Note that the probability of a tree $T$ is $\rho(T)=\prod\limits_{r}\rho(r)^{|V(T,r)|}$.
We now prove that $\forall r$, $|V(T,r)|\leq |V(T',r)|$.

 Given
a set of nodes ${\cal S}$ of a tree $T$, let $top({\cal S}, i)\subseteq {\cal S}$
($i=0\ldots height(T)$) be
the set of nodes in which every node has exactly $i$ ancestors  in ${\cal
  S}$. Note that $top({\cal S}, 0)$ is 
the set of nodes in $\cal S$ none of the ancestors of which is in
$\cal S$. Let $F(top(V(T,r),i))$ be the forest
consisting of subtrees rooted at nodes in $top(V(T,r),i)$. Since $T\prec T'$, $\forall r,i$, $\exists S'\subseteq paths(F(top(V(T',r),i))$ s.t.
$paths(F(top(V(T,r),i)))\prec S'$. 
By
Corollary~\ref{cor:linear-recur}, $|top(V(T,r),i)|\leq
|top(V(T',r),i)|$. Note that since the nodes are from a tree,  $\forall i\not=j$, $top(V(T,r),i)\cap
top(V(T,r),j)= \emptyset$ and $top(V(T',r),i)\cap
top(V(T',r),j)= \emptyset$. Thus $\forall r, |V(T,r)|\leq
|V(T',r)|$.
\end{proof}

\begin{lemma}\label{lem:linear_size}
Given a linear-recursive grammar, and any two of its forests  $F,F'$, if $F,F'$
are of the same production and $F\prec F'$ then $|F|<|F'|$.
\end{lemma}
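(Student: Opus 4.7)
The plan is to prove the lemma by induction on $|F'|+|paths(F')|$, leveraging the rigidity imposed by linear-recursion at the common root production $r: M \Rightarrow \alpha_1\cdots\alpha_n$ that $F$ and $F'$ share. Linear-recursion guarantees at most one body position---say $\alpha_1$---can derive $M$; subtrees at positions $\alpha_i$ for $i\geq 2$ are entirely $M$-free. I would decompose $F$ into child sub-forests $F_1,\ldots,F_n$ where $F_i$ collects the $i$th-child subtrees across all trees of $F$, and analogously $F_i'$ from $F'$; observe $|F_i|=|F|$ and $|F_i'|=|F'|$ for every $i$.

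The first step is to refine $F\prec F'$ into position-wise subsumptions $F_i\prec F_i'$ at non-recursive positions $i\geq 2$: the onto mapping on $paths$ must respect position, because after the common root $r$, the symbol chains at non-recursive positions are $M$-free and cannot interleave with the chain structure at $\alpha_1$-positions that retain $M$-productions. At each non-recursive $i$, $F_i$ and $F_i'$ live in the $M$-free sub-grammar, which is effectively non-recursive in $M$, so Lemma~\ref{lem:frtSize} applies to give $|F_i|\leq |F_i'|$ and hence $|F|\leq|F'|$. The recursive position $\alpha_1$ is handled by peeling the top $r$ from every tree and regrouping the resulting sub-forests by root production before invoking the inductive hypothesis on each group.

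The main obstacle, and the genuinely delicate part, is upgrading to the strict inequality $|F|<|F'|$. Since $F$ and $F'$ are pinned down to the same root production $r$, any path-level distinction witnessed by the subsumption must originate from extra recursive depth at $\alpha_1$---specifically, at least one path in $paths(F')$ must carry an $M$-production whose image under the onto map cannot absorb it without creating an unmatched path in $paths(F)$. The strict inequality will come from showing this extra $M$-occurrence forces a strictly greater count of $r$-rooted subtrees in the peeled $\alpha_1$-forest, which the induction then amplifies. Formalizing the pigeonhole argument that couples each extra $M$-production in $F'$ to a new tree---rather than being absorbed by multiplicity in $F$'s existing trees---is where the proof will demand the most care, and is the step I expect to be the hardest to make airtight.
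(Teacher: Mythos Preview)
Your approach diverges substantially from the paper's, and it has two genuine gaps.

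First, your claim that at non-recursive body positions $i\geq 2$ the sub-forests $F_i,F_i'$ ``live in the $M$-free sub-grammar, which is effectively non-recursive in $M$, so Lemma~\ref{lem:frtSize} applies'' is not valid. Lemma~\ref{lem:frtSize} requires the grammar to be \emph{non-recursive}, period---not merely free of $M$-recursion. In a linear-recursive grammar the symbol $\alpha_i$ (for $i\geq 2$) cannot derive $M$, but it may well be recursive in itself or in other symbols, so the lemma does not apply. You would need a separate inductive argument for these positions, and your chosen induction measure $|F'|+|paths(F')|$ does not obviously decrease when you move to a sub-forest of a different production.

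Second, your position-wise refinement ``$F_i\prec F_i'$ at non-recursive positions'' is not automatic. A path $p\in paths(F)$ starting $r\,r_{\alpha_2}\ldots$ can perfectly well be a subsequence of a path $p'\in paths(F')$ starting $r\,r_{\alpha_1}\ldots$, provided $\alpha_1$ derives $\alpha_2$ somewhere downstream. So the onto map witnessing $F\prec F'$ need not respect body positions, and your decomposition does not follow from the hypothesis without further work.

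The paper sidesteps both difficulties with a different device: \emph{termination productions}. For a symbol $A$, a termination production is one whose head is in $A$'s mutual-recursion class but whose body contains no symbol that can derive $A$. In a linear-recursive grammar, every parse tree of $A$ contains \emph{exactly one} termination-production occurrence for $A$ (since each partial execution carries at most one $A$). The paper inducts on $height(F)$ and, given $|F|>|F'|$, tracks where the termination productions of the trees in $F$ must land inside $F'$ under subsumption; the uniqueness forces a collision that yields a smaller counterexample and closes the induction. This counting-by-terminators argument is what you are missing; your pigeonhole intuition is in the right spirit, but the concrete invariant that makes it go through is the single-terminator property, not position-wise separation.
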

\begin{proof}
Proof is by induction over $height(F)$. To illustrate, we assume $F$
consists of two trees $T_1, T_2$, $F'$ consists of one tree $T'$.

{\bf Basis: } Consider $height(F)=1$, i.e. $T_1,T_2$ are rooted at the same production the right-hand side of which consists of terminals. Suppose such $T'$ exists. Then by assumption, $T'$ is rooted at the same production, $path(T_1)=paths(T')$, which contradicts $paths(T_1)\cup paths(T_2)\prec paths(T')$. 

{\bf Inductive Step: } Suppose the lemma holds for
$height(F)\leq k$. Now consider when $height(F)=k+1$. We
prove by contradiction. W.l.o.g, we assume each production has exactly
two symbols on the right-hand side. Let $T_1,T_2,T'$ be rooted at
production $r:M\Rightarrow AB$ and $F\prec F'$. Let $T_{A1}$ ($T_{A2}$,
$T_A'$), $T_{B1}$ ($T_{B2}$, $T_B'$) be the child subtrees of $T_1$
($T_2$, $T'$), which are rooted at productions of $A$, $B$,
respectively (see below).

\begin{figure}[h]
\centering
\begin{tikzpicture}[level distance=20, sibling distance=25]
\node (T1){$r$} 
 child {node {$T_{A1}$}} child {node {$T_{B1}$}};
\node [below of = T1, node distance =40] {$T_1$};
\node (T2)[right of = T1, node distance=60]{$r$} 
 child {node {$T_{A2}$}} child {node {$T_{B2}$}};
\node [below of = T2, node distance =40] {$T_2$};
\node (T')[right of=T1, node distance = 150]{$r$} 
 child {node {$T_{A}'$}} child {node {$T_{B}'$}};
\node [below of = T', node distance =40] {$T'$};
\end{tikzpicture}
\end{figure}

Since the grammar is linear-recursive, consider the following cases:

\begin{enumerate}
\item[$X_1$:] $A$ can derive $M$, $B$ cannot derive $A$ or $M$
\item[$X_2$:] $A,B$ cannot derive $M$
\end{enumerate}

Consider termination productions of $A$. We say a production $r_A$ is
a termination production of $A$, if (1) $head(r_A)=A$ and $\not
\exists A'\in body(r_A)$ s.t. $A'$ can derive $A$; or (2)
$head(r_A)$ and $A$ are mutually derivable, and $\not\exists A' \in
body(r_A)$ s.t. $A'$ can derive $A$. e.g. for grammar $r_1:
S\Rightarrow \{C,s\}~r_2: S\Rightarrow \{s\}~r_3: C\Rightarrow \{S,a\}~ r_4:C\Rightarrow
\{a\}$, $r_2,r_4$ are termination productions of both $S$ and $C$. Intuitively, in linear-recursive
grammars, a recursion is fired many times and then ends with a
termination production to get out of that recursion. Note that if two
symbols are mutually derivable, their termination productions are the same.
Since the grammar is proper,
termination productions exist for each symbol. 

Let $r_{A1}$ be the termination production in $T_{A1}$. Note that
since (the grammar is linear-recursive) every (partial) execution (not
derivation) of $A$ contains at most one $A$, $T_{A1}$ must have exactly one
termination production of $A$. 
Similarly, let $r_{A2}$, $r_A'$,
$r_{B1}$, $r_{B2}$, $r_{B}'$ be the termination production of
$T_{A2}$, $T_A'$, $T_{B1}$, $T_{B2}$ and $T_{B}'$, respectively.

Case $X_1$. Since $F\prec F'$, and $B$ can not
derive $A$, $r_{A1},r_{A2}$ must appear in $T_A'$. As has argued,
$r_A'$ is the only termination production of 
$T_A'$, so $r_{A1}=r_{A2}=r_A'$. Moreover, if let $T^{t}_{A1}$ be the
subtree of $T_{A1}$ that is rooted at the termination production of
$A$, then $F_1=\{T_{A1}^t, T_{A2}^t\}\prec F_2=\{T^{t'}_A\}$ which contradicts the
hypothesis.

Case $X_2$. By assumption, $r_{A1},r_{A2},r_{B1},r_{B2}$ must appear
in $T'$. Similar to Case $X_1$, $r_{A1}$, $r_{A2}$ can not both appear
in $T_A'$. W.l.o.g. let $r_{A1}$, $r_{A2}$ appear in $T_A'$, $T_B'$,
respectively. Similarly, $r_{B1}$, $r_{B2}$ appears in $T_A'$, $T_B'$,
respectively. Thus $A$, $B$ are mutually derivable.  Thus $A$, $B$
have the same termination productions. Since $T_A'$ contains one
termination production of $A$, $r_{A1}=r_{B1}=r_{A}'$. Similarly
$r_{A2}=r_{B2}=r_{B}'$. If $r_A'\not= r_B'$, then $F_1=\{T_{A1}^t,
T_{B1}^t\}\prec F_2=\{T^{t'}_A\}$, which contradicts the
hypothesis. If $r_A'=r_B'$, then $F_1=\{T_{A1}^t,T_{A2}^t,
T_{B1}^t,T_{B2}^t\}\prec F_2=\{T^{t'}_A,T^{t'}_B\}$, which contradicts
the hypothesis.
\end{proof}

\begin{corollary}\label{cor:linear-recur}
Given a linear-recursive grammar, and any forests  $F,F'$, if $F,F'$
are of the same production and $\exists S'\subseteq paths(F')$
s.t. $paths(F)\prec S'$,  then $|F|<|F'|$.
\end{corollary}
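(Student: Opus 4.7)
The plan is to derive Corollary~\ref{cor:linear-recur} as a consequence of Lemma~\ref{lem:linear_size}, by identifying a sub-bag $F'' \subseteq F'$ for which $F \prec F''$ holds, and then invoking the lemma on the pair $(F, F'')$ to obtain the strict inequality $|F| < |F''| \leq |F'|$.

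First I will fix an onto map $f : S' \to paths(F)$ witnessing $paths(F) \prec S'$. Since $paths(F')$ decomposes as the disjoint bag-union $\biguplus_{t' \in F'} paths(t')$, each path $p' \in S'$ traces back to a unique originating tree in $F'$. I will take $F'' \subseteq F'$ to be the sub-multiset of trees that contribute at least one path to $S'$, which makes $S' \subseteq paths(F'')$ automatic and guarantees $|F''| \leq |F'|$. Both $F$ and $F''$ remain forests of the same production, so the hypothesis of Lemma~\ref{lem:linear_size} is within reach if I can upgrade $paths(F) \prec S'$ to $paths(F) \prec paths(F'')$.

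The critical step is to extend $f$ to an onto map $\tilde f : paths(F'') \to paths(F)$ that still respects subsumption on every path. The claim to prove is: for every $t' \in F''$ and every $p' \in paths(t')$, some $p \in paths(F)$ satisfies $p \prec p'$. I plan to leverage linear-recursion here: in a linear-recursive grammar, each tree is a single backbone of recursive productions decorated with non-recursive side subtrees, so once $f$ aligns one backbone path of $t'$ with a backbone path of some tree in $F$, the termination productions and non-recursive branches along $t'$ must correspond to branches present along the matched tree of $F$. This correspondence supplies a subsumer in $paths(F)$ for every side-branch path of $t'$, and repeated subsumers can absorb copies of a path in the bag.

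Once $F \prec F''$ is established, Lemma~\ref{lem:linear_size} delivers $|F| < |F''|$, and combining with $|F''| \leq |F'|$ yields $|F| < |F'|$. The main obstacle will be the extension step: producing subsumers for every path in each $t' \in F''$, not merely for the paths that already lie in $S'$. If the direct structural argument via the recursive backbone proves awkward, my fallback is to mimic the induction of Lemma~\ref{lem:linear_size} directly, inducting on $height(F)$, decomposing $F$ and $F'$ by root production and child subtrees, observing that $S'$ induces a matching decomposition into per-child subsumption hypotheses, and invoking the inductive hypothesis on each child forest to sum the strict inequalities into the global one.
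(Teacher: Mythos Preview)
Your primary approach has a genuine gap in the extension step. The claim that for every $t' \in F''$ and every $p' \in paths(t')$ there exists $p \in paths(F)$ with $p \prec p'$ is false in general, even for linear-recursive grammars. Consider $r_1\colon S \to \{S,A\}$, $r_2\colon S \to \{s\}$, $r_3\colon A \to \{a\}$, $r_4\colon A \to \{b\}$. Let $T$ be $r_1(r_2,r_3)$ and $T'$ be $r_1(r_1(r_2,r_3),r_4)$; take $F=\{T\}$, $F'=\{T'\}$, and $S'=\{r_1r_1r_2s,\; r_1r_1r_3a\}$. The hypothesis $paths(F)\prec S'$ holds, so $F''=\{T'\}$. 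But the path $r_1r_4b\in paths(T')$ ends in the terminal $b$, which no path of $T$ does, so nothing in $paths(F)$ subsumes it and $F\prec F''$ fails. Your ``single backbone'' intuition breaks because the non-recursive side branch of $T'$ need not appear anywhere in $F$. (The same example has $|F|=|F'|=1$, so the strict inequality in the stated corollary appears to be a typo for $\leq$; its only use, in Theorem~\ref{the:linear-recur}, is with $\leq$.)

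Your fallback is exactly the paper's argument. The paper simply observes that the induction establishing Lemma~\ref{lem:linear_size} never uses the full onto map from \emph{all} of $paths(F')$; it only needs that every path of $F$ is subsumed by some path of $F'$, which is precisely what $paths(F)\prec S'$ with $S'\subseteq paths(F')$ provides. That weaker fact is what is used to locate, inside $F'$, the termination productions carried by the child forests of $F$, and that is the engine of the inductive contradiction. So the Lemma~\ref{lem:linear_size} proof goes through verbatim under the weakened hypothesis; no reduction via an auxiliary $F''$ is needed. Adopt the fallback directly.
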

\begin{proof}
Note that the proof of Lemma~\ref{lem:linear_size} uses only ``all
paths in $F$ must subsume paths in $F'$''.
\end{proof}

\subsection{Identifying top-$k$ representative parse trees}
\label{sec:pres:algo}

We now describe an algorithm for identifying top-$k$ \rpts of grammar
$G$ that generate $Q$.  This is a bottom-up algorithm that
progressively builds \rpts of height at most $i$ by combining \rpts of
lower height.  Correctness of such a procedure is based on
Lemma~\ref{lm:bottomup}.

Consider first a naive bottom-up algorithm, which first builds all
possible parse trees of height $i$ that generate $Q$, and then removes
subsumed trees using pair-wise subsumption checks.  The algorithm
stops when no new \rpts are found.  This algorithm, while correct by
Lemmas~\ref{lm:bottomup} and~\ref{lm:height}, will be very expensive,
as there may be exponentially many \rpts for a given grammar, which
would all have to be retained until fixpoint, and against which all
newly generated trees would need to be checked for subsumption.  Yet,
since our goal is to find only the top-$k$ highest-scoring \rpts, most
of these would be discarded at the end of the run.

Thus, to control the running time and the space overhead, we designed
an algorithm that keeps a bounded number of \rpts in memory.  As
another naive approach, consider an algorithm that keeps up to a fixed
number of highest-scoring \rpts found so far in a buffer.  When a new
tree $T$ is constructed, the algorithm checks whether any \rpt in the
buffer subsumes it and, if not, assumes that the new tree is an
\rpt.  This algorithm is straight-forward, but it may return 
trees that are not representative.  This will happen if there exists a
tree $T' \prec T$, yet $T'$ was not retained in the buffer from the
previous round.  Fortunately, we can use Theorem~\ref{the:linear-recur},
stating that a tree can only be subsumed by a tree with a higher
probability score, to devise an algorithm that is both correct and
uses bounded buffers.  This is Algorithm~\ref{algo:repTrees}, which we
now describe.

The algorithm uses the following data structures.  Denote by
$T=\langle r, T_1\ldots T_n\rangle$ a parse tree rooted at production
$r$ with $T_1\ldots T_n$ as subtrees.  Also denote by ${\cal
  T}_i(M,X)$ the \rpts of height $\leq i$ rooted at $M$ and generating
$X \subseteq Q$.  ${\cal T}_i(M,X)$ is an ordered list of \rpts sorted
by decreasing score. The size of ${\cal T}_i(M,X)$ is bounded by some
constant $c$, and we refer to this data structure as the bounded
buffer.

We associate with ${\cal T}_i(M,X)$ a boolean function \\$isTrunc_i(M,X)$,
indicating whether the list ${\cal T}_i(M,X)$ was truncated to accommodate
bounded size.  Importantly, we also associate with ${\cal T}_i(M,X)$ a score
lower-bound, denoted $LB_i(M,X)$, set as follows:

\begin{equation*}
\small
LB_i(M,X) =
\begin{cases} 
0 & \neg isTrunc_i(M,X) \\
MIN_{T \in {\cal T}_i(M,X)} \rho(T) & isTrunc_i(M,X)
\end{cases} 
\end{equation*}

$LB_i(M,X)$ represents the lowest score of any parse tree rooted at
$M$ and generating $X$ for which we can confidently state whether it
is subsumed by any \rpt currently in ${\cal T}_i(M,X)$.  Intuitively,
if no truncation took place, then we can check all trees for
subsumption ($LB_i(M,X)=0$).  If some \rpts were not retained, then we
can only check for subsumption of trees that have a higher score than
the lowest-scoring \rpt in the buffer ($LB_i(M,X)=MIN_{T \in {\cal
    T}_i(M,X)} \rho(T)$).

\begin{algorithm}[t]
\small
\DontPrintSemicolon
\KwIn{a grammar $G=(\Sigma,\Delta,S,R)$, a query $Q\subseteq \Delta$\\ $k$, buffer size $c$\\}
\KwOut{a set of top-$k$ \rpts}
\Begin{
\nl\ForEach{$M\in\Delta,X\subseteq Q$}{
\nl  \lIf{$X=\{M\}\cap Q$}{${\cal T}_0(M,X)\gets\{\langle M\rangle\}$}\\
\nl  \lElse{${\cal T}_0(M,X)\gets\emptyset$}\\
}
\nl \lForEach{$M\in\Sigma/\Delta,X\subseteq Q$}{${\cal T}_0(M,X)\gets\emptyset$}\\
\nl$i\gets 0$\\
\nl {\bf L}: $i\gets i+1$\\
\nl\ForEach{$M\in\Sigma,X\subseteq Q$}{\nl${\cal T}_i(M,X)\gets {\cal T}_{i-1}(M,X)$}
\nl\ForEach{$M\in\Sigma/\Delta$}{\nl$findNewTrees(M,i,c)$}
\nl\If{$\exists M\in\Sigma,X\subseteq Q,{\cal T}_i(M,X)\not={\cal T}_{i-1}(M,X)$}{\nl goto L ~~\tcp{fixpoint}}
\nl \lIf{$|{\cal T}_i(S,Q)|\geq k$}{\Return{${\cal T}_i(S,Q).subList(0,k)$}}\\
\nl  \lElseIf{ $!isTrunc_i(S,Q)$}{\Return{${\cal T}_i(S,Q)$}}\\
\nl  \lElse{$topKRep(G,Q,k,2*c)$}
}
\caption{$topKRep$}
\label{algo:repTrees}
\vspace{-0.2cm}
\end{algorithm}

Algorithm~\ref{algo:repTrees} ($topKRep$) finds the top-$k$ \rpts for
$G$ matching $Q$.  The most interesting part of the algorithm is in
line 10, in the call to procedure $findNewTrees(M,i,C)$.  We omit
algorithmic details of this procedure due to lack of space, and
describe it in text.

Procedure $findNewTrees(M,i,C)$ identifies new \rpts of height up to
$i$ rooted at $M$ generating $X \subseteq Q$.  For a given $X$, we
first construct candidate trees by considering all productions $r \in
productions(M)$.  A production can generate $X$ in multiple ways, by
combining different sets $X_1 \cup \ldots \cup X_n = X$.  Each
combination yields several parse trees, and we can find the
top-scoring trees among them.  

Consider for example production $r: M \rightarrow \{A, B\}$, and
suppose that $A$ can generate $\{a\}$, while $B$ can generate $\{b\}$
or $\{a, b\}$.  Then this production can generate $\{a, b\}$ in two
ways: as  $\{a \} \cup \{b\}$ or $\{a \} \cup \{a, b\}$.

Suppose now that, to generate $ \{a, b\} = \{a \} \cup \{b\}$ we
combine $T^A_1 \in {\cal T}_{i-1}(A,\{a \})$ with $T^B_1 \in {\cal
  T}_{i-1}(B,\{b \})$, deriving a tree $T^{AB}_1 = \langle r,
  T^A_1, T^B_1 \rangle$ with probability $\rho_1$.  Alternatively,
  we may use the combination $ \{a, b\} = \{a \} \cup \{a, b\}$,
  combining $T^A_1 \in {\cal T}_{i-1}(A,\{a \})$ with $T^B_2 \in
  {\cal T}_{i-1}(B,\{a, b \})$, deriving a tree $T^{AB}_2\langle r,
    T^A_1, T^B_2 \rangle$ with probability $\rho_2$.

    It is not guaranteed that $T_1^{AB}$ and $T_2^{AB}$ are the
    top-2 trees for $M$ generating $\{a, b\}$.  This is because ${\cal
      T}_{i-1}(A,\{a \})$ may have been truncated.  For example, we
    may have removed $T_2^A \in {\cal T}_{i-1}(A,\{a \})$, which, if
    used to construct $T_3^{AB} = \langle r, T_2^A, T_1^B
    \rangle$, would have probability $\rho_3 > \rho_2$.  We could be
    sure that no such tree $T_3^{AB}$ exists if either ${\cal
      T}_{i-1}(A, \{a \})$, ${\cal T}_{i-1}(B, \{b \})$, ${\cal
      T}_{i-1}(B, \{a,b \})$ were not truncated, or if the new tree
    had a higher score than $\rho(r) * MAX(LB_{i-1}(A, \{a \}) *
    LB_{i-1}(B, \{b \}), (LB_{i-1}(A, \{a \}) * LB_{i-1}(B, \{a, b
    \}))$.

    Similar reasoning is used when multiple productions are combined
    to generate ${\cal T}_i(M,X)$.

\eat{
We use the following performance optimizations in our implementation.
To reduce space overhead, we use a compact representation of parse
trees, which are represented by bags of paths.  We observed that many
paths will be common to multiple trees.  We thus maintain a pool of
paths in a hash table, and refer to them by id when constructing a
parse tree.  To reduce the running time of subsumption checks, we keep
results of pairwise path subsumption checks.
}

We note that that we implemented a more efficient version of $topKRep$
for non-recursive grammars.  Recall from Theorem~\ref{the:non-recur} that
all parse trees of a non-recursive grammar are representative.  We can
directly construct the highest-scoring parse trees by combining
highest-scoring subtrees.  No subsumption checks are required in the
process.  This algorithm is straight-forward, and its details are
omitted.

An interesting point to note is that the top-$k$ \rpts in
non-recursive grammars are made up of subtrees that are themselves
top-$k$ \rpts.  This is not necessarily the case for recursive
grammars, which makes the $topKRep$ algorithm less efficient in the
recursive case.

\eat{ We first define three notions $x$, $topRep$, $y$.  $x$ is a
  function $x:\Sigma\times2^Q\times\mathbb{N}_{\geq
  0}\rightarrow \mathbb{N}_{\geq 0}$. For $M\in\Sigma,X\subseteq
  Q,i\in\mathbb{N}_{\geq 0}$, we denote by $topRep(M,X,i)$ the set of
  top $x(M,X,i)$ representative parse trees of $M$ that generate $X$
  and that are of height $\leq i$.  $ y$ is a function
  $y: \Sigma\times2^Q\times\mathbb{N}_{\geq 0}\rightarrow \{true,
  false\}$ where $y(M,X,i)=true$ iff $topRep(M,X,i)$ consists
  of \emph{all} representative parse trees of $M$ that generate $X$
  and that are of height $\leq i$.

We override $x$, $topRep$, $y$ for
productions. $x:R\times2^Q\times\mathbb{N}_{\geq 0}\rightarrow
\mathbb{N}_{\geq 0}$. For $r\in R,X\subseteq Q,i\in\mathbb{N}_{\geq
  0}$, denote by $topRep(r,X,i)$ the set of top $x(r,X,i)$
representative parse trees rooted at $r$ that generate $X$ and that
are of height $= i$. $y: R\times2^Q\times\mathbb{N}_{\geq
  0}\rightarrow \{true, false\}$ where $y(r,X,i)=true$ iff
$topRep(r,X,i)$ consists of \emph{all} representative parse trees
rooted at $r$ that generate $X$ and that are of height $= i$.  }



\eat{
\begin{procedure*}[ht]
\nl\ForEach{$X\subseteq Q$,}{
\nl  \ForEach{combination ${\cal C}: [X_1\ldots X_n]$ of $X$~~\tcp{i.e.  $\bigcup\limits_{j=1}^n X_j=X$}}{
\nl  $y(r,X,i)\gets true$\\
\nl  $newTrees({\cal C})\gets\{T=\langle r,T_1\ldots,T_n\rangle| T_j\in topRep(\alpha_j,X_j,i-1) (j=1\ldots n), height(T)=i\}$\\
\nl  \lIf{$\forall j=1 \to n,~y(\alpha_j,X_j,i-1)=true$ }{$y({\cal C})=true$}
\nl  \lElse{$y({\cal C})=false;~y(r,X,i)=false$}
  }
\nl  $allNewTrees\gets \bigcup\limits_{{\cal C}\text{ of } X}newTrees({\cal C})$\\
  \tcp{step 1:pick up trees that we are able to check parse subsumptions}
\nl  \lIf{$y(r,X,i)=true$}{$lowerbound\gets 0.0$}\\
\nl  \lElse{ $lowerbound\gets \max\limits_{{\cal C}\text{ of }X, y({\cal C})=false}~\min\limits_{T\in newTrees({\cal C})}\rho(T)$}\\
\tcc{now, $\{T|T\in allNewTrees,\rho(T)> lowerbound\}$ consists of trees rooted at $r$ that generate $X$ and that are of height$=i$ and probabilities of which are $>lowerbound$}
\nl  \lIf{$y(M,X,i-1)=false$}{$lowerbound\gets \max(lowerbound,\min\limits_{T\in topRep(M,X,i-1)}\rho(T))$;~ $y(r,X,i)\gets false$}\\
\tcc{if $y(M,X,i-1)=false$ then for trees of probabilities $<lowerbound$, we are not able to check if they are representative }
  \tcp{step 2:find representative trees}
 \nl \ForEach{$T\in allNewTrees$}{
   
\nl	\lIf{$\rho(T)< lowerbound$}{continue}\\
\nl	\If{$\not\exists T'\in topRep(M,X,i-1)\cup allNewTrees,~ T'\prec T$} {\nl$topRep(M,X,i).add(T)$;~$x(M,X,i)\gets topRep(M,X,i).size$}
  }
}
 \caption{findTopRepTrees($r:M\rightarrow\alpha_1\ldots\alpha_n$,$i$) //$\forall X\subseteq Q$, compute $x(r,X,i)$, $topRep(r,X,i)$ and $y(r,X,i)$}
\end{procedure*}
}

\eat{
\begin{procedure}[t]
\tcp{compute $x(\cdot)$, $topRep(\cdot)$ and $y(\cdot)$ for the given $M,i$, $\forall X\subseteq Q$}
\tcp{step 1:We fist generate a set of rep trees}
\nl\lForEach{$r\in productions(M)$}{\findTopRepTrees($r,i$)}\\
\nl\ForEach{$X\subseteq Q$}{
\nl$newRepTrees\gets\bigcup\limits_{r\in productions(M)}topRep(r,X,i) \cup topRep(M,X,i-1)$\\
\nl$sortByProbabilities(newRepTrees)$\tcp{descending}
\tcp{step 2:We then find top $x$ ones}
\nl$y(M,X,i)\gets \bigwedge\limits_{r\in productions(M)}y(r,X,i)$\\
\nl\If{$y(M,X,i)=true$} {
\nl\If{$newRepTrees.size> c$}{\nl$x(M,X,i)\gets c;~ y(M,X,i)\gets false$}\nl\lElse{$x(M,X,i)\gets newRepTrees.size$}\\
  goto L
}
\nl  $lowberbound\gets\max\limits_{r\in productions(M)}~\min\limits_{T\in topRep(r,X,i)}\rho(T)$\\
\nl  $x(M,X,i)\gets 0$\\
\nl  \lWhile{$\rho(newRepTrees[x(M,X,i)])\geq lowerbound$}{\\\nl $x(M,X,i)\gets x(M,X,i)+1$}\\
\nl{\bf L}: $topRep(M,X,i)\gets newRepTrees.subList(0,x(M,X,i))$
}
\caption{findTopRepTrees($M$,$i$,$c$) //For symbols }
\end{procedure}
}

\section{Experimental Evaluation}
\label{sec:experiment}
\subsection{Experimental setup}
All experiments were implemented in
Java 6 and performed on a local PC with Intel Core i7 3.4GHz CPU and
4G memory running Linux.  Experiments were executed against
memory-resident data structures.  All reported running times are
averages of 5 executions per setting.

\begin{figure}[t!]
  \centering
\includegraphics[width=0.35\textwidth]{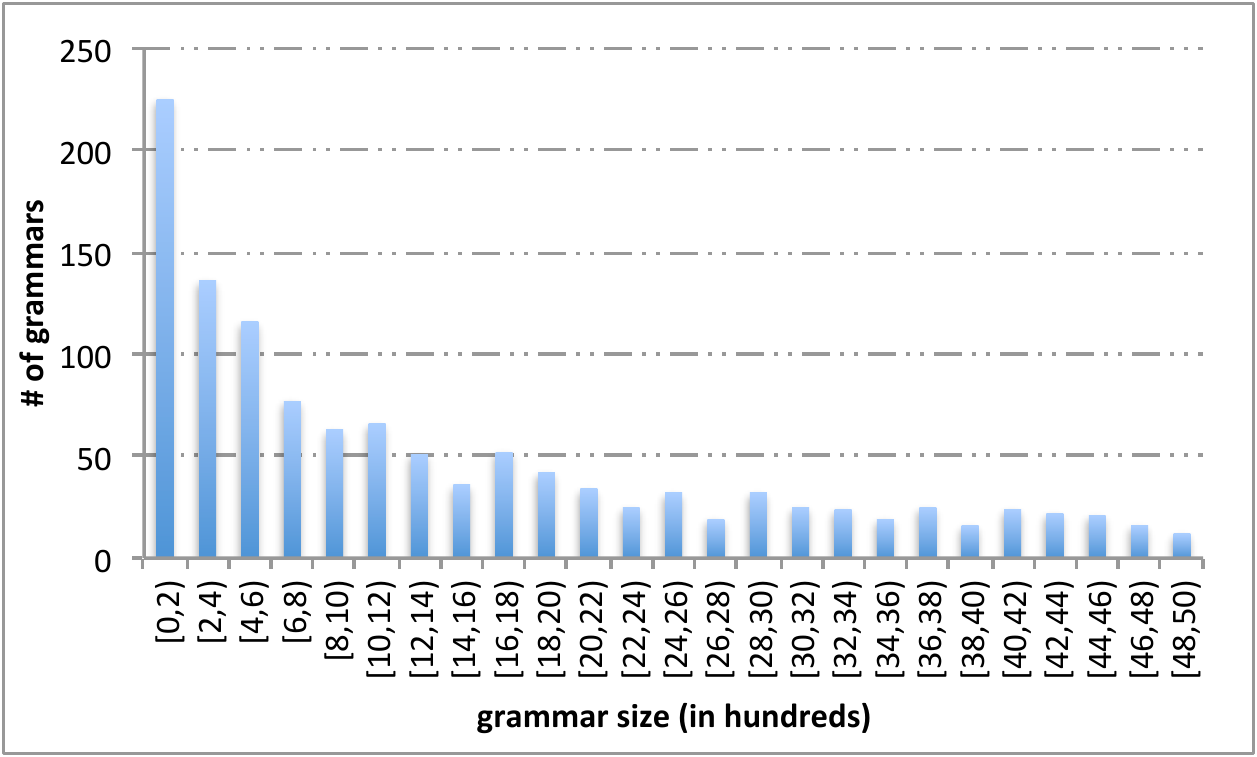}
   \caption{Distribution of grammar sizes in the repository.}
   \label{fig:size}
\vspace{-0.4cm}
\end{figure}

\textbf{Dataset.} We implemented a workflow generator that creates a
repository containing a mix of recursive and non-recursive grammars,
of which some are stand-alone, while others reuse existing workflows
as modules.  Repository size, workflow characteristics, and the amount
of reuse are specified as generator parameters.  All experiments in
this section were executed with the following parameter settings.  A
simple workflow has at most 5 modules; a given module has a
probability of 0.6 to be composite, and a probability of 0.4 to occur
multiple times within the workflow.  Each composite module has at most
3 productions, each simple workflow has a probability of $0.5$ to be
recursive, and each grammar reuses at most $5$ other grammars.

\eat{
Using these parameter settings, we generated a repository consisting
of about 1,700 grammars, and containing a total of about 400,000
distinct composite modules, 1,400,000 distinct atomic modules and
900,000 distinct simple workflows.  The distribution of grammar sizes
in the repository is shown in Figure~\ref{fig:size}. The size of a
grammar ranges from $10$ to $25,000$, and most grammars have size
smaller than $10,000$.  (Note that grammar size is defined as the
total number of symbols in its productions.)}

Using these parameter settings, we generated a repository consisting
of about 1,200 grammars.  The distribution of grammar sizes
in the repository is shown in Figure~\ref{fig:size}. The size of a
grammar ranges from $10$ to $5,000$, and most grammars have size
smaller than $1,000$.  (Note that grammar size is defined as the
total number of symbols in its productions.)

Our choice of parameters is based on our analysis of \myExp, the
largest public repository of scientific workflows, and
on~\cite{DBLP:conf/ssdbm/StarlingerBL12}, where it was observed that
most current workflows are small. \eat{, and that reuse of modules is
not very common.  However, reuse is likely limited by capabilities of
current workflow management systems and repositories, and so our
dataset has a mix or small and large workflows, with varying degree of
reuse.}

\eat{$1,694$ grammars, containing a total of $390,585$ distinct
  composite modules, $1,407,964$ distinct atomic modules and $879,091$
  distinct simple workflows.  The size of a grammar is defined as the
  sum of the sizes of its productions, where the size of a production
  is one plus the length of its right-hand side.}


Next, we generated keyword annotations for the workflows in the
repository using results of keyword co-occurrence analysis
of~\cite{Stoyanovich:2010:ERS:1833398.1833405}.  This analysis was
based on \myExp, where users tag workflows in support of keyword
search. In~\cite{Stoyanovich:2010:ERS:1833398.1833405} we used topic
mining to extract 20 topics from the repository, with each topic
defining a probability distribution over the tags.  Here, we take 20
most frequent keywords per topic, and use their probabilities to
achieve a realistic keyword assignment to workflow modules.  Given a
workflow, the repository generator first randomly chooses a topic, and
then assigns at most $3$ keywords to each module in accordance with
the topic's probability distribution.\eat{ Note that keyword
  consistency is retained, i.e., if a module is shared by more than
  one grammar, all occurrences of the module are assigned the same
  keywords.}

\begin{figure*}[t]
\begin{subfigure}[t]{0.32\textwidth}
 \includegraphics[width=0.99\textwidth]{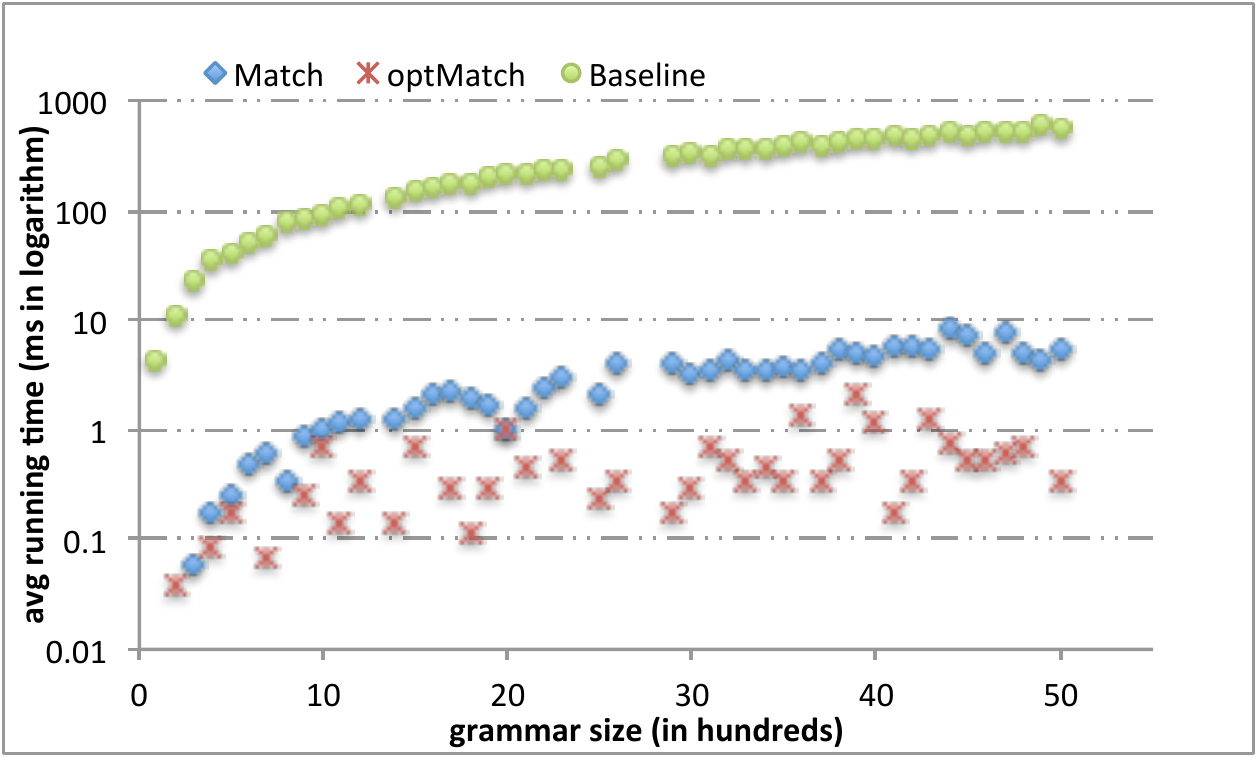}  \caption{{\em Average} running time of $Match$ for $Q_2$. 
    Point $(x,y)$ represents running time over
    grammars of size in $[x-100,x)$.} \label{fig:algo1avg1000}
\end{subfigure}
\begin{subfigure}[t]{0.32\textwidth}
    \includegraphics[width=0.99\textwidth]{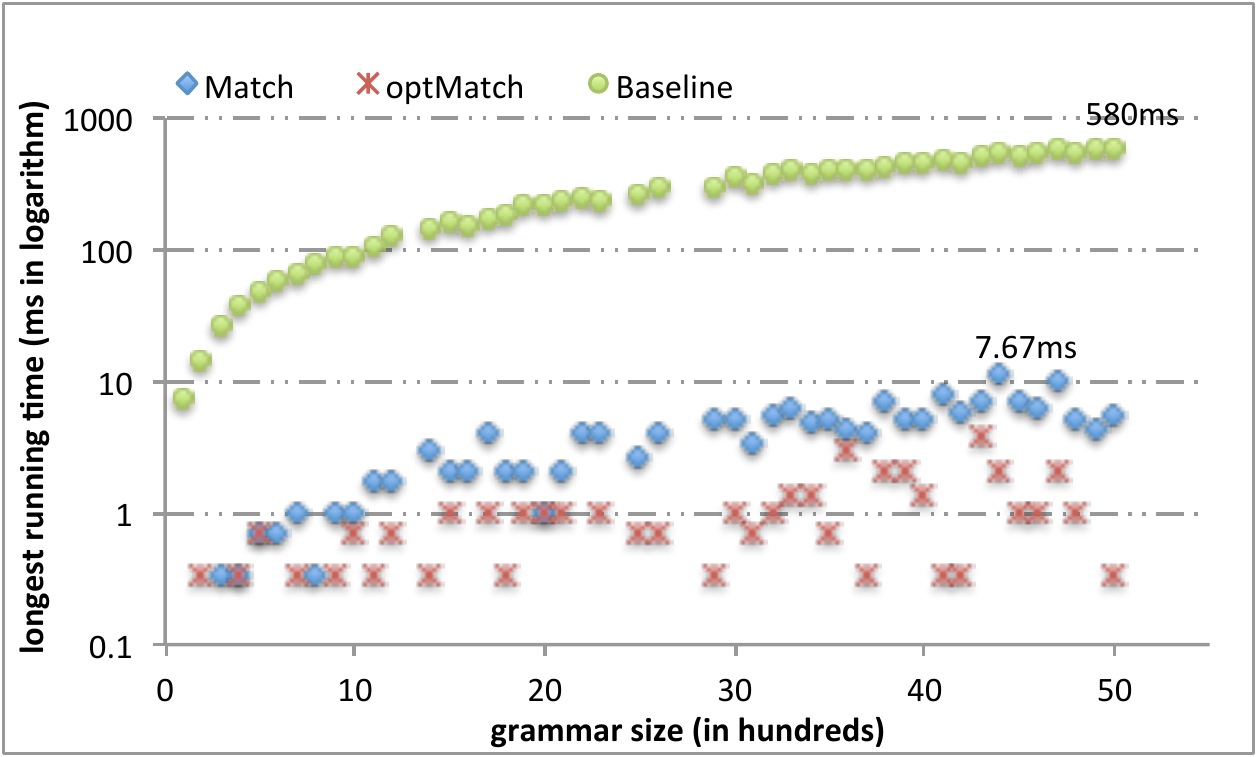}\caption{{\em Longest} running time $Match$ for $Q_2$.  Point $(x,y)$ represents running
time over
    grammars of size in $[x-100,x)$.}\label{fig:algo1longest}
   \end{subfigure}
\begin{subfigure}[t]{0.32\textwidth}
    \includegraphics[width=0.99\textwidth]{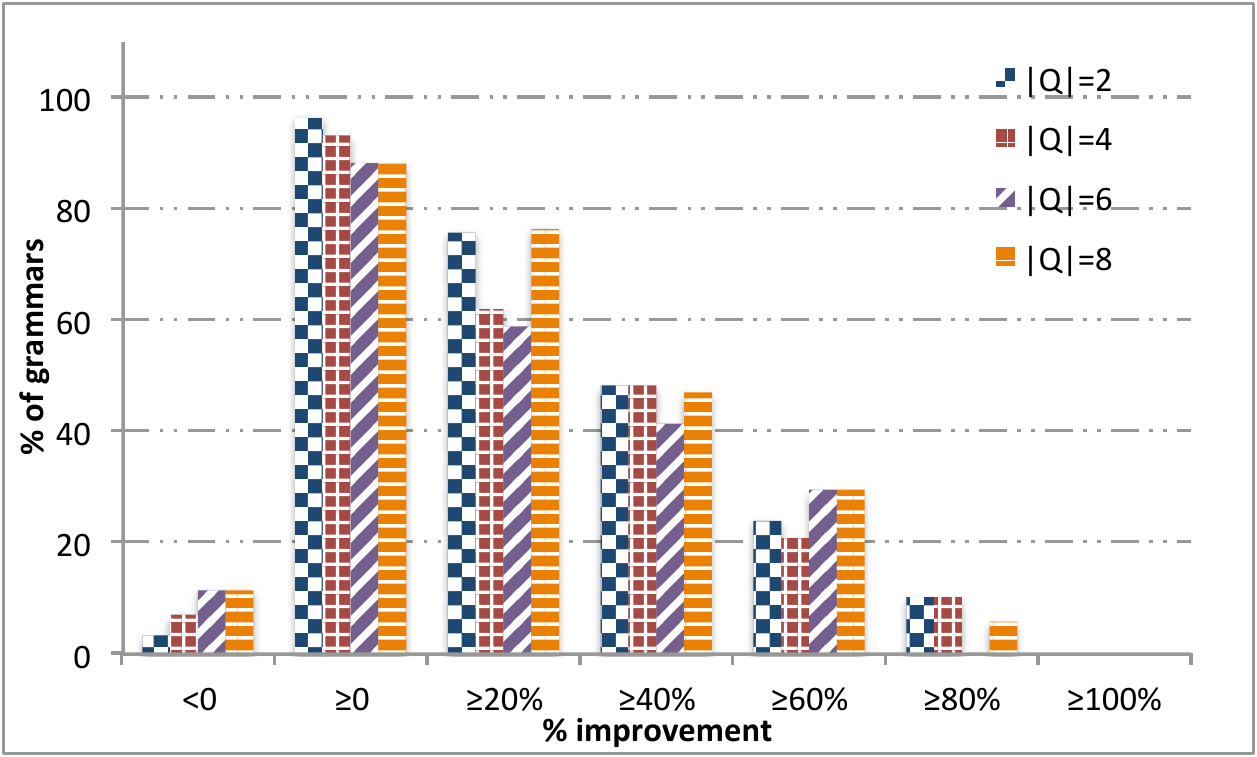}\caption{Improvement of $OptMatch$ over $Match$.}\label{fig:OAFvsAF}
\end{subfigure}  
 \vspace{-0.3cm}
  \caption{Performance of $Match$ and $OptMatch$.}
  \label{fig:algo1}
\end{figure*}

\eat{
\begin{figure*}[t]
  \centering
  \subfloat[{\em Average } running time of $Score$ for $Q_2$. Point $(x,y)$ represents running time over grammars of size in $[x-100,
x)$.]{\label{fig:avgalgo4}\includegraphics[width=0.33\textwidth]{Score_avg.pdf}}
  ~~~
  \subfloat[{\em Longest } running time of $Score$ for $Q_2$. Point $(x,y)$ represents running time over grammars of size in $[x-100,
x)$.]{\label{fig:worstalgo4}\includegraphics[width=0.33\textwidth]{Score_worst.pdf}}
  ~~~
  \subfloat[Improvement of $OptScore$ over $Score$.]{\label{fig:algo5vsalgo4}\includegraphics[width=0.33\textwidth]{Score-OptScore.pdf}}
  \caption{Performance of $Score$  and $OptScore$. \xiaocheng{We can remove Figure~\ref{fig:avgalgo4} and Figure~\ref{fig:algo5vsalgo4}}}
\vspace{-0.5cm}
  \label{fig:score}
\end{figure*}}

\eat{
\begin{figure}[ht]
\centering
\includegraphics[width=0.4\textwidth]{searchREPO.pdf}
\caption{Performance of matching a repository}
\label{fig:repo}
\end{figure}
}

\eat{
Finally, we transformed keyword-annotated grammars into
bag grammars with keywords as terminals. This increased grammar size
by a factor of 2.33 on average.  In the following sections, we report
results in terms of original grammar size.}

\textbf{Queries.} We experimented with many different queries,
generated by first randomly choosing a topic, and then drawing between
2 and 8 keywords according to the topic's probability distribution.
Due to space constraints, we show only representative results.  Unless
otherwise noted, all experiments use three queries described
below.\eat{the queries shown in Table~\ref{tab:queries}.}

$Q_1 = \{text\; mining, e\mbox{-}lico, workflow \;component\}$
consists of 3 most frequent keywords from its topic, and retrieves
workflows with text mining components, contributed by members of
$e\mbox{-}lico$ (an e-laboratory for collaborative research). $Q_2 =
\{TerMine, text\; encoding, input\}$ looks for possible inputs to
$TerMine$ (a web demonstration service). $Q_3 = \{input, xml \;
invalid, read\; file\}$ is a more technical query.  In experiments
that focus on scalability, we work with 7 additional queries that
contain up to 8 keywords, and where larger queries are supersets of
smaller queries.

\subsection{Keyword query match}

We now evaluate the performance of algorithms described in
Section~\ref{sec:algo}. We show that $Match$
(Algorithm~\ref{algo:AF}) runs in time polynomial in the size of the
grammar, and that the $OptMatch$ optimization
(Algorithm~\ref{algo:OAF}) is effective for the vast majority of
queries. We then discuss the effectiveness of the sharing
optimization (Section~\ref{subsec:searchrepo}).

We first evaluate the performance of $Match$ compared with the general
method that intersects a given grammar $G$ with a query $Q$
represented by a finite state automaton~\cite{1964} or a graph chain
pattern~\cite{DBLP:journals/jcss/DeutchM12}.  An adaption of the
method to our scenario (called $Baseline$ in Figure~\ref{fig:algo1})
works as follows.  First, we transform grammar $G$ to grammar $G'$,
where each production has at most two symbols on the right-hand side.
Having the grammar in this form guarantees quadratic data complexity
of the algorithm, and is done off-line.  Next, we intersect $G'$ with
$Q$ to construct a new grammar $G''$, where 1) for each production
$M\rightarrow \alpha_1\alpha_2$ in $G'$,  add a production
$(M,X)\rightarrow (\alpha_1,X_1)(\alpha_2,X_2)$ to $G''$, for each
$X\subseteq Q$, $X_1\cup X_2=X$, making $(M,X), ~(\alpha_1,
X_1),~(\alpha_2,X_2)$ symbols in $G''$; and 2) for each terminal
$M$ in $G'$, mark the symbol $(M,\{M\}\cap Q)$ in $G''$ as a terminal.
Having constructed $G''$, the algorithm checks whether its language is
empty in time linear in grammar size \cite{DBLP:books/daglib/0011126}.  $G$ matches $Q$ iff the language of $G''$ is not empty.

\begin{figure*}[t]
  \centering
\begin{subfigure}[t]{0.32\textwidth}
   \includegraphics[width=0.99\textwidth]{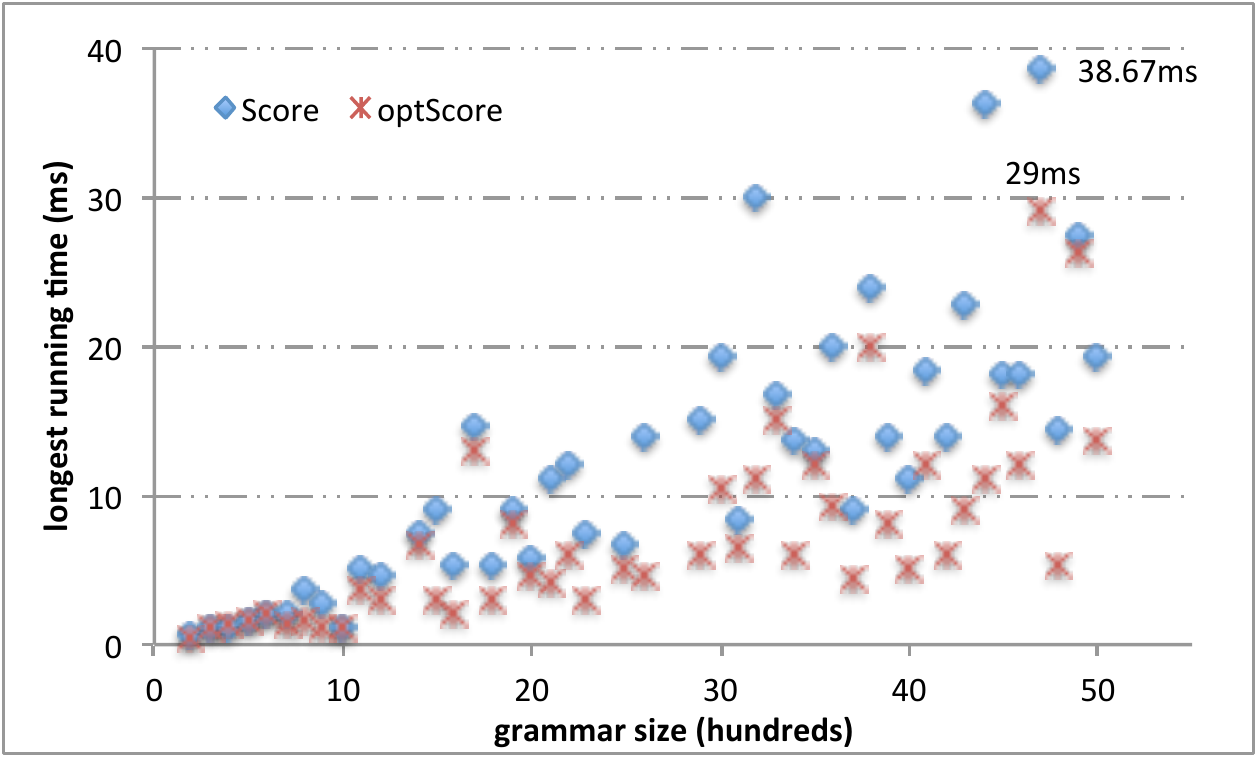}\caption{Longest running time of $Score$ and $optScore$ for
$Q_2$}\label{fig:ScoreVSoptScore_Q2}
\end{subfigure}
\begin{subfigure}[t]{0.32\textwidth}
  \includegraphics[width=0.99\textwidth]{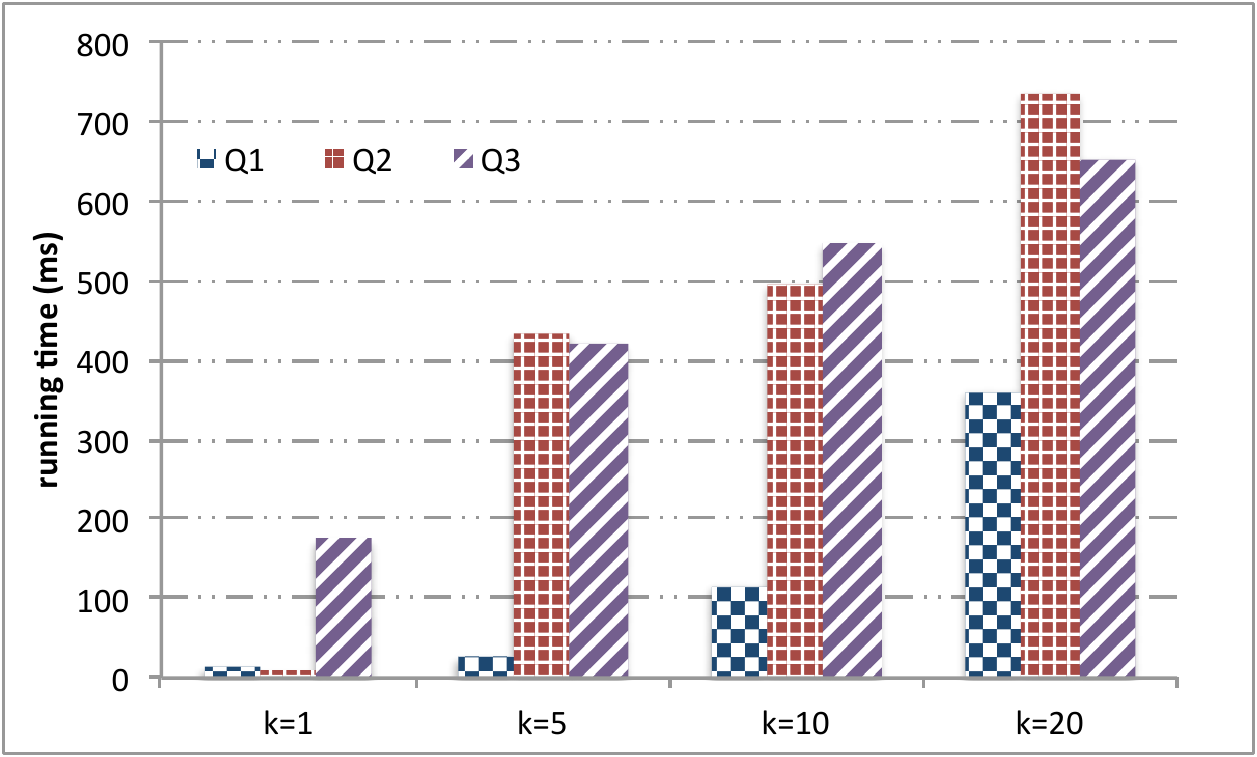}\caption{Running time of TA (ms).}\label{fig:topk-k}
\end{subfigure}
\begin{subfigure}[t]{0.32\textwidth}
  \includegraphics[width=0.99\textwidth]{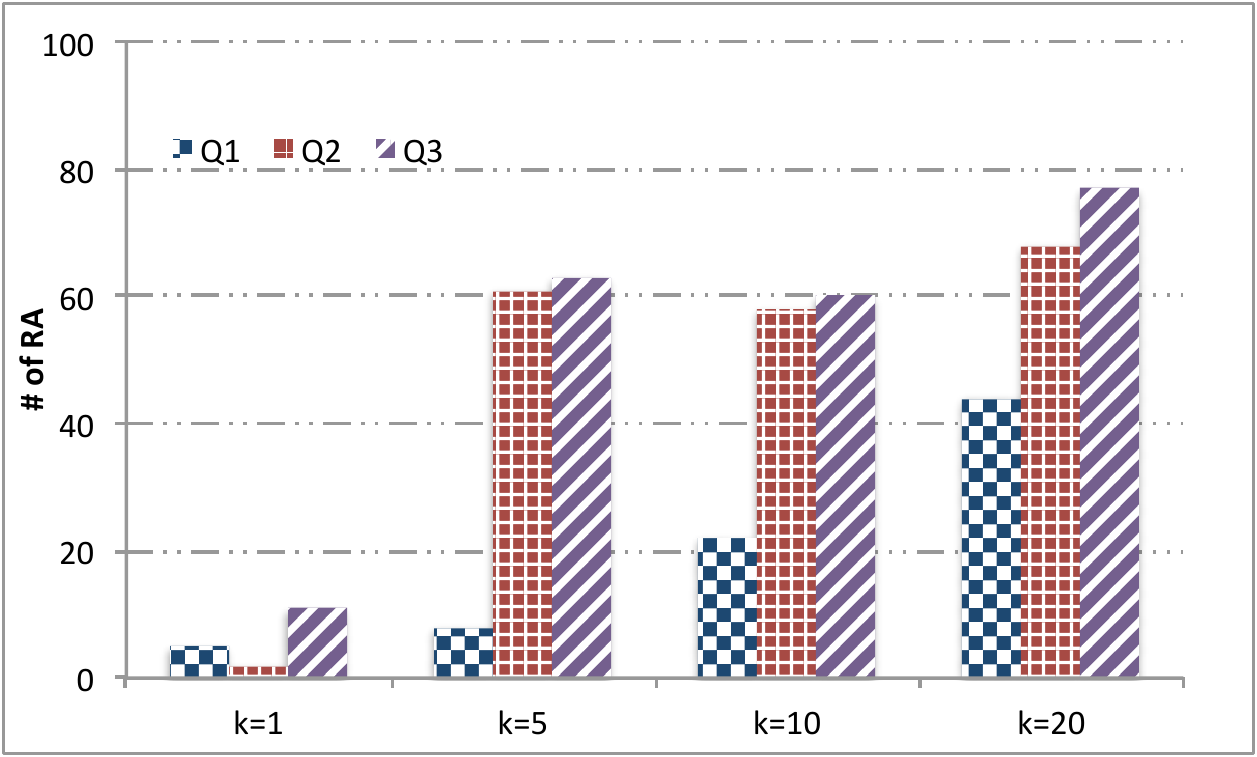}\caption{Running time of TA (RA).}\label{fig:topk-k-cost}
\end{subfigure} 
 \vspace{-0.3cm}
  \caption{
  Performance of the ranking solution: $Score$, $OptScore$ and threshold algorithm (TA).}
\label{fig:algo6}
\end{figure*}

\begin{figure*}[t]
  \centering
\begin{subfigure}{0.32\textwidth}
  \includegraphics[width=0.99\textwidth]{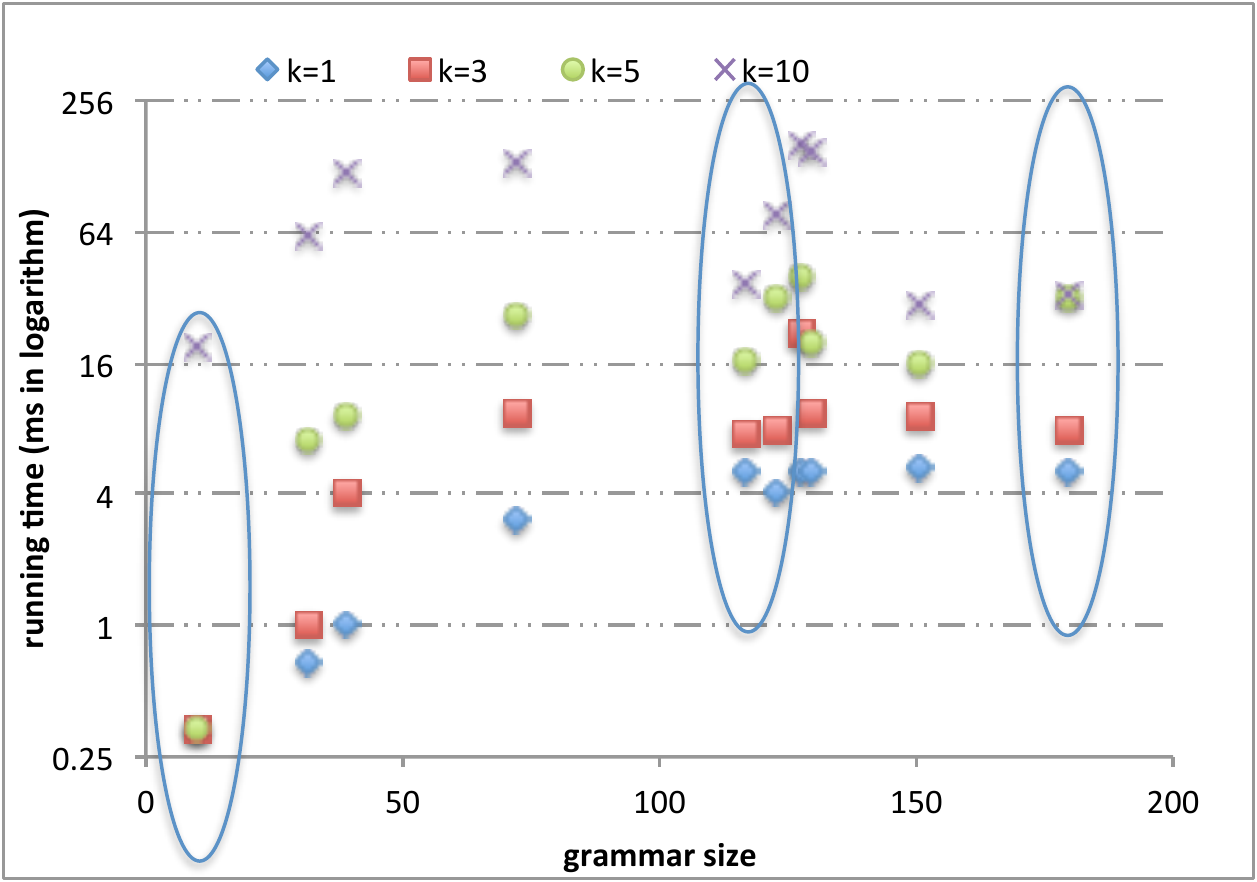}\caption{$Q_1$}\label{fig:Q1topkRep}
 \end{subfigure}%
\begin{subfigure}{0.32\textwidth}
  \includegraphics[width=0.99\textwidth]{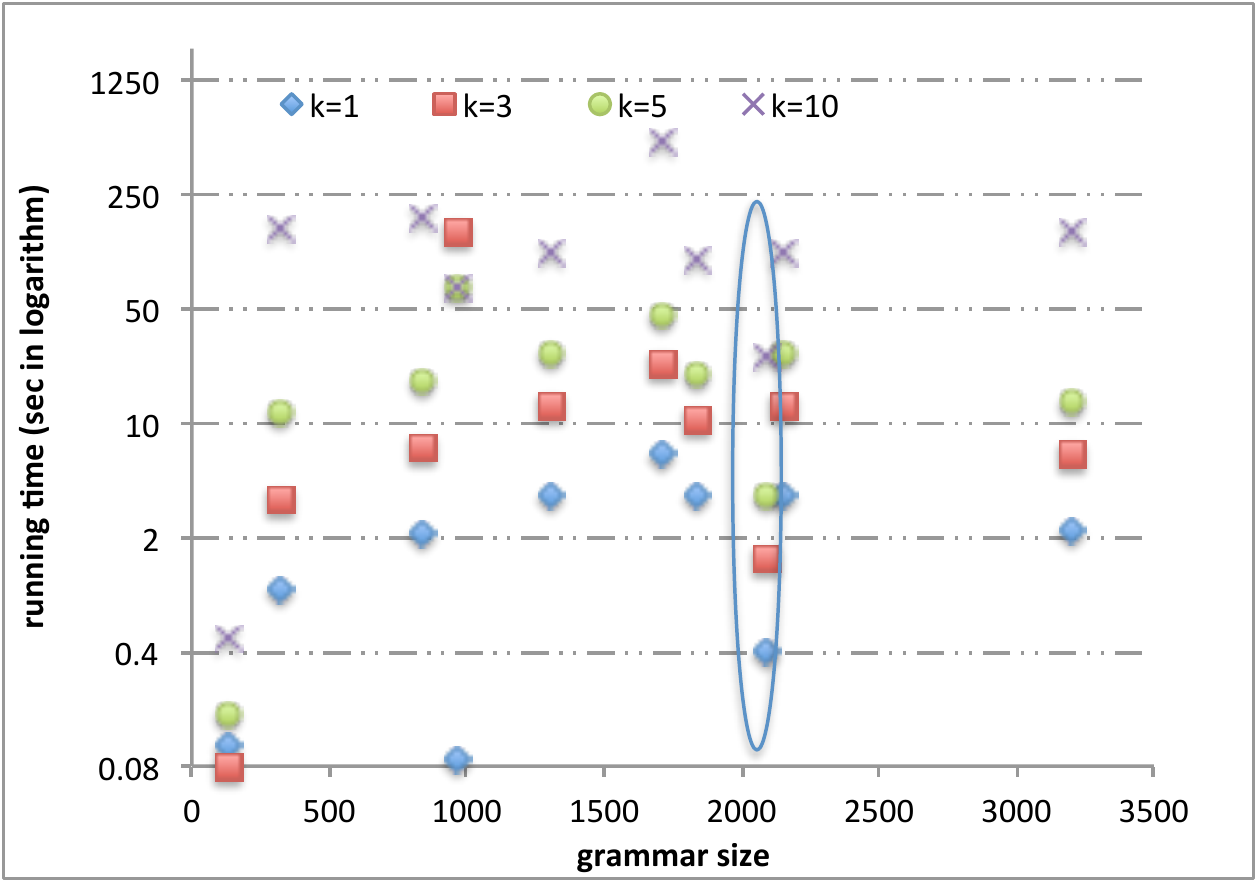}\caption{$Q_3$}\label{fig:Q3topkRep}
\end{subfigure}
\begin{subfigure}{0.32\textwidth}
 \includegraphics[width=0.99\textwidth]{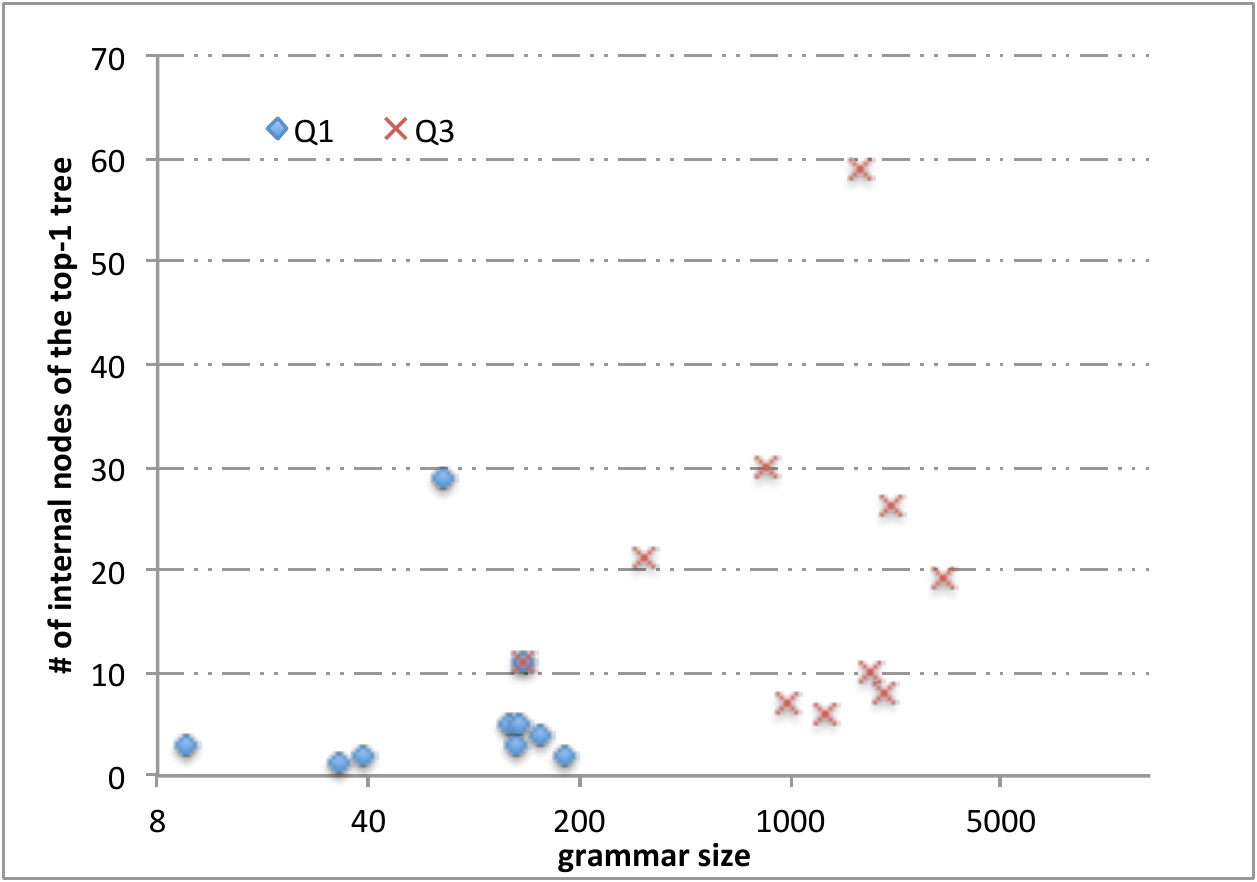} \caption{Size of the top-1 tree}\label{fig:treeSize}
\end{subfigure}
  \vspace{-0.3cm}
  \caption{Performance of finding top-$k$ \rpts for the top-$10$ grammars (ellipses indicate non-recursive grammars).}
  \vspace{-0.5cm}
\label{fig:topkRep}
\end{figure*}

Figures~\ref{fig:algo1avg1000} and~\ref{fig:algo1longest} demonstrate
the average and longest running time of $Match$ and $Baseline$ for
query $Q_2$ for grammars of different sizes.  Some 154 grammars in our
repository contain all keywords of $Q_2$, and we run the algorithms on
these grammars.  According to Figure~\ref{fig:algo1avg1000} and
\ref{fig:algo1longest}, $Match$ runs in time polynomial in grammar
size, as expected.  The running time of the algorithm is reasonable,
and is below 10ms for all grammars. \eat{We also observe that there is
  an irregular point in Figure~\ref{fig:algo1longest}, with a running
  time of $69.33 ms$ for a grammar of size $11,160$.  Recall that
  $Match$ terminates early if a variable other than the start symbol
  matches the query.  Early termination is a common condition,
  particularly for large grammars.  The irregular point corresponds to
  a case where no early termination was possible.}We observed similar
trends for other queries. Although $Baseline$ and $Match$ both run in
time polynomial in grammar size, $Match$ significantly outperforms
$Baseline$ in all cases, because it terminates early if a variable
other than the start symbol matches the query.

Figure~\ref{fig:OAFvsAF} shows that $OptMatch$, which is an optimization of $Match$, is
effective at reducing the running time for most queries.  For example,
$OptMatch$ outperforms $Match$ by at least 20\% for 80\% of 2-keyword
queries.  $OptMatch$ slightly increases running times for some
queries. We also measured the total running time of $Match$ and
$OptMatch$ for a variety of queries, and for all workflows in the
repository.  We found that $OptMatch$ brings an over-all gain of at
least a factor of 2 for queries of size between 2 and 8.  For example,
the total running time of $Match$ for queries of size 2 is 77.67ms,
compared to 43ms for $OptMatch$. 

\eat{ Figure~\ref{fig:repo} shows the total running time of $Match$
  and $OptMatch$, for all workflows in the repository.}  

Finally, we measured the effectiveness of leveraging grammar reuse,
when executing $Match$ and $OptMatch$ on all workflows in the
repository (see Section~\ref{subsec:searchrepo}).  We found that this
optimization, to which we refer as {\em sharing}, is \eat{ effective,
  but we omit results due to lack of space.}extremely effective,
bringing the total running time of $Match$ to between 130ms and 150ms
for queries of size 2 to 8.  {\em Match} and {\em OptMatch} have
comparable performance with this optimization.  Details are omitted
due to lack of space.

{\bf In summary,} $Match$ and $OptMatch$ are efficient algorithms.
$OptMatch$ outperforms $Match$ for most grammars, and should be used
when individual grammars are tested.  Either {\em Match} or {\em
  OptMatch} with the sharing optimization may be used when all
workflows in the repository are tested.


\subsection{Ranking of grammars}
\label{sec:exp:rank}

We now demonstrate that the techniques of
Section~\ref{sec:ranking} can be implemented efficiently.  We first
show that the running time of $Score$ (Algorithm~\ref{algo:AF_score})
is polynomial in grammar size\eat{ (Figure~\ref{fig:avgalgo4})}, and
that $OptScore$ outperforms $Score$ in most
cases\eat{(Figure~\ref{fig:algo5vsalgo4})}. We then show that top-$k$
workflows can be identified efficiently when TA is used to find the
promising grammars\eat{ (Section~\ref{sec:ranking:topK})}.

\eat{ \xiaocheng{eat on Feb. 18}
Figure~\ref{fig:avgalgo4} reports the average running time of $Score$
for query $Q_2$.  Comparing Figure~\ref{fig:avgalgo4} with
Figure~\ref{fig:algo1avg1000}, we note that $Score$ is almost three
times slower than $Match$.  Figure~\ref{fig:worstalgo4} shows the
longest running times of $Score$ for the grammars in our repository,
and demonstrates that the running time of this algorithm is
reasonable, and is below 40ms for all grammars. We observed similar
trends for other queries.}

We observed similar trends for average and longest running time of
$Score (optScore)$ to that of $Match
(optMatch)$. Figure~\ref{fig:ScoreVSoptScore_Q2} reports the longest
running times of $Score$ and $optScore$ for grammars in our repository
and demonstrates that the running time of this algorithm is
reasonable, and is below 40ms for all grammars.  Comparing
Figure~\ref{fig:ScoreVSoptScore_Q2} with
Figure~\ref{fig:algo1longest}, we note that $Score$ is almost three
times slower than $Match$. We also observe that our {\em Score}
algorithm outperforms {\em
  Baseline}~\cite{DBLP:journals/jcss/DeutchM12} (which is used for
matching).  Since a scoring algorithm is necessarily slower than a
matching algorithm, we conclude that a scoring algorithm that uses a
similar framework as {\em Baseline} will be less efficient than {\em
  Score}, and do not run a direct experimental comparison.

We also measured the improvement of $optScore$ over $Score$ and got a trend very similar to those observed for $OptMatch$ (Figure~\ref{fig:OAFvsAF}). $OptScore$ results
in an improvement for the vast majority of grammars, for queries of
varying lengths.  

Figure~\ref{fig:topk-k} reports the running time of Threshold Algorithm
(TA), followed by an execution of $OptScore$ for the promising
grammars, for queries $Q_1, Q_2, Q_3$. We can see from
Figure~\ref{fig:topk-k} that it takes under $100ms$ to find the
top\mbox{-}5 grammars for $Q_1$, and around $400ms$ for $Q_2$ and
$Q_3$.  These queries all match between $150$ and $180$ grammars in our
repository, and, as is usually the case for TA, the difference in
performance is due to the distribution of
scores. Figure~\ref{fig:topk-k-cost} gives the running time of TA in
terms of the number of random accesses (RA), demonstrating that the
stopping condition for TA is reached after only a fraction of all
matching grammars have been considered. 

\eat{ \xiaocheng{this paragraph becomes invalid when we change the repository  Finally,
Figure~\ref{fig:TAvsBase} compares the running time of our top-$k$
solution to baseline search, with $k=10$, for queries of length
between 2 and 8.  Figure~\ref{fig:TAvsBase} illustrates that our
top-$k$ solution, in which TA is used to identify the promising
grammars, and $OptScore$ is executed only for such grammars, strongly
outperforms a baseline solution, in which scores of all matching
grammars in the repository are computed (note the logarithmic scale).}}

{\bf In summary,} $Score$ and $OptScore$ are efficient algorithms, and
$OptScore$ outperforms $Score$ for most grammars.  Using TA to
identify promising grammars, and then invoking $OptScore$ for these
grammars, allows us to achieve interactive response times when
retrieving the top-$k$ grammars from the repository.

\eat{We also note that it takes more than $100ms$ to find the
top-1 related grammar for $Q_3$. There are two reasons. Firstly, TA
does more random accesses for $Q_3$ (leftmost bar of
Figure~\ref{fig:topk-k-cost}) and secondly TA calls $OptScore$ once
for $Q_1,Q_2,Q_3$ in practice while the top-1 related grammar of $Q_3$
is larger than that of $Q_1$ and $Q_2$. } 

\eat{
For the frequent query $Q_1$, the number of random access is much less
than $Q_2, Q_3$. Figure~\ref{fig:TAvsBase} reports a comparison
between baseline search i.e. to compute the score of grammars that
contain all keywords of the query one by one using $OptScore$ and TA
($Top\mbox{-}10$). We can see that TA speeds up the searching time
significantly. Figure~\ref{fig:algo6} validates Section~\ref{};}


\eat{\begin{figure}
\centering
\includegraphics[width=0.4\textwidth]{figures/top-10-querysize.pdf}
\caption{Improvement of top-$k$ over baseline search.}
\label{fig:TAvsBase}
\end{figure}}

\subsection{Result presentation}

Finally, we evaluate the running time and quality of $topKRep$
(Algorithm~\ref{algo:repTrees}), and show that it can be used to find
the highest-scoring representative parse trees in interactive time.

Recall from Section~\ref{sec:resultPresentation} that $topKRep$ is
invoked on a particular grammar, typically one that is among the
highest-scoring grammars for a given query, and computes a fixed
number of \rpts for that grammar that match the query.

\eat{
We first find top-10 grammars for each query using TA algorithm
(Section.\ref{???}) and then for each grammar find top-k
representative matching parse trees w.r.t. the query
(Section.\ref{???}) where $k$ varies from $1$ to $10$. Due to space
constraint, we show only results of $Q_1$, $Q_3$ with
$k\in\{1,3,5,10\}$ (Figure~\ref{fig:topkRep}).}

Figure~\ref{fig:topkRep} reports the total running time of $topKRep$
over top-$10$ grammars for queries $Q_1$ and $Q_3$, as a function of
grammar size.  The number of \rpts, denoted $k$, varies from 1 to 10.
Observe from Figure~\ref{fig:Q1topkRep} that the top-$10$ grammars for
$Q_1$ are small $(<200)$, and that the total running time of $topKRep$
is reasonable, under $161.33ms$ for $k=10$.

We use ellipses to indicate running times for non-recursive grammars.
We noted in Section~\ref{sec:resultPresentation} that, because all
parse trees of non-recursive grammars are representative, we can
design an efficient version of $topKRep$ for this case.  The
difference in running time is not significant for $Q_1$
(Figure~\ref{fig:Q1topkRep}), but becomes more pronounced for $Q_3$
(Figure~\ref{fig:Q3topkRep}).

Figure~\ref{fig:Q3topkRep} shows that $topKRep$ is significantly
slower for $Q_3$ than for $Q_1$, for three reasons.  First, the top-10
grammars for $Q_3$ are much larger, and have larger parse trees.
Figure~\ref{fig:treeSize} shows that sizes of top-1 trees for large
grammars are usually larger than those for small grammars. Second,
there are more parse trees to be constructed for large
grammars. Third, it is more common for large grammars to require a
larger buffer size when computing the top-$k$ \rpts.  Recall that
buffer size is an argument in Algorithm~\ref{algo:repTrees}, and that
the algorithm is re-executed with a larger buffer if the original
setting does not yield enough \rpts.  For the $4^{th}$ grammar in
Figure~\ref{fig:Q3topkRep}, the required buffer size for $k=3$ was
$48$ (meaning that $topKRep$ was executed 5 times).  This was higher
than for $k=5$ and $k=10$, where buffer size of 40 (for 4 and 3
executions of $topKRep$, respectively) was sufficient.

We now present an example that illustrates the effectiveness of
\rpts. For query $Q_1$, the top-1 matching grammar in our repository
is one where the start module is annotated with all the keywords in
$Q_1$. Thus all parse trees of this grammar match $Q_1$. To simplify
presentation, we eliminate the keywords of modules and show only the
grammar below:

\begin{tabular}{ll}
$r_1: S\Rightarrow \{S,B\}~(\frac{1}{3})$&$r_2:S\Rightarrow \{A,A,S,s_1\}~(\frac{1}{3})$\\
$r_3:S\Rightarrow \{s_2\}~(\frac{1}{3})$&$r_4: A\Rightarrow \{B,a_1\}~(\frac{1}{2})$\\
$r_5: A\Rightarrow \{a_2\}~(\frac{1}{2})$&$r_6: B\Rightarrow \{S,b_1\}~(\frac{1}{3})$\\
$r_7:B\Rightarrow \{S,b_2\}~(\frac{1}{3})$&$r_8:B\Rightarrow \{b_3\}~(\frac{1}{3})$
\end{tabular}

Note that the top-6 trees of this grammar are all \rpts. The $7^{th}$
most probable tree $T_7$ (with a probability $0.004$) is shown in
Figure~\ref{fig:top-7tree}. Observe that $T_7$ is subsumed by the top
$2^{nd}$ tree $T_2$ (Figure~\ref{fig:top-2tree}), and so is not an
\rpt.  On the other hand, $T_8$ (with probability $0.003$) is an \rpt,
and is more interesting to show to the user than $T_7$, although its
probability score is lower.  

\begin{figure}[h]
\begin{subfigure}{0.1\textwidth}
\begin{tikzpicture}[node distance=15]
\node(r11)  {$r_1$};
\node(o11)[below of=r11, node distance=15]{};
 \node(r8) [left of=o11]{$r_8$};
\node(b3)[below of=r8]{$b_3$};
\node(r3) [right of=o11]{$r_3$};
\node(s2)[below of=r3]{$s_2$};
\draw[-](r11)--(r3);\draw[-](r11)--(r8);\draw[-](r8)--(b3);\draw[-](r3)--(s2);
\node(T)[below of=r11,node distance=60] {$T_2$ $(0.037)$};
\end{tikzpicture}
\caption{top-$2$ tree}
\label{fig:top-2tree}
\end{subfigure}
\begin{subfigure}{0.2\textwidth}
\centering
\begin{tikzpicture}[node distance=15]
\node(r1) {$r_1$};
\node(o)[below of=r1,node distance =15] {};
\node(r11) [left of=o, node distance=15] {$r_1$};
\node(o11)[below of=r11, node distance=15]{};
\node(r8) [left of=o11]{$r_8$};
\node(b3)[below of=r8]{$b_3$};
\node(r3) [right of=o11]{$r_3$};
\node(s2)[below of=r3]{$s_2$};

\node(r81) [right of=o, node distance=15] {$r_8$};
\node(b31)[below of=r81]{$b_3$};
\draw[-](r1)--(r11);\draw[-](r11)--(r3);\draw[-](r11)--(r8);\draw[-](r8)--(b3);\draw[-](r3)--(s2);\draw[-](r1)--(r81);\draw[-](r81)--(b31);
\node(T)[below of=r1,node distance=60] {$T_7$ $(0.004)$};
\end{tikzpicture}
\caption{top-$7$ tree}
\label{fig:top-7tree}
\end{subfigure}%
\begin{subfigure}{0.2\textwidth}
\centering
\begin{tikzpicture}[node distance=15]
\node(r1) {$r_1$};
\node(o)[below of=r1] {};
\node(r2)[left of=o, node distance=20] {$r_2$};
\node(r5)[below of=r2]{$r_5$};
\node(r51)[left of=r5]{$r_5$};
\node(r3)[right of=r5]{$r_3$};
\node(a2)[below of=r5]{$a_2$};
\node(a21)[below of=r51]{$a_2$};
\node(r8)[right of=o]{$r_8$};
\node(b3)[below of=r8]{$b_3$};
\node(s2)[below of=r3]{$s_2$};
\node(T)[below of=r1, node distance=60]{$T_8$ (0.003)};
\draw[-](r1)--(r2);\draw[-](r1)--(r8);\draw[-](r8)--(b3);\draw[-](r2)--(r5);\draw[-](r2)--(r51);\draw[-](r2)--(r3);\draw[-](r5)--(a2);\draw[-](r51)--(a21);\draw[-](r3)--(s2);
\end{tikzpicture}
\caption{top-$8$ tree}
\label{fig:top-7rptree}
\end{subfigure}
\caption{Trees for the top-1 grammar that matches $Q_1$}
  \vspace{-0.5cm}
\end{figure}

{\bf In summary,} $topKRep$ can be used to efficiently compute the
highest-scoring representative parse trees for many grammars.  For
certain large grammars, $topKRep$ does not terminate in interactive
time, due to parse tree size, and to conservative buffer size
requirements.  Performance can be improved using alternative
strategies for setting buffer size.

\section{Conclusions}
\label{sec: concl}
In this paper we addressed the problem of searching a repository of
workflow specifications in which modules, both atomic and composite,
are annotated with keywords.  Since search does not interact with the
graph structure of workflows, we reduced the problem to one of
searching a repository of {\em bag grammars}.  \eat{We showed that the
problem of deciding whether a grammar matches a query is NP-complete
with respect to {\em combined} (data and query) complexity, but has }
We gave an efficient polynomial-time matching algorithm with respect
to {\em data} complexity, and extended this to search over a
repository of bag grammars.  We developed efficient algorithms for
calculating the relevance score of a grammar to a given query, and for
finding the top-$k$ grammars for a given query. Finally, we proposed a
novel result presentation method.\eat{ based on a notion of
representative parse trees.}

This work introduces a novel use of bag grammars, and shows the
importance of probabilistic bag grammars.  Our approach has been based
on efficiency considerations; in the future we would like to gain a
deeper understanding of how to use probabilistic bag grammars and
continue to explore ways of presenting concise search results.  Moving
beyond keyword search, we would like to add structural features into
queries.  We also plan on testing the usability of these ideas on real
datasets.

\section{Acknowledgements}
\label{sec:ack}
We thank Tova Milo for extensive discussions and feedback on this draft.


%
\bibliographystyle{abbrv}
\small
\bibliography{sigproc,paper}  
%
%

\end{document}